\newcolumntype{g}{>{\columncolor{red}}c}
\let\hat\widehat
\let\tilde\widetilde
\def \iid {\stackrel{\textnormal{i.i.d.}}{\sim}}
\def \calib {\textnormal{calib}}
\def \train {\textnormal{train}}
\def \test {\textnormal{test}}
\def \obs {\textnormal{obs}}
\def \TPP {\tilde{\PP}}
\def \quant {\textnormal{Quantile}}
\def \defn {\,:=\,}
\def \super {\textnormal{sup}}
\DeclareMathOperator*{\esssup}{ess\,sup}
\def \nc {{n}}
\def \nt {{n_\textnormal{train}}}
\def\##1\#{\begin{align}#1\end{align}}
\def\$#1\${\begin{align*}#1\end{align*}}
\newcommand{\revise}[1]{{\color{black} #1}}
\newcommand{\revisea}[1]{{\color{black} #1}}
\newcommand{\reviseb}[1]{{\color{black} #1}}
\newcommand{\printfnsymbol}[1]{%
  \textsuperscript{\@fnsymbol{#1}}%
}
\title{Sensitivity Analysis of Individual Treatment Effects: \\
A Robust Conformal Inference Approach}
\author[1]{Ying Jin\thanks{Author names listed alphabetically.}}
\author[2]{Zhimei Ren\printfnsymbol{1}}
\author[1,3]{Emmanuel J. Cand\`es}
\affil[1]{Department of Statistics, Stanford University}
\affil[2]{Department of Statistics, University of Chicago}
\affil[3]{Department of Mathematics, Stanford University}
\date{}
\begin{document}

\maketitle

\begin{abstract}
  We propose a \revise{model-free} framework 
for sensitivity analysis of 
individual treatment effects (ITEs),  
building upon ideas
from conformal inference. 
For any unit, our procedure reports the {\em $\Gamma$-value}, a number
which quantifies the minimum strength of confounding needed to explain
away the evidence for ITE. 

Our approach rests on the reliable predictive inference of
counterfactuals and ITEs in situations where the training data is
confounded.  
Under the marginal sensitivity model
of~\citet{tan2006distributional}, we characterize the shift between
the distribution of the observations and that of the
counterfactuals. 
\reviseb{We first 
develop a general method 
for predictive inference of  
test samples 
from a shifted distribution;}
we then leverage this to construct
covariate-dependent 
prediction sets for counterfactuals. No matter the
value of the shift, these prediction sets (resp.~approximately)
achieve marginal coverage if the propensity score is known exactly
(resp.~estimated).  We describe a distinct procedure also attaining
coverage, however, conditional on the training data. In the latter
case, we prove a sharpness result showing that for certain classes of
prediction problems, the prediction intervals cannot possibly be
tightened.  
We verify the validity and performance of the new
methods via simulation studies and apply them to analyze real
datasets.

\end{abstract}
 
\section{Introduction}


Understanding the effect of a treatment is arguably one of the main
research lines in causal inference.  Over the past few decades, there
has been a rich literature in identifying, estimating and conducting
inference on the mean value of causal effects; parameters of interest
include the average treatment effect (ATE) or the conditional average
treatment effect (CATE).  These quantities, however, might fail to
provide reliable uncertainty quantification for individual responses:
the knowledge that a drug might be effective for a whole population
`on average' does not imply that it is effective on a particular
patient.
Taking the intrinsic variability of the responses into account,
inference on the individual treatment effect (ITE) may be better
suited for reliable decision-making. 
To  
quantify the uncertainty in 
individual treatment effects,~\cite{lei2020conformal} offered a novel
viewpoint. Rather than constructing confidence intervals for
parameters---e.g. the ATE---they proposed designing prediction
intervals for potential outcomes, namely, for counterfactuals and
ITEs. Briefly, Lei and Cand\`es constructed well calibrated prediction
intervals by building upon the conformal inference
framework~\citep{vovk2005algorithmic,vovk2009line}. In that work, the
typical mismatch between the counterfactuals and the observations due
to the selection mechanism is resolved with the strong ignorability
assumption~\citep{imbens2015causal}; that is to say, the treatment
assignment mechanism is independent of the potential outcomes
conditional on a set of observed 
covariates.  The strong ignorability assumption is automatically
satisfied in randomized experiments and commonly used in the causal
inference
literature~\citep{rubin1978bayesian,rosenbaum1983central,imbens2015causal}.
In observational studies, however, the strong ignorability assumption
is not testable~\citep[Chapter 12]{imbens2015causal} and hard to
justify in general.  In practice, failing to account for possible
confounding in observational data can yield misleading conclusions
~\citep{rutter2007identifying,fewell2007impact,gaudino2018unmeasured}. 

\subsection{$\Gamma$-values}  \label{subsec:intro_gval}

In this paper, we seek to understand the robustness of causal
conclusions on ITEs against potential unmeasured confounding.  To this
end, the procedure we propose starts from a sequence of hypothesized
confounding strengths whose precise meaning will be made clear
shortly; for each hypothesized strength, a prediction interval is
constructed for the ITE. The procedure then screens the prediction
intervals, and for each unit, reports the smallest confounding
strength with which the prediction interval contains zero:  
we call this the {\em $\Gamma$-value}. Informally, the
\revise{$\Gamma$-value} describes the strength of unmeasured
confounding necessary to explain away the predicted effect.

Imagine we want to test for a positive ITE on \revise{a treated unit}. 
For a range of hypothesized confounding strengths, our procedure
constructs one-sided \revise{prediction} intervals for \revise{the ITE} at the $1-\alpha = 0.9$
confidence level. The $\Gamma$-value is then the smallest confounding
strength with which the lower bound of the prediction interval is
smaller than zero. 
\revise{Figure~\ref{fig:intro_suv} shows the survival function of
the $\Gamma$-values calculated on a real dataset measuring 
the academic performance of students subject or not to mindset interventions 
(all the details are 
in Section~\ref{subsec:real}).}
\revise{We see that $20\%$ of the \revisea{students} 
have $\Gamma$-values greater than $1.05$; roughly
    speaking, for these \revisea{students} the confounding strength needs to be as
  large as $1.05$ to explain away the evidence for positive ITEs. We
  also see that $7.20\%$ of the \revisea{students} have $\Gamma$-values greater
  than $2$ and some \revisea{students} have
$\Gamma$-values as large as $5$, showing strong evidence for
positive ITEs.}

\begin{figure}[ht]
    \centering
    \includegraphics[width=0.56\textwidth]{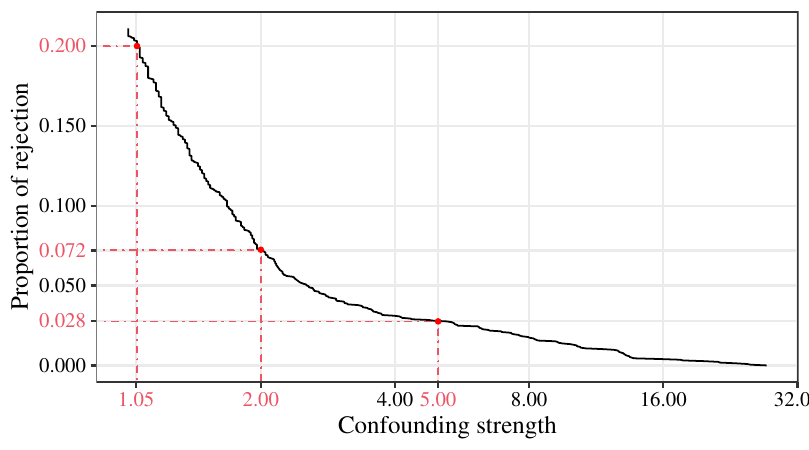}
    \caption{Proportion of test samples in the real dataset from
      Section~\ref{subsec:real} identified as positive ITE at each
      hypothesized confounding strength.  The confidence level is set
      to $1-\alpha=0.9$.}
    \label{fig:intro_suv}
\end{figure}

Formally, the $\Gamma$-value can be used to draw conclusions
on the ITE with a pre-specified confounding strength. For example,
if we believe the confounding strength is no larger than $2$, we can
claim an individual has positive ITE as long as its $\Gamma$-value
is greater than $2$. Our method guarantees that if the magnitude of
confounding is at most $2$, the probability of incorrectly
`classifying' a unit as having a positive ITE is at most
$\alpha=0.1$ (or any other fraction). 

Figure~\ref{fig:intro_xplot} plots the $\Gamma$-values as a
function of the achievement levels of the schools the students
belong to. Once more, our procedure provides valid inference on a
single ITE. This means that if the strength of confounding is at
most $2$, the chance that an individual \reviseb{with a negative 
ITE} has a $\Gamma$-value larger than $2$ is at most $\alpha$.  
Taking a step further, one might be
interested in the inference on a set of selected units (e.g., the
red points in Figure~\ref{fig:intro_xplot}), for which evidence on
multiple ITEs needs to be combined.  In a companion paper, we shall
consider simultaneous inference on multiple ITEs with proper error
control.

\begin{figure}[ht]
    \centering
    \includegraphics[width=\textwidth]{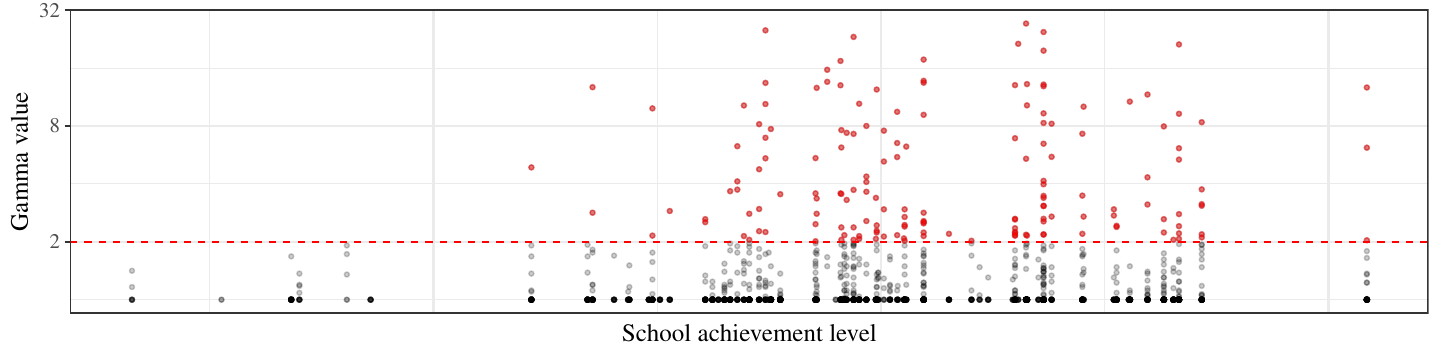}
    \caption{$\Gamma$-values versus school achievement level 
      for test samples. The red points correspond to the students 
      whose $\Gamma$-values are greater than $2$.} 	 
    \label{fig:intro_xplot}
\end{figure}

\subsection{Problem setup}
Throughout, we work under the potential outcome 
framework~\citep{neyman1923applications,imbens2015causal}.
Let $X \in \cX$ denote the observed 
covariates, $T\in\{0,1\}$
the assigned treatment, and $Y(1),
Y(0) \in \RR$ the potential outcomes. 
We assume there is an unobserved confounder $U\in\cU$ satisfying
\#\label{eq:u_ign}
\big(Y(1),Y(0)\big) \indep T \biggiven X, U,
\#
in which $U$ can be a random vector. 
As pointed out by~\citet{yadlowsky2018bounds},
a confounder satisfying~\eqref{eq:u_ign} always 
exists since one can take $U=(Y(1),Y(0))$.
In contrast, the strong ignorability assumption
states
\#\label{eq:ign}
\big(Y(1),Y(0)\big)\indep T \biggiven X.
\#
As well known, the latter assumes that we have measured sufficiently
many features so that the potential outcomes are independent of the treatment conditional on the covariates $X$. 

Suppose there are i.i.d.~samples
$\{(X_i, U_i, T_i, Y_i(0), Y_i(1))\}_{i\in\cD}$ from some distribution
$\PP$.  We adopt the commonly used stable unit treatment value
assumption (SUTVA) (see
e.g.,~\citet{cox1958planning,rubin1978bayesian,rubin1990comment,imbens2015causal}),
so that we observe
$Y_i=Y_i(T_i) = T_i \cdot Y_i(1) + (1-T_i) \cdot Y_i(0)$.  This means
that the
observed dataset consists of the random variables 
$(X_i,T_i,Y_i)_{i\in \cD}$.

Without further assumptions, the potential outcomes and the treatment
assignment mechanism can arbitrarily depend on $U$, making \revise{the
  estimation of treatment effects} impossible.  For example, imagine
we would like to 
assess the effect of a drug on patients. 
We are interested in $Y(1)$ (e.g., the survival time of the patient if
the drug is taken), and have available observational data recording
the treatment assignment $T$ and outcome $Y$.  Consider a confounded
setting, where the drug is assigned to patients based on an
undocumented factor $U$, namely, the patient's condition when admitted
to the hospital, so that only those in critical condition get treated.
Since $U$ is highly correlated with $Y(1)$, the survival times of
treated patients will likely be smaller than those in the whole
population (which is our inferential target), making the task of
identifying the effectiveness of the drug extremely difficult. 
As such, we shall work with confounders that only have limited effect
on the treatment assignment mechanism; the concept of ``limited
effect'' is formalized by the sensitivity models introduced below.

\subsection{Sensitivity models}
\label{subsec:sens_mdl}
A sensivity model characterizes the degree 
to which the data distribution
violates the strong ignorability assumption.
There has been a rich literature in 
designing different types of sensitivity models 
(see e.g.~\citet{zhao2017sensitivity} and the
references therein). 
In this paper, we work under 
the marginal sensitivity model 
\citep{tan2006distributional,zhao2017sensitivity}
on the unidentifiable super-population, 
characterized by the following 
marginal $\Gamma $-selection condition:

\begin{definition}[Marginal $\Gamma $-selection]
A distribution $\PP$ over $(X,U,T,Y(0),Y(1))$ satisfies
the marginal $\Gamma $-selection condition if for 
$\PP$-almost all $x\in \cX$ and $u\in \cU$, 
\#\label{eq:gamma_sel_mgn}
\frac{1}{\Gamma } \leq \frac{\PP(T=1\given X=x,U=u)/\PP(T=0\given X=x,U=u)}{\PP(T=1\given X=x )/\PP(T=0\given X=x )} \leq \Gamma .
\#
\end{definition}
Under the marginal $\Gamma $-selection condition, 
no matter how one changes the value of the confounder,
the odds of being treated conditional on the covariates
and the confounder will at most be off by a factor of $\Gamma $
when compared to the odds of being treated conditional 
{\em only} on the covariate. Therefore, the effect 
of confounders on the selection bias is bounded.   

The above marginal sensitivity model is  
closely related to Rosenbaum's sensitivity
model~\citep{rosenbaum1987sensitivity} 
\revise{and its generalizations~\citep{yadlowsky2018bounds}}, characterized
by the following $\Gamma $-selection condition:

\begin{definition}[$\Gamma $-selection]
A distribution $\PP$ over $(X,U,T,Y(0),Y(1))$ satisfies
the $\Gamma $-selection condition if for 
$\PP$-almost all $x\in \cX$ and $u,u'\in \cU$, 
\#\label{eq:gamma_sel}
\frac{1}{\Gamma } \leq \frac{\PP(T=1\given X=x,U=u)/\PP(T=0\given X=x,U=u)}
{\PP(T=1\given X=x,U=u' )/\PP(T=0\given X=x,U=u')} \leq \Gamma .
\#
\end{definition}
As pointed out by~\citet[Prop. 7.1]{zhao2017sensitivity}, the 
$\Gamma $-selection condition is stronger 
than the marginal $\Gamma $-selection condition in the
sense that a distribution $\PP$ over $(X,U,T,Y(0),Y(1))$
satisfying the $\Gamma $-selection condition
must also satisfy the marginal $\Gamma $-selection condition.

In the following, we let $\PP^\super$ denote the unknown 
super-population over $(X,U,T,Y(0),Y(1))$ 
that generates the partial observations $\cD$. 
For any $\Gamma \geq 1$,  
$\cP(\Gamma  )$ is the set of super-populations
that satisfy the marginal $\Gamma $-selection condition.

\subsection{Prediction intervals for counterfactuals}
The crux of our 
approach is to construct reliable \revise{prediction} intervals for
counterfactuals when the observations satisfy the marginal sensitivity
model.  
\reviseb{In Section~\ref{sec:mgn_interval},
  we propose a generic robust weighted conformal procedure, which is applied to
 counterfactual prediction
 in Section~\ref{sec:dist_shift}.}
Suppose we are interested in $Y(1)$ and $(X_{n+1},Y_{n+1}(1))$
is a test sample from the super-population (the results apply to other
types of counterfactuals as well).  
Given a nominal level $1-\alpha$ and a fixed confounding level
$\Gamma\geq 1$, the prediction interval $\hat{C}(X_{n+1},\Gamma )$ we
construct from the confounded data ensures \$ \PP\big(Y_{n+1}(1)
\in \hat{C}(X_{n+1},\Gamma )\big) \ge 1- \alpha - \hat{\Delta} \$ as
long as $\PP^\super \in \cP(\Gamma)$; here, the probability is over
the confounded observations $(X_i,T_i,Y_i)_{i\in\cD}$ and the test
sample $(X_{n+1},Y_{n+1}(1))$.  The error term $\hat\Delta = 0$
if the propensity score $e(x) = \PP(T = 1 \given X = x)$ of the
observed data is known exactly, and otherwise depends on the
\revise{estimation of the propensity score}.  

In practice, researchers may want to control the risk 
of falsely rejecting a hypothesis on an individual 
treatment effect \emph{given} the data at hand, $\cD$.  
Section~\ref{sec:cal_interval} offers a sister procedure with {\em
  probably approximately correct} (PAC)-type guarantee. Once more,
suppose $\PP^\super \in \cP(\Gamma)$. Then given any
$\delta,\alpha >0$ and any $\Gamma\geq 1$, we can construct a
prediction interval $\hat{C}(X_{n+1},\Gamma )$ such that 
\$ \PP\big(Y_{n+1}(1) \in \hat{C}(X_{n+1},\Gamma ) \given \cD \big)
\ge 1 - \alpha - \hat\Delta \$ holds with probability at least
$1-\delta$ over the randomness of $\cD$.  As before, the error term
$\hat{\Delta}$ depends on the estimation of the propensity score
$e(x)$. 
Since any distribution $\PP$ satisfying the $\Gamma $-selection
condition must also satisfy the marginal $\Gamma $-selection
condition, our methods also provide valid prediction intervals for
counterfactuals under Rosenbaum's sensitivity model.

\subsection{Related work}
The idea of sensitivity analysis dates back
to~\citet{cornfield1959smoking} who studied the causal effect of
smoking on developing lung cancer.  The authors concluded that if an
unmeasured confounder---hormone in their example---were to rule out
the causal association between smoking and lung cancer, it needed to
be so strongly associated with smoking that no such factors could
reasonably exist.  The approach of~\citet{cornfield1959smoking}
requires that both the outcome and the confounder are binary and that
there are no covariates. Whereas
\citet{bross1966spurious,bross1967pertinency} used the same conditions
later on, subsequent works substantially relaxed these assumptions.
\citet{rosenbaum1983assessing} proposed a sensitivity model to work
with categorical covariates.  Later, under Rosenbaum's sensitivity
model, a series of
works~\citep{rosenbaum1987sensitivity,gastwirth1998dual,rosenbaum2002attributing,rosenbaum2002observational}
further extended sensitivity analysis to broader settings by studying
samples with matching
covariates.~\citet{imbens2003sensitivity},~\citet{ding2016sensitivity}
and~\citet{vanderweele2017sensitivity} also considered unmeasured
confounders with `limited' effect, but under different sensitivity
models.  More recently,~\cite{tan2006distributional} proposed the
marginal sensitivity model, and~\citet{zhao2017sensitivity} proposed a
construction of bounds and confidence intervals for the
ATE under this model.  Their result was recently sharpened
by~\citet{dorn2021sharp}.  Bringing a distributionally robust
optimization perspective to the sensitivity analysis
problem,~\citet{yadlowsky2018bounds} studied the estimation of CATE
under Rosenbaum's sensitivity model.

Our contribution is to provide robustness for the inference procedure
against a proper level of confounding.  This bears similarity with
several works conducting `safe' policy evaluation and policy
learning under certain sensitivity models (see
e.g.,~\citet{namkoong2020off,kallus2021minimax}).  In contrast to the
estimation and learning tasks, we provide well-calibrated uncertainty
quantification for counterfactuals, which calls for a different set of
techniques.

Another closely related line of work is conformal inference, which is
the tool we employ (and improve) towards robust quantification of
uncertainty.  Conformal inference was pioneered and developed by
Vladmir Vovk and his collaborators in a series of papers (see
e.g.,~\citet{vovk2005algorithmic,vovk2009line,gammerman2007hedging,shafer2008tutorial,vovk2012conditional,vovk2013transductive}). 
In recent years, the technique has been broadly used for 
establishing statistical guarantees for learning algorithms 
(see e.g.,~\citet{lei2014distribution,lei2018distribution,lei2019fast,romano2020classification,cauchois2021knowing}). 
In particular,~\citet{lei2020conformal} studied the counterfactual 
prediction problem with conformal inference tools 
under the strong ignorability condition. 
The PAC-type guarantee for conformal prediction sets 
is also studied in~\citet{vovk2012conditional} 
without distributional shifts. 

In addition, our robust prediction perspective is related
to~\citet{cauchois2020robust}, which also studies the construction of
robust prediction sets.  That said, the setting considered there is
substantially different from ours; we will expand on this in
Section~\ref{sec:mgn_interval}. 
\reviseb{~\cite{park2021pac} 
constructs PAC-type
prediction sets under 
an identifiable covariate shift,
with some robustness features. 
We provide in Section~\ref{sec:cal_interval} a robust PAC-type
procedure as well; however, our methods apply to partially
identifiable distributional shifts and are distinct from
the rejection sampling strategy used in \cite{park2021pac}.
}

Finally, we note that 
in an independent and concurrent paper,
\citet{yin2021conformal} also develops sensitivity analysis for the ITE 
under the marginal sensitivity model; 
they define the marginal sensitivity model without positing a latent confounder 
and provide an alternative derivation of our Lemma~\ref{lem:shift}. 
They propose a robust weighted conformal inference procedure 
that is equivalent to Algorithm~\ref{alg:mgn}, while 
the analyses are complementary and offer different perspectives. 
Our current work additionally presents a distinct algorithm achieving 
the PAC-type coverage, and establishes the sharpness of our procedure in certain cases. 
We also interpret sensitivity analysis as a multiple testing procedure 
and 
prove that the type-I error is simultaneously controlled over all super-populations.

\subsection{Outline of the paper}
The rest of the paper is organized as follows:
\begin{itemize}

\item Sections~\ref{sec:mgn_interval} to~\ref{sec:cal_interval} concern
  the development of robust counterfactual inference procedures. 
 In Section~\ref{sec:mgn_interval}, we develop a general 
robust weighted conformal inference procedure; 
we show  
in Section~\ref{sec:dist_shift}  
how to apply it to 
construct valid counterfactual prediction sets.  
We propose a distinct procedure in
Section~\ref{sec:cal_interval} with 
PAC-type coverage, and establish a sharpness result.  

\item Section~\ref{sec:sens_analysis} expresses sensitivity analysis
  as a sequence of hypotheses testing problems, and gives a
  statistical interpretation of the $\Gamma$-value.
  Simulation studies explain how the $\Gamma$-value relates
    to the true effect size and actual confounding level.

\item  
  Section~\ref{sec:synthetic} evaluates the proposed method on a
  semi-synthetic dataset to examine its validity and
  applicability. Finally, our sensitivity analysis framework is used
  to draw causal conclusions on a real dataset.
\end{itemize}

\section{Robust weighted conformal inference}
\label{sec:mgn_interval}
 
\revisea{We begin by considering a generic predictive inference
  problem under distributional shift. We will connect this  to counterfactual
  inference under a marginal sensitivity model in 
  Section~\ref{sec:dist_shift}.

Suppose we have i.i.d.~training data $(X_i,Y_i)_{i\in \cD}$ 
from some distribution $\PP$, and 
an independent test sample $(X_{n+1},Y_{n+1})$  
from some
possibly different distribution $\TPP$.  
We consider a general setting where 
$\TPP$ is ``within bounded distance'' from 
$\PP$,
in the sense that, for some fixed
functions $\ell(\cdot)$ and $u(\cdot)$,
it belongs to the  identification set 
defined as
\#\label{eq:id_set}
\cP(\PP,\ell,u) = \Big\{  ~\tilde\PP\colon \ell(x) \le  
\frac{\ud \tilde\PP}{\ud \PP}(x,y) \le
u(x) ~~\PP\textrm{-almost surely}  \Big\}.
\#
The task is to provide a calibrated prediction interval $\hat{C}(X_{n+1})$ 
for $Y_{n+1}$.}

Equation~\eqref{eq:id_set} actually identifies a new class of
distributional robustness problems.   
As we shall see later
\reviseb{in Section~\ref{subsec:dist_shift},} 
our model~\eqref{eq:id_set} is motivated by
sensitivity analysis 
and is quite distinct from other models in the
literature. 
For instance, \citet{cauchois2020robust} considers a
setting in which 
the target distribution $\TPP$ is assumed to be within an
$f$-divergence ball with radius $\rho$ centered around $\PP$, so that
the identification set is 
\#\label{eq:id_set_fdiv} 
\cQ(\PP,\rho) =
\Big\{~\TPP: D_f\big( \TPP \,\|\, \PP \big) \le \rho \Big\}.  
\# 
Instead of bounding the overall shift
  in~\eqref{eq:id_set_fdiv}, the constraint in~\eqref{eq:id_set}
  actually allows freedom in the shift of $X$; to be sure, the
  set~\eqref{eq:id_set} can be small as long as $\ell(x)$ and $u(x)$
  are close.
  For counterfactual prediction under the strong ignorability
  condition in~\cite{lei2020conformal}, the identification
  set~\eqref{eq:id_set} happens to be a singleton even if $\PP_X$ and
  $\TPP_X$ are drastically different, while \eqref{eq:id_set_fdiv}
  would require a large $\rho$ to hold.  More generally, when there is
  a (approximately) known large shift in the marginal distribution
  $\PP_X$ but a relatively small shift in the conditional
  $\PP_{Y\given X}$, \eqref{eq:id_set} provides a tighter range of the
  target distributions.  
  Finally, the pointwise constraint
  in~\eqref{eq:id_set} 
  (as opposed to the average form) makes it naturally compatible with
  the weighted conformal inference framework
  of~\cite{TibshiraniBCR19}.

\subsection{Warm up: weighted conformal inference}
Before introducing our method, it is best to start by a brief recap of
the weighted (split) conformal inference procedure. 
Assume the likelihood ratio
$w(x,y) = \frac{\ud \tilde\PP}{\ud \PP}(x,y)$ is \emph{known exactly}.
The dataset $\cD$ is first randomly split into a training fold
$ \cD_\train$ of cardinality $\nt$ and a calibration fold
$ \cD_\calib $ of cardinality $\nc$.  We use $\cD_\train$ to train any
nonconformity score function $V\colon \cX\times \cY\to \RR$ which
measures how well $(x,y)$ ``conforms'' to the calibration samples: the
smaller $V(x,y)$, the better $(x,y)$ conforms to the
\revise{calibration samples}; see e.g.,~\citet{gupta2019nested} for
examples of nonconformity scores. We then define $V_i = V(X_i,Y_i)$
for $i\in \cD_\calib$.  For any hypothetical value
$(x,y)\in \cX\times\cY$ of the new data point, we assign weights to
the samples as \$ &{p}_i^w(x,y) :=
\dfrac{w(X_i, Y_i)}{\sum_{j=1}^{\nc} w(X_j,Y_j) + w(x,y) },~i=1,\dots,\nc,\\
&p_{n+1}^w(x,y) := \dfrac{w(x,y)}{\sum_{j=1}^\nc w(X_j,Y_j) + w(x,y)
}.  \$ \revise{ For any random variable $Z$, define the quantile
function as
$\text{Quantile}(q,Z) = \inf\{z\colon \PP(Z\leq z)\geq q\}$, and let
$\delta_a$ denote a point mass at $a$.}  Then a level $(1-\alpha)$
prediction interval is given by \# &\hat{C}(X_{n+1}) = \big\{y:
V(X_{n+1},y) \le
\hat{V}_{1-\alpha}(y)\big\},\label{eq:qt_wcp}\\
\textrm{where}~~ &\hat{V}_{1-\alpha} (y) = \quant \Big(1-\alpha,~
\sum_{i=1}^\nc p_i^w ( X_{n+1},y )\cdot\delta_{V_i} + p_{n+1}^w
(X_{n+1},y) \cdot\delta_{\infty}\Big).  \notag \# \reviseb{The prediction
interval~\eqref{eq:qt_wcp} is shown by~\cite{TibshiraniBCR19} to obey
$\TPP(Y_{n+1} \in \hat{C}(X_{n+1})) \ge 1-\alpha$, and it is computable
when $w(x,y)$ is a known function of $x$ only. In our context,
$w(x,y)$ depends on $x$ only when $\cP(\PP,\ell,u)=\{\tilde\PP\}$ is a
singleton---this is exactly the case in~\cite{lei2020conformal}, where
the strong ignorability condition~\eqref{eq:ign} is assumed.}

\subsection{Robust weighted conformal inference procedure}
\label{subsec:robust_wcp}

\revisea{Now suppose we have a pair of functions $\hat{\ell}$ and
  $\hat{u}\colon \cX\to \RR^+$ with $\hat\ell(x)\leq \hat u(x)$ for
  all $x\in \cX$.  In general, we expect $\hat\ell(x)$
  (resp. $\hat{u}(x)$) to serve as a pointwise lower (resp. upper)
  bound on the unknown likelihood ratio $w(x,y)$, although our
  theoretical guarantee does not depend on this. }

Proceeding as before and denoting  
$\cD_\calib = \{1,\dots,\nc\}$, 
we train a nonconformity score 
function $V\colon \cX\times \cY \mapsto \RR$ 
on $\cD_\train$. 
\revisea{We also allow $\hat{\ell}(\cdot)$ and $\hat{u}(\cdot)$ 
to be obtained from $\cD_\train$.} 
Let  
$[1], [2],\dots,[n]$ be a permutation of $\{1,2,\ldots,n\}$ 
such that $V_{[1]}\leq V_{[2]}\leq \cdots \leq V_{[n]}$. 
Defining  $\ell_i =  \hat\ell(X_i)$ and $u_i = \hat u(X_i)$
for $1\leq i\leq n$, and $u_{n+1} = \hat u(X_{n+1})$,
we construct the prediction interval  
\#\label{eq:ci_mgn}
  \hat{C}(X_{n+1}) = \big\{y: V(X_{n+1}, y) \leq V_{[k^*]}\big\},
\#
where 
\#\label{eq:mgn_qtl}
k^* = \min\big\{k: \hat{F}(k) \geq 1-\alpha\big\},\quad \hat{F}(k) =\frac{\sum^k_{i=1} \ell_{[i]}  }{\sum^k_{i=1} \ell_{[i]}
+ \sum^\nc_{i=k+1}u_{[i]} + u_{n+1}}.
\# 
The thresholding function $\hat{F}(k)$ in~\eqref{eq:mgn_qtl} 
is monotone in $k$, hence a linear search suffices 
to find $k^*$. 
We summarize the procedure in Algorithm~\ref{alg:mgn}.

\begin{algorithm}[htbp]
\caption{Robust conformal prediction: the  marginal procedure}\label{alg:mgn}
\begin{algorithmic}[1]
\REQUIRE Calibration data $\cD_\calib$, bounds $\hat\ell(\cdot)$, $\hat{u}(\cdot)$, 
non-conformity score function $V\colon \cX\times \cY\to \RR$, test covariate $x$,
target level $\alpha\in(0,1)$. 
\vspace{0.05in}
\STATE For each $i\in \cD_\calib$, 
compute $V_i = V(X_i,Y_i)$
\STATE For each $i\in \cD_\calib$, 
compute $\ell_i = \hat{\ell}(X_i)$ and $u_i = \hat{u}(X_i)$.
\STATE Compute $u_{n+1} = \hat{u}(x)$. 
\STATE For each $1\leq k\leq \nc$, 
compute $\hat{F}(k)$ as in~\eqref{eq:mgn_qtl}. 
\STATE Compute $k^* = \min\{k\colon \hat{F}(k) \geq 1-\alpha\}$. 
\vspace{0.05in}
\ENSURE Prediction set $\hat{C}(x) = \{y\colon V(x,y)\leq V_{[k^*]}\}$.
\end{algorithmic}
\end{algorithm}

\begin{remark}\label{rm:mgn_opt}
  \normalfont
Writing $W_i = w(X_i,Y_i)$
for $1\leq i\leq \nc$
and $W_{n+1}(y) = w(X_{n+1},y)$,
we can check for~\eqref{eq:qt_wcp} that 
$\hat{V}_{1-\alpha}(y) = V_{[k^*(y)]}$, where 
$k^*(y) = \min\{k\colon F(k,y)\geq 1-\alpha\}$ and 
\#\label{eq:mgn_k*y}
F(k,y) = \frac{\sum^k_{i=1} W_{[i]}}
{\sum^\nc_{i=1}W_i + W_{n+1}(y)} .
\#
For each $k$, the threshold $\hat{F}(k)$ 
in~\eqref{eq:mgn_qtl} is the solution to
 the following optimization problem 
\$
\textrm{minimize} \quad &\frac{\sum^k_{i=1}W_{[i]}  }{\sum^\nc_{i=1}W_i + W_{n+1}} \\
\textrm{subject to}\quad & \hat\ell(X_i) \leq W_i \leq \hat u(X_i),\quad \forall~ i\in \cD_\calib \cup \{n+1\}, 
\$
which seeks a lower bound on the unknown $F(k,y)$  
in~\eqref{eq:mgn_k*y} 
if we believe $\hat\ell(x)\leq w(x,y)\leq \hat u(x)$ for all $(x,y)\in \cX\times\cY$. 
Therefore, $\hat{V}_{[k^*]}$ is 
a conservative estimate (upper bound) of  
$\hat{V}_{1-\alpha}(y)$ in~\eqref{eq:qt_wcp}, and it is also the tightest
upper bound one could obtain given the constraints 
$\hat\ell(X_i) \leq W_i \leq \hat u(X_i)$, $\forall~ i\in \cD_\calib \cup \{n+1\}$. 
\end{remark}

\subsection{Theoretical guarantee}

To state the coverage guarantee 
of Algorithm~\ref{alg:mgn}, 
we start with some notations. 
We write $a_+ = \max(a,0)$ and $a_- = \max(-a,0)$ for any $a \in \RR$.
For any random variable $U$, define $\|U\|_r = (\EE[|U|^r])^{1/r}$
as the $L_r$ norm for any $r\geq 1$, 
and $\|U\|_\infty = \EE[\esssup|U|]$ 
as the $L_\infty$ norm. 
Throughout the paper, 
all the statements are conditional on $\cD_\train$,
so that 
$\hat{\ell}(\cdot)$, $\hat{u}(\cdot)$ 
and $V(\cdot,\cdot)$ can be viewed as fixed functions.


\begin{theorem}\label{thm:mgn}
Assume $(X_i,Y_i)_{i\in \cD_{\textnormal{calib}}} \iid \PP$, 
and the independent test point $(X_{n+1},Y_{n+1})\sim \TPP$ 
has likelihood ratio $w(x,y) = \frac{\ud \TPP}{\ud \PP}(x,y)$.  
Then for any target level $\alpha\in(0,1)$, 
the output of Algorithm~\ref{alg:mgn} satisfies
\$
\TPP\big(Y_{n+1} \in\hat{C}(X_{n+1})\big) 
   \ge 1-\alpha - \hat\Delta, 
\$ 
where the probability is over $\cD_{\textnormal{calib}}$ and $(X_{n+1},Y_{n+1})$, 
and
\#\label{eq:mgn_gap}
\hat\Delta &=\big\|1 / \hat{\ell}(X)   \big\|_{q} \cdot  \bigg(\,\Big\|  \big(\hat{\ell}(X )-w(X ,Y )  \big)_+  \Big\|_{p} +  \Big\|  \big(\hat{u}(X )-w(X ,Y )  \big)_-  \Big\|_{p} \notag 
\\
&\qquad \qquad \qquad \qquad \qquad  
+ \frac{1}{\nc} \Big\|  w(X ,Y )^{1/p}  \cdot \big(\hat{u}(X)-w(X ,Y )  \big)_-  \Big\|_{p} \,\bigg)   . 
\#
Given $q \ge 1$, $p$ is chosen such that $1/p+1/q=1$ with the convention that $p=\infty$ for $q=1$. 
The expectation in $L_p$, $L_q$ is taken over  
an independent sample
$(X,Y)\sim\PP$.

Clearly, if $\hat\ell(X) \le w(X,Y) \le \hat{u}(X)$ a.s., $\hat \Delta
= 0$ and
\$
\TPP\big(Y_{n+1} \in\hat{C}(X_{n+1})\big) 
   \ge 1-\alpha. 
\$  
\end{theorem}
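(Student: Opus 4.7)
The plan is to build on the weighted conformal inference framework of \cite{TibshiraniBCR19}, then quantify how much coverage one loses by replacing the true likelihood ratio $w$ with the potentially imperfect bounds $\hat\ell,\hat u$. First, I would invoke weighted exchangeability: under $(X_i,Y_i)\iid \PP$ on calibration and $(X_{n+1},Y_{n+1})\sim \TPP$ independently, the oracle procedure with weights $W_i=w(X_i,Y_i)$ and test weight $W_{n+1}(y)=w(X_{n+1},y)$ uses threshold $k^*_{\textnormal{or}}(y)=\min\{k\colon F(k,y)\geq 1-\alpha\}$ (with $F(k,y)$ as in~\eqref{eq:mgn_k*y}) and yields a prediction set $\hat C^{\textnormal{or}}(X_{n+1})$ with $\TPP$-coverage at least $1-\alpha$.

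Second, I would dispose of the ``Clearly'' clause via a monotonicity argument. Assuming $\hat\ell(X)\leq w(X,Y)\leq \hat u(X)$ almost surely, the numerator of $\hat F(k)$ satisfies $\sum_{i=1}^k \ell_{[i]}\leq \sum_{i=1}^k W_{[i]}$ while its denominator dominates $\sum_{i=1}^{n_c} W_i + W_{n+1}(y)$ termwise. Therefore $\hat F(k)\leq F(k,y)$ for every $k,y$, giving $k^*\geq k^*_{\textnormal{or}}(Y_{n+1})$ and $\hat C(X_{n+1})\supseteq \hat C^{\textnormal{or}}(X_{n+1})$; the $(1-\alpha)$ coverage then follows directly from the oracle bound.

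For the general statement with error $\hat\Delta$, I would bound the coverage deficit $\TPP(V_{n+1}>V_{[k^*]})-\alpha$ in terms of the expected size of the bound violations $(\hat\ell-w)_+$ and $(\hat u-w)_-$. Rewriting the miscoverage via the Tibshirani identity expresses $\TPP(V_{n+1}>V_{[k^*]})$ as a $\PP$-expectation over the pooled sample of a quantity close to $1-\hat F(k^*)$, up to the discrepancy between $\hat F(k^*)$ and $F(k^*,Y_{n+1})$. Writing this discrepancy as a ratio $(a'b-ab')/((a+b)(a'+b'))$ with $a,b,a',b'$ the numerators/denominators of $\hat F$ and $F$, I would bound the numerator by the two violations and the denominator below by $\sum_{i=1}^k\hat\ell(X_{[i]})$, applying Hölder with conjugate exponents $p,q$ to separate a $\|1/\hat\ell(X)\|_q$ factor from $L_p$ norms of the violations. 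This produces the first two terms of~\eqref{eq:mgn_gap} naturally.

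The main obstacle is the third term with the prefactor $1/n_c$ and the unusual weight $w(X,Y)^{1/p}$. This piece should arise from the single test-point contribution $u_{n+1}-W_{n+1}(Y_{n+1})$ in the denominator of $\hat F(k)$: since the change of measure from $\PP$ to $\TPP$ attaches an extra factor $w(X_{n+1},Y_{n+1})$ only to the test point, the corresponding Hölder step yields $\|w^{1/p}(\hat u-w)_-\|_p$ rather than $\|(\hat u-w)_-\|_p$; the $1/n_c$ factor then comes from averaging this single-point deviation against the $n_c+1$-term sum in the denominator. Carefully tracking this test-versus-calibration asymmetry, and keeping the algebraic bounds on the ratio $(a'b-ab')/((a+b)(a'+b'))$ uniform in $k$ so that the final bound does not depend on $k^*$, is where I expect the bookkeeping to be most delicate.
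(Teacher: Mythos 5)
Your proposal follows essentially the same route as the paper: the Tibshirani--Barber--Cand\`es--Ramdas weighted-exchangeability identity to write the coverage as $\EE\big[\sum_i w(X_i,Y_i)\ind_{\{V_i\le V_{[k^*]}\}}/\sum_j w(X_j,Y_j)\big]$, comparison against the guarantee $\hat F(k^*)\ge 1-\alpha$ via a cross-product bound on the difference of the two ratios, H\"older with conjugate exponents to peel off the $\|1/\hat\ell(X)\|_q$ factor, and the correct attribution of the $\frac{1}{\nc}\|w^{1/p}(\hat u-w)_-\|_p$ term to the change of measure on the single test point. One small imprecision in your argument for the exact-bounds case: the denominator of $\hat F(k)$ does \emph{not} dominate $\sum_i W_i + W_{n+1}(y)$ termwise (its first $k$ terms are $\ell_{[i]}\le W_{[i]}$); the correct one-line fix is to use monotonicity of $a\mapsto a/(a+c)$ in $a$ and of $c\mapsto a/(a+c)$ in $c$ to conclude $\hat F(k)\le F(k,y)$.
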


The proof of Theorem~\ref{thm:mgn} is deferred to
Appendix~\ref{app:proof_thm_mgn}. 
Almost sure bounds on the
likelihood ratio imply \revisea{exact} coverage. 
Otherwise, coverage is off by
at most $\hat\Delta$. 
First, $\|1/\hat{\ell}(X)\|_q$ is bounded when 
$\hat{\ell}(\cdot)$ is bounded away from zero.  
Second, the
remaining terms (between brackets) 
are small  
if $\hat \ell(X)$ (resp.~$\hat u(X)$) is
below (resp.~above) the likelihood ratio most of the time. 
In particular, for all target distributions within~\eqref{eq:id_set},
if $\hat\ell(\cdot)$ and $\hat u(\cdot)$ are estimators for
$\ell(\cdot)$ and $u(\cdot)$, one additive term can be decomposed as
\#\label{eq:mgn_gap_decomp} \Big\| \big(\hat{\ell}(X)-w(X ,Y ) \big)_+
\Big\|_{p} =\Big\| \big\{ \underbrace{\big(\hat{\ell}(X) -
  \ell(X)\big)}_{\textrm{estimation error}} - \underbrace{\big(w(X ,Y)
  - \ell(X)\big)}_{\textrm{population gap}} \big\}_+ \Big\|_{p}, \#
which is small as long as the estimation error does not exceed the
population gap most of the time.  Our numerical experiments
demonstrate that $\hat\Delta$ is reasonably small even with
non-negligible estimation errors.

\reviseb{
\begin{remark}\label{rem:mgn_general}
  \normalfont 
  The miscoverage bound 
in Theorem~\ref{thm:mgn} 
holds for any distributional shift and 
any postulated bounds $\hat\ell(\cdot)$ and $\hat{u}(\cdot)$. 
The results also hold if the bounds take the general form
$\hat\ell(x,y)$ and $\hat u(x,y)$.  Therefore, our procedure may be
of use in a very broad range of settings.  
\end{remark}
 }

\section{Valid counterfactual inference under the sensitivity model}
\label{sec:dist_shift}
 
With the generic methodology in place, we return to
counterfactual inference under unmeasured confounding.  We shall
characterize distributional shifts of interest, 
\reviseb{translate the target distribution into 
a set like~\eqref{eq:id_set}}
and show how Algorithm~\ref{alg:mgn} can be
applied.

Recall that we have a partially revealed and 
possibly confounded dataset 
$(X_i,T_i,Y_i)_{i\in \cD}$ generated by an unknown super population $\PP^\super$. 
Given 
an independent new 
unit from $\PP^\super$  
with only the covariate $X_{n+1}$ observed,  
we would like to construct
a prediction interval that covers $Y_{n+1}(1)$ 
with probability at least $1-\alpha$; the probability is
over the randomness of the training sample  
$\cD$ and the new unit. 
The main challenge is that the  
distribution of $(X_{n+1},Y_{n+1}(1))$ 
may differ from \revise{that of} the 
observations we have access to; that is, 
the training distribution is $\PP_{\textrm{train}} = \PP_{X,Y(1) \given T=1}$, 
while the target distribution is $\PP_{\textrm{target}} = \PP_{X,Y(1)}$.

\subsection{Bounding the distributional shift} 
\label{subsec:dist_shift}

In the previous example, the likelihood ratio takes the form
  $w(x,y) = \frac{\ud \PP_{X,Y(1) }}{\ud \PP_{X,Y(1) \given
      T=1}}(x,y)$.   A key observation in~\citet{lei2020conformal} is
that, under the strong ignorability condition~\eqref{eq:ign}, $w(x,y)$
is an identifiable function of $x$, i.e., the target distribution can
be identified from the observed (training) distribution with a
covariate shift. This fact is used to construct prediction intervals
for counterfactuals with finite-sample guarantees by leveraging the
weighted conformal inference procedure.

In the presence of 
unmeasured confounding, $w(x,y)$ can no longer 
be expressed as a function of $x$. However, 
under the marginal $\Gamma$-selection condition~\eqref{eq:gamma_sel_mgn}, 
$w(x,y)$ can be bounded from above and below by functions of 
$x$; 
\reviseb{that is, the unknown target distribution 
falls within a set of the form~\eqref{eq:id_set}}. 
The following lemma is a key ingredient
for establishing the boundedness result. 

\begin{lemma}\label{lem:shift}
Suppose a distribution $\PP$ over $(X,U,T,Y(0),Y(1))$ 
satisfies the marginal $\Gamma$-selection condition. 
Then for any $t\in \{0,1\}$, it holds for $\PP$-almost all $x\in\cX$, $y\in \cY$
that       
\$
\frac{1}{\Gamma} \le \frac{\ud \PP_{Y(t)\given X,T=t}}
{\ud \PP_{Y(t)\given X,T=1-t}}(x,y) \le \Gamma. 
\$
\end{lemma}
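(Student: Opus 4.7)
My plan is to prove the lemma by introducing the unmeasured confounder $U$ (which exists by~\eqref{eq:u_ign}) as a bridging variable: conditional on $(X,U)$, $Y(t)$ is independent of $T$, so the only source of discrepancy between $\PP_{Y(t)\mid X,T=t}$ and $\PP_{Y(t)\mid X,T=1-t}$ comes from the conditional laws of $U$ given $(X,T)$. These conditional laws in turn differ only via the selection odds, which are controlled by the marginal $\Gamma$-selection condition. I would carry this out in three steps.

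\textbf{Step 1 (mixture representation).} Condition on $(X=x, T=t)$ and write
\[
\PP_{Y(t)\mid X=x,T=t}(\ud y) \;=\; \int \PP_{Y(t)\mid X=x,U=u,T=t}(\ud y)\,\PP_{U\mid X=x,T=t}(\ud u).
\]
By~\eqref{eq:u_ign}, the inner factor equals $\PP_{Y(t)\mid X=x,U=u}(\ud y)$, which does not depend on $t$. Denote it $\mu_{x,u}(\ud y)$. Writing the analogous mixture for $T=1-t$, and choosing a common dominating measure, the density ratio of interest at $(x,y)$ becomes
\[
\frac{\ud \PP_{Y(t)\mid X,T=t}}{\ud \PP_{Y(t)\mid X,T=1-t}}(x,y) \;=\; \frac{\int p(y\mid x,u)\,\PP_{U\mid X=x,T=t}(\ud u)}{\int p(y\mid x,u)\,\PP_{U\mid X=x,T=1-t}(\ud u)},
\]
where $p(y\mid x,u)$ is the density of $\mu_{x,u}$ with respect to the dominating measure.

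\textbf{Step 2 (Bayes' rule and the odds identity).} Applying Bayes' rule to $\PP(U\in\ud u\mid X=x,T=t)$, the Radon--Nikodym derivative of $\PP_{U\mid X=x,T=t}$ with respect to $\PP_{U\mid X=x,T=1-t}$ works out to the selection odds ratio
\[
r_t(x,u) \;:=\; \frac{\PP(T=t\mid X=x,U=u)/\PP(T=1-t\mid X=x,U=u)}{\PP(T=t\mid X=x)/\PP(T=1-t\mid X=x)}.
\]
For $t=1$ this is exactly the quantity bounded in~\eqref{eq:gamma_sel_mgn}, and for $t=0$ it is its reciprocal; in both cases $1/\Gamma \le r_t(x,u) \le \Gamma$ for $\PP$-a.e.\ $(x,u)$.

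\textbf{Step 3 (weighted-average argument).} Substituting $\PP_{U\mid X=x,T=t}(\ud u) = r_t(x,u)\,\PP_{U\mid X=x,T=1-t}(\ud u)$ into the numerator of the ratio from Step~1, the density ratio becomes
\[
\frac{\int p(y\mid x,u)\,r_t(x,u)\,\PP_{U\mid X=x,T=1-t}(\ud u)}{\int p(y\mid x,u)\,\PP_{U\mid X=x,T=1-t}(\ud u)},
\]
which is a weighted average of $r_t(x,u)$ with weights proportional to $p(y\mid x,u)$. Since $r_t(x,u)\in[1/\Gamma,\Gamma]$, so is the weighted average, which yields the desired bound. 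The main technical obstacle is the measure-theoretic bookkeeping when $U$ is a general random vector (choice of dominating measure, a.s.\ validity of the densities on a set of $(x,y)$ of full $\PP$-measure, and handling the null set on which $\PP(T=t\mid X=x)\in\{0,1\}$); this can be managed by the standard regular-conditional-probability machinery and by restricting to the $\PP$-a.s.\ set where all conditional densities are well-defined.
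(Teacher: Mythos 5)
Your proposal is correct and follows essentially the same route as the paper's proof: both exploit the conditional independence~\eqref{eq:u_ign} to reduce the shift in $\PP_{Y(t)\mid X,T}$ to the shift in $\PP_{U\mid X,T}$, identify that shift with the selection odds ratio bounded by the marginal $\Gamma$-selection condition, and conclude by averaging a quantity confined to $[1/\Gamma,\Gamma]$. The paper phrases the final step via bounds on $\PP(Y(t)\in B\mid X,T)$ for arbitrary measurable $B$ rather than your explicit weighted-average-of-densities formulation, but this is only a cosmetic difference.
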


The proof of Lemma~\ref{lem:shift} is in 
Appendix~\ref{supp:pf_lem_shift}. Returning to $w(x,y)$,
by Bayes' rule we have
\#\label{eq:bayes}
\frac{\ud \PP_{X,Y(1) }}{\ud \PP_{X,Y(1)\given T=1}}
= \PP(T=1) \cdot 
\bigg(1 + \frac{\ud \PP_{Y(1) \given X, T=1}}
{\ud \PP_{Y(1) \given X, T=0}}
\cdot \frac{1-e(X)}{e(X)}\bigg).
\#
Applying Lemma~\ref{lem:shift} to~\eqref{eq:bayes},
we obtain 
\$
\PP(T=1) \cdot \bigg(1 + \frac{1}{\Gamma} \cdot \frac{1-e(X)}{e(X)}\bigg)
\le \frac{\ud \PP_{X,Y(1)}}{\ud \PP_{X,Y(1) \given T=1}}
\le \PP(T=1) \cdot \bigg(1 + \Gamma \cdot \frac{1-e(X)}{e(X)}\bigg).
\$
We have thus bounded the likelihood ratio $w(x,y)$ 
by functions of the covariate $x$.  

The above reasoning 
broadly applies to other types of inferential 
targets: 
we might consider {\em average treatment effect on the 
treated} (ATT)-type inference on $Y(1)$; that is, 
to construct a prediction interval such that
$\PP\big(Y_{n+1}(1) \in \hat{C}(X_{n+1}) \given T=1 \big)\ge 1-\alpha$.
In this case, 
the likelihood ratio is simply
$w(x,y) = \frac{\ud \PP_{X,Y(1) \given T=1}}{\ud \PP_{X,Y(1)\given T=1}}(x,y) = 1$.
Alternatively, 
we might be interested in the {\em average treatment effect on the 
control} (ATC)-type inference on $Y(1)$;
that is, 
to construct $\hat{C}(X_{n+1})$ such that
$\PP\big(Y_{n+1}(1) \in \hat{C}(X_{n+1}) \given T=0 \big)\ge 1-\alpha$. 
In this case, 
the likelihood ratio is 
$\frac{\ud \PP_{X,Y(1) \given T=0}}{\ud \PP_{X,Y(1) \given T=1}}$ and 
the lower and upper bounds are 
\$
\ell(x) = \frac{\PP(T=1)}{\PP(T=0)} \cdot \frac{1}{\Gamma} \cdot \frac{1-e(x)}{e(x)},
\qquad u(x) = \frac{\PP(T=1)}{\PP(T=0)}\cdot \Gamma \cdot \frac{1-e(x)}{e(x)}.
\$
More generally, one might also wish to conduct inference on 
a different population, i.e., the target distribution
admits a different distribution of covariates 
whereas
the joint distribution of $(Y(0),Y(1),U,T)$ given $X$ 
stays invariant. To be specific, we assume
\$
\mbox{training distribution: } \PP_{X,U,T,Y(0),Y(1)} &= \PP_X \times \PP_{U,T,Y(0),Y(1)\given X},\\
\mbox{target distribution: } \PP_{X,U,T,Y(0),Y(1)} &= \QQ_X \times \PP_{U,T,Y(0),Y(1)\given X}.
\$
The corresponding upper and lower bounds on the likelihood ratio are
\$
 \ell(x) = \PP(T=1) \cdot \frac{\ud \QQ_{X}}{\ud \PP_X }(x)
 \cdot \Big(1+ \frac{1}{\Gamma} \cdot \frac{1-e(x)}{e(x)}\Big),
 \qquad 
 u(x) = \PP(T=1) \cdot \frac{\ud \QQ_{X}}{\ud \PP_X }(x)
 \cdot \Big(1+ \Gamma \cdot \frac{1-e(x)}{e(x)}\Big).
\$ 
The aforementioned types of inferential target can 
also be applied to $Y(0)$ following the same arguments. 
We summarize the 
bounds for various inferential targets in 
Table~\ref{tab:summary_up_lo},
which recovers Table 1 in \cite{lei2020conformal} when $\Gamma=1$.

\begin{table}[ht]\renewcommand{\arraystretch}{1.7}
\centering
\begin{tabular}{c|c|cccc}
    \toprule
    Counterfactual & Bound & ATE-type & ATT-type & ATC-type & General \\
    \midrule
    \multirow{2}{*}{$Y(1)$} & $\ell(x)$ & $p_1 \cdot \big(1+ \frac{1}{ \Gamma\cdot r(x)}\big)$ & $1$ & 
    $\frac{p_1}{p_0} \cdot \big(\frac{1}{\Gamma \cdot r(x)}\big)$  & $p_1 \cdot \frac{\ud \QQ_X}{\ud \PP_X}(x)\cdot\big(1+ \frac{1}{\Gamma \cdot r(x)}  \big)$\\
    & $u(x)$ & $p_1 \cdot \big(1+  \frac{ \Gamma }{r(x)}\big)$ & 1 & 
    $ \frac{p_1}{p_0} \cdot  \frac{ \Gamma }{r(x)} $  & $p_1\cdot \frac{\ud \QQ_X}{\ud \PP_X}(x)\cdot\big(1+ \frac{\Gamma}{r(x)}\big)$\\
    \multirow{2}{*}{$Y(0)$} & $\ell(x)$ & $p_0 \cdot \big( 1+ \frac{r(x)}{\Gamma} \big)$ & $\frac{p_0 }{p_1 } \cdot \frac{r(x)}{\Gamma}$  & 
    $1$ & $p_0 \cdot \frac{\ud \QQ_X}{\ud \PP_X}(x)\cdot\big(1+ \frac{r(x)}{ \Gamma} \big)$\\
        & $u(x)$ & $p_0 \cdot \big(1+ \Gamma \cdot r(x) \big) $ & $\frac{p_0}{p_1} \cdot  \Gamma \cdot r(x) $  & 
    1 & $p_0 \cdot \frac{\ud \QQ_X}{\ud \PP_X}(x)\cdot\big(1+ \Gamma \cdot r(x)\big)$\\
    \bottomrule
\end{tabular}
\caption{Summary of the upper and lower bounds of the 
likelihood ratio for different inferential targets.
For $t\in\{0,1\}$, $p_t = \PP(T=t)$ and $r(x)=e(x)/(1-e(x))$ 
is the odds ratio of the propensity score. The training distribution
for  $Y(t)$ is always $\PP_{X,Y(t)\given T=t}$. 
For target distributions, 
ATE-type refers to  $\PP_{X,Y(t)}$;
ATT-type refers to  $\PP_{X,Y(t)\given T=1}$;
ATC-type refers to  $\PP_{X,Y(t)\given T=0}$;
General refers to  $\QQ_X \times \PP_{Y(t)\given X}$. 
}
\label{tab:summary_up_lo}
\end{table}

\subsection{Robust counterfactual inference}

To apply Algorithm~\ref{alg:mgn} to counterfactual prediction 
with a pre-specified confounding level $\Gamma$, 
we take the upper and lower bounds as presented 
in Table~\ref{tab:summary_up_lo}.  
For instance, 
for ATE-type predictive inference for $Y(1)$, 
the likehood ratio $w(x,y)$ is bounded by $\ell(x)$
and $u(x)$ which take the form:
\$
\ell(x) = p_1 \cdot \bigg(1+\frac{1-e(x)}{\Gamma \cdot e(x)}\bigg),\quad 
u(x) = p_1 \cdot \bigg(1+\Gamma\cdot \frac{1-e(x)}{e(x)}\bigg).
\$
We then construct $\hat\ell(x)$ and $\hat{u}(x)$ 
by plugging in an estimator $\hat{e}(x)$ of $e(x)$ trained on $\cD_\train$.  
Taking $\Gamma=1$, our procedure recovers the approach
of~\cite{lei2020conformal}, where $\hat\ell(x)=\hat u(x)$ and both
equal the likelihood ratio function; our finite-sample bound
$\hat\Delta$ in~\eqref{eq:mgn_gap} reduces to a bound similar to
\citet[Theorem 3]{lei2020conformal} (the forms of the two bounds are
slightly different, hence not directly comparable in general).  

The finite-sample guarantee in 
Theorem~\ref{thm:mgn} shows that 
the accuracy of $\hat e(x)$ itself 
(hence that of $\hat \ell(x)$ and $\hat u(x)$) 
may not matter much for valid coverage;
what matters is whether or not $\hat \ell(x)$ is below $w(x,y)$ 
and $\hat u(x)$ above $w(x,y)$. 
In our simulation studies, we empirically 
evaluate $\|\hat{\ell}(X) - \ell(X)\|_1$, $\|\hat{u}(X) - u(x)\|_1$ and $\hat \Delta$ 
(see Figure~\ref{fig:mgn_gap}). 
Even if  
$\|\hat{\ell}(X)- \ell(X)\|_1$ and $\|\hat u(X) - u(X)\|_1$ 
can be large (especially for large $\Gamma$), 
the gap 
$\hat \Delta$ remains reasonably small. 

\subsection{Numerical experiments}\label{subsec:simu_mgn}

We illustrate the performance of the 
novel procedure in
a simulation setting  
similar to that in~\cite{yadlowsky2018bounds}.  
Given a sample size $n_\train = n_\calib \in \{500, 2000,5000\}$
and a covariate dimension $p\in\{4,20\}$, we generate the covariates 
and unobserved confounders with 
\$
X\sim \textrm{Unif}[0,1]^p,\quad U\given X \sim N\Big(0,1+\frac{1}{2} \cdot (2.5X_1)^2\Big).
\$
The counterfactual of interest is $Y(1)$ generated as 
\$
Y(1) = \beta^\top X + U,\quad \textrm{where}~~ \beta = (-0.531, 0.126, -0.312, 0.018,0,\dots,0)^\top \in \RR^p.
\$ 
In other words, the fluctuation of $Y(1)$ from 
its conditional mean is entirely driven by $U$. 
With i.i.d.~data $(X_i,Y_i,U_i)$ 
generated from the fixed super-population, 
treatment assignments $T_i$ are generated from 
treatment mechanisms satisfying~\eqref{eq:gamma_sel_mgn}
with different confounding levels $\Gamma\in\{1,1.5,2,2.5,3,5\}$.  
Specifically, we design the propensity scores as
\#\label{eq:simu_prop}
e(x) = \text{logit}(\beta^\top x),\quad e(x,u) = a(x) \ind\{|u|>t(x)\} + b(x) \ind\{|u|\leq t(x)\},
\#
for the same $\beta\in \RR^p$. Above,
\$
a(x) = \frac{e(x)}{e(x) + \Gamma(1-e(x))},\quad b(x) = \frac{e(x)}{e(x) +  (1-e(x))/\Gamma}
\$ 
are the lower and upper bounds on $e(x,u)$ under the marginal $\Gamma$-selection model. 
The threshold $t(x)$ is designed to ensure $\EE[e(X,U)\given X] = e(X)$.  
The training data $\cD$ are $\{(X_i,Y_i(1)\}$ for those $T_i=1$. 
By~\eqref{eq:simu_prop}, the setting is designed to be adversarial 
so as to show the performance of our method in a nearly worst case. 
 
For each configuration of $(n_\calib,p,\Gamma,\alpha)$, 
we run Algorithm~\ref{alg:mgn}  
with ground truth $\ell(\cdot)$, $u(\cdot)$ 
and with estimated $\hat{\ell}(\cdot)$, $\hat{u}(\cdot)$. 
To obtain estimated bound functions, 
we fit the propensity score $\hat{e}(x)$ on $\cD_\train$ 
using regression forests from the \texttt{grf} R-package,
and set $\hat{\ell}(x) = \hat{p}_1 (1+ (1-\hat{e}(x))/(\Gamma\cdot \hat{e}(x)))$,
$\hat{u}(x) = \hat{p}_1 (1+\Gamma\cdot (1-\hat{e}(x))/\hat{e}(x))$
with $\hat{p}_1$ being the empirical proportion of $T_i=1$.  
We compute the average coverage on one test sample 
over $N=1000$ independent runs.

The proposed approaches work for any nonconformity score function
and any training method. 
In our experiments, we follow the  
Conformalized Quantile Regression algorithm (CQR)~\citep{romano2019conformalized}
to compute the nonconformity score: 
$\cD_\train$ is employed to train a 
conditional quantile function $\hat{q}(x,\beta)$
for $Y(1)$ conditional on $X$ 
by quantile random forests~\citep{meinshausen2006quantile}.
Then for \revise{a target level} $\alpha\in(0,1)$, the 
nonconformity score is defined as
\$
V(x,y) = \max \big\{ \hat{q}\,(x,\alpha/2) - y, ~y - \hat{q}\,(x,1-\alpha/2)\big\}.
\$
We run the procedures for a target $\alpha \in \{0.1,0.2,\dots,0.9\}$
with the corresponding nonconformity scores. 

The empirical coverage 
when $\hat \ell(\cdot)$ and $\hat u(\cdot)$ are estimated 
is summarized in Figure~\ref{fig:mgn_cov_est},
with its counterpart with ground truth in Figure~\ref{fig:mgn_cov_known}
in Appendix~\ref{app:simu_predict} showing quite similar performance.

\begin{figure}[h]
\centering 
\includegraphics[width=5.5in]{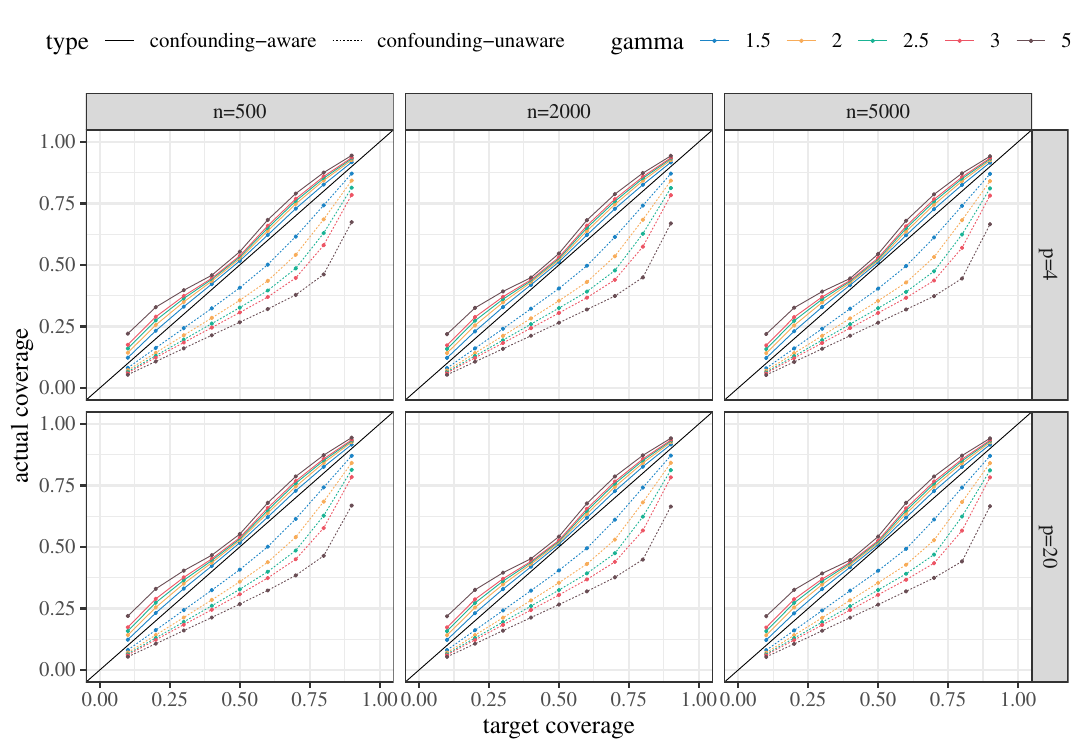}
\caption{Empirical coverage of Algorithm~\ref{alg:mgn} when $\hat \ell(\cdot)$
and $\hat u(\cdot)$ are estimated. Each column corresponds to a sample size $n = n_\calib$,
while each row corresponds to a dimension $p$. Within each subplot, 
each line corresponds to a confounding level $\Gamma$. 
The solid lines are for Algorithm~\ref{alg:mgn}.
\revise{The dashed lines assume no confounding 
and are shown for comparison.} 
}
\label{fig:mgn_cov_est}
\end{figure}

\paragraph{Validity} 
In Figure~\ref{fig:mgn_cov_est}, 
the solid lines are all above the $45^\circ$-line, 
showing the validity of the proposed procedure 
over the whole spectrum of sample sizes. We also see that confounding
must be taken into consideration to reach valid counterfactual
conclusions.

\paragraph{Sharpness}
In all the configurations, especially 
when the target coverage is close to $0.5$, 
the actual coverage is quite close to the target, 
which shows the sharpness of the prediction interval 
in this setting.

\paragraph{Robustness to estimation error} 
As illustrated in the decomposition~\eqref{eq:mgn_gap_decomp},
the gap between $\ell(x)$ and $w(x,y)$ (as well as  
that between $w(x,y)$ and $u(x)$) 
provides some buffer for the estimation error
in $\hat{\ell}(\cdot)$ and $\hat{u}(\cdot)$. 
To be concrete, we numerically evaluate 
the gap $\hat\Delta$
with $(q,p) = (\infty,1)$ 
as well as the estimation error
$\|\hat{\ell}(X)-\ell(X)\|_1$ and $\|\hat{u}(X)-u(X)\|_1$ 
in Figure~\ref{fig:mgn_gap}. 
We see that although the estimation error of the 
lower and upper bounds can sometimes be large, 
\revise{the realized gap $\hat\Delta$} is often very close to zero.

\begin{figure}[htbp]
\centering 
\includegraphics[width=1\linewidth]{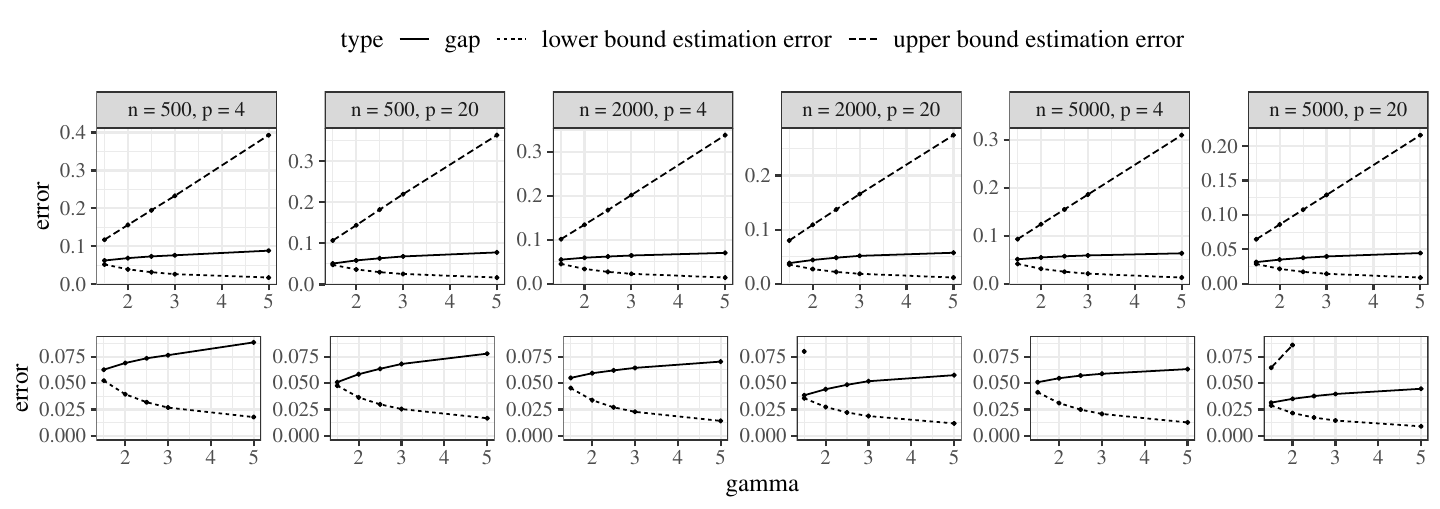}
\caption{Empirical gap and estimation errors. The plots in 
the second row zoom in on the gaps. 
Each plot corresponds to a sample size $n = n_\calib$ and a dimension $p$. 
The long-dashed lines are $\|\hat{u}(X)-u(X)\|_1$,
the short-dashed lines are $\|\hat{\ell}(X)-\ell(X)\|_1$, and
the solid lines are 
$\hat\Delta$ defined in Theorem~\ref{thm:mgn}.}
\label{fig:mgn_gap}
\end{figure}

\section{PAC-type robust conformal inference}
\label{sec:cal_interval}

In this section, 
we construct robust prediction sets  
with guaranteed coverage on the test sample 
conditional on the training data.  
Such guarantee has been considered 
by~\citet{bates2021distribution,bates2021testing} and
might be appealing to the practitioners---it ensures that unless 
one gets really unlucky with the training set $\cD$, 
the anticipated coverage of the prediction set
on the test sample {\em conditional on} $\cD$ achieves the desired
level.

\subsection{The procedure} \label{subsec:whp_procedure}

We state \reviseb{our approach in the generic setting,} starting with
some basic notations.  Throughout, we follow the sample splitting
routine as before, and all statements are conditional on $\cD_\train$.
\reviseb{ Suppose we have a pair of functions $\hat\ell$ and
  $\hat u\colon \cX\to \RR^+$ with $\hat\ell(x)\leq \hat u(x)$ for all
  $x\in \cX$.  Again, we expect them to bound the unknown likelihood
  ratio in a pointwise fashion, although this is not required for
  getting theoretical guarantees. }

For any function $f\colon \cX\times\cY\to \RR$ 
and any target distribution $\TPP$, we define the 
cumulative distribution function (c.d.f.) induced by $\TPP$ as 
\$
F(t\,; f,\tilde\PP) := \tilde\PP\big(f(X_{n+1},Y_{n+1})\leq t\big) 
=\int \ind_{\{f(x,y)\leq t\}}\ud \tilde\PP(x,y),\quad \forall ~t\in \RR. 
\$
Our approach is based on a general \reviseb{non-decreasing 
function  
$G(\cdot)\colon \RR \to [0,1]$.   
We expect $G(\cdot)$ to serve as a {\em conservative 
envelope function} for the 
unknown target distribution 
of the non-conformity score, i.e.,   
\#\label{eq:G_t}
&G(t) \leq 
F(t\,;V,\tilde\PP) \quad \text{for all }t\in \RR.
\#
We now construct $G(\cdot)$ explicitly:
\#\label{eq:def_G_t_formula}
G(t) =   \max\Big\{ \EE\big[\ind_{\{V(X,Y)\leq t\}}\hat{\ell}(X)\big],~
1- \EE\big[ \ind_{\{V(X,Y)>t\}} \hat{u}(X) \big]  \Big\},\quad t\in \RR,
\#
where $\EE$ is taken with respect to an independent copy $(X,Y)\sim \PP$.
By construction, $G(\cdot)$ satisfies~\eqref{eq:G_t} when $\hat \ell(x)$
and $\hat u(x)$ are lower and upper bounds on $w(x,y)$.
}

Given a constant $\delta\in (0,1)$, suppose we
can construct a non-decreasing confidence lower bound $\hat{G}_n(\cdot)$
for $G(\cdot)$ such that 
for any {\em fixed} $t\in \RR$ 
or random variable $t\in \sigma(\cD_\train)$, it holds that
\#\label{eq:hatG_t}
\PP_{\cD }\big( \hat{G}_n(t) \leq G(t)  \big) \geq  1- \delta,
\#
where $\PP_{\cD }$ is taken with respect to $\cD_{\calib}$. 
We then define the prediction interval as 
\$ 
\hat{C}(X_{n+1}) = \Big\{y: V(X_{n+1},y) 
\leq \inf\big\{ t: \hat{G}_n(t) \geq 1-\alpha \big\}\Big\}.
\$
The procedure is summarized in Algorithm~\ref{alg:whp}.

\begin{algorithm}[H]
\caption{Robust conformal prediction: the PAC procedure}\label{alg:whp}
\begin{algorithmic}[1]
\REQUIRE Calibration data $\cD_\calib$, bounds $\hat\ell(\cdot)$, $\hat{u}(\cdot)$, 
non-conformity score function $V\colon \cX\times \cY\to \RR$, test covariate $x$, 
target level $\alpha\in(0,1)$, confidence level $\delta\in(0,1)$.
\vspace{0.05in} 
\STATE Construct the conservative envelope
distribution function $\hat{G}_n(t)$ for $t\in \RR$.
\STATE Compute $\hat{v} = \inf\{t\colon \hat{G}_n(t) \geq 1-\alpha\}$. 
\vspace{0.05in}
\ENSURE Prediction set $\hat{C}(x) = \{y\colon V(x,y)\leq \hat{v}\}$.
\end{algorithmic}
\end{algorithm}

\paragraph{Construction of estimators}
The construction of $\hat{G}_n(\cdot)$ can be flexible, since we only require~\eqref{eq:hatG_t}
to hold for any fixed $t\in \RR$ instead of holding uniformly. 
For example, assuming $\|\hat\ell\|_\infty, \|\hat{u}\|_\infty \leq M$ for some constant $M>0$, 
for any $\delta\in (0,1)$, we may set
\#\label{eq:hat_G_hoeffding}
\hat{G}_n(t) = \max\bigg\{ \frac{1}{\nc}\sum_{i=1}^\nc \ind_{\{V_i\leq t\}}\hat{\ell}(X_i)  ,~ 1-  \frac{1}{\nc}\sum_{i=1}^\nc \ind_{\{V_i> t\}}\hat{u}(X_i)   \bigg\}- M\sqrt{\frac{\log(2/\delta)}{2\nc}}.
\#
Hoeffding's inequality then ensures 
$\PP_{\cD}(\hat{G}_n(t) \leq G(t))\geq 1-\delta$ for any fixed $t\in \RR$. 
One may also construct $G_n(\cdot)$ by the 
Waudby-Smith--Ramdas bound \citep{waudbysmith2021estimating}; 
details are deferred to
Proposition~\ref{prop:wsr_bound} in Appendix~\ref{app:pac}. 
When $\hat\ell(\cdot)$ and $\hat u(\cdot)$
are obtained from $\cD_\train$, the upper bounds on $\|\hat\ell\|_\infty$ and $\|\hat u\|_\infty$
could be 
obtained from the training process.

\subsection{Theoretical guarantee}
\label{subsec:theory_whp}

\begin{theorem}\label{thm:whp}
  Assume that $(X_i,Y_i)_{i\in \cD_{\textnormal{\calib}}} \iid \PP$
  and the independent test point $(X_{n+1},Y_{n+1})\sim \TPP$ has
  likelihood ratio $w(x,y) = \frac{\ud \TPP}{\ud \PP}(x,y)$.  Fix a
  target level $\alpha\in(0,1)$ and a confidence level
  $\delta\in(0,1)$.  Suppose $\hat{G}_n(\cdot)$
  satisfies~\eqref{eq:hatG_t} for \reviseb{$G(\cdot)$
    in~\eqref{eq:def_G_t_formula}.}  Then the output of
  Algorithm~\ref{alg:whp} satisfies
  \#\label{eq:ci_whp_plugin_coverage}
\tilde\PP\big( Y_{n+1}\in \hat{C}(X_{n+1})\biggiven \cD_{\textnormal{\calib}}\big) 
\geq 1-\alpha-\hat\Delta
\#
with probability at least $1-\delta$ over $\cD_{\textnormal{\calib}}$, 
and  
\#\label{eq:whp_gap}
\hat\Delta = \max \Big\{ \EE \big[\big(\, \hat{\ell}(X)-w(X,Y)\big)_+ \big] ,
~ \EE\big[\big( \hat{u}(X)-w(X,Y)\big)_{-} \big]     \Big\}.
\#
Here the expectations are over an independent copy $(X,Y)\sim \PP$. 

If $\hat\ell(X)\leq w(X,Y)\leq \hat{u}(X)$ a.s., then $\hat\Delta=0$ and 
\$
\tilde\PP\big( Y_{n+1}\in \hat{C}(X_{n+1})\biggiven \cD_{\textnormal{\calib}}\big) 
\geq 1-\alpha.
\$
\end{theorem}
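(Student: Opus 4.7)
The plan is to chain a deterministic envelope inequality with a high-probability argument driven by the pointwise guarantee~\eqref{eq:hatG_t}. Throughout I condition on $\cD_\train$, so $V$, $\hat\ell$, $\hat u$, $G(\cdot)$, and the oracle threshold $v^{\ast}=\inf\{t:G(t)\ge 1-\alpha\}$ are deterministic.

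First, I would establish the deterministic inequality $F(t;V,\TPP)\ge G(t)-\hat\Delta$ for every $t\in\RR$. The change of measure $w=\ud\TPP/\ud\PP$ gives $F(t;V,\TPP)=\EE_{(X,Y)\sim\PP}[\ind_{\{V(X,Y)\le t\}}\,w(X,Y)]$, so the gaps between $F(t;V,\TPP)$ and the two expectations inside the max in~\eqref{eq:def_G_t_formula} equal $\EE[\ind_{\{V\le t\}}(\hat\ell(X)-w(X,Y))]$ and $\EE[\ind_{\{V>t\}}(w(X,Y)-\hat u(X))]$; dominating these by $\EE[(\hat\ell(X)-w(X,Y))_+]$ and $\EE[(\hat u(X)-w(X,Y))_-]$ and taking the maximum over the two yields $G(t)\le F(t;V,\TPP)+\hat\Delta$. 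This argument uses no almost-sure relation between $\hat\ell$, $w$, $\hat u$.

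Next, I would apply the pointwise assumption~\eqref{eq:hatG_t} at the $\sigma(\cD_\train)$-measurable point $t=v^{\ast}$, producing an event $E$ with $\PP_{\cD_\calib}(E)\ge 1-\delta$ on which $\hat G_n(v^{\ast})\le G(v^{\ast})$. Right-continuity of $G$ gives $G(v^{\ast})\ge 1-\alpha$, so on $E$ one has $\hat G_n(v^{\ast})\le 1-\alpha$ (with boundary cases at a jump of $G$ treated separately, e.g.\ under a mild continuity assumption at $v^{\ast}$). The monotonicity and right-continuity of $\hat G_n$ then force $\hat v=\inf\{t:\hat G_n(t)\ge 1-\alpha\}\ge v^{\ast}$, whence monotonicity of $G$ gives $G(\hat v)\ge G(v^{\ast})\ge 1-\alpha$.

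Combining the two steps, on the event $E$,
\[
\TPP\big(Y_{n+1}\in\hat C(X_{n+1})\bigm|\cD_\calib\big)=F(\hat v;V,\TPP)\ge G(\hat v)-\hat\Delta\ge 1-\alpha-\hat\Delta,
\]
which is the claim; when $\hat\ell\le w\le\hat u$ almost surely, both $(\hat\ell-w)_+$ and $(\hat u-w)_-$ vanish, so $\hat\Delta=0$ and the exact bound follows. The main obstacle is the transfer in the second step: the confidence guarantee on $\hat G_n$ is only pointwise in $t$, whereas the coverage uses the $\cD_\calib$-measurable threshold $\hat v$. I sidestep this by routing through the training-measurable oracle $v^{\ast}$ and exploiting the monotonicity of both $G$ and $\hat G_n$ to invert back to $\hat v$; the boundary case where $G$ jumps at $v^{\ast}$ is the only delicate point and can be absorbed by a short limiting argument or a mild regularity assumption on $G$.
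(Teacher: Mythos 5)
Your architecture is the same as the paper's: route through the oracle threshold $v^*=\inf\{t\colon G(t)\ge 1-\alpha\}$, show $\hat v\ge v^*$ with probability at least $1-\delta$, and then bound the coverage at the oracle level. Your Step 1, the deterministic envelope inequality $F(t;V,\TPP)\ge G(t)-\hat\Delta$ for every $t$, is correct and is a clean repackaging of what the paper does by writing $\hat q=\min\{\hat q_1,\hat q_2\}$ and bounding the two branches of the max in \eqref{eq:def_G_t_formula} separately; your Step 3 chain $F(\hat v)\ge G(\hat v)-\hat\Delta\ge G(v^*)-\hat\Delta\ge 1-\alpha-\hat\Delta$ is also fine once $\hat v\ge v^*$ is in hand.

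The problem is Step 2. From ``$\hat G_n(v^*)\le G(v^*)$'' and ``$G(v^*)\ge 1-\alpha$'' you conclude ``$\hat G_n(v^*)\le 1-\alpha$''; this is a non sequitur ($a\le b$ and $b\ge c$ do not give $a\le c$). Invoking \eqref{eq:hatG_t} at $t=v^*$ itself is in fact useless for proving $\hat v\ge v^*$: at $v^*$ the envelope $G$ has already reached $1-\alpha$, so an upper bound $\hat G_n(v^*)\le G(v^*)$ carries no information about whether $\hat G_n$ has crossed $1-\alpha$ at or before $v^*$. Moreover, even the conclusion $\hat G_n(v^*)\le 1-\alpha$ would not force $\hat v\ge v^*$: with equality $\hat G_n(t)=1-\alpha$ at some $t\le v^*$ one gets $\hat v\le t$, the wrong direction. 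The correct move---and the one the paper makes---is to invoke \eqref{eq:hatG_t} at $t=v^*-\epsilon$, where $G(v^*-\epsilon)<1-\alpha$ holds \emph{strictly} by definition of the infimum, so that $\PP(\hat v\le v^*-\epsilon)\le \PP(\hat G_n(v^*-\epsilon)\ge 1-\alpha)\le\delta$; since the events $\{\hat v\le v^*-\epsilon\}$ increase to $\{\hat v<v^*\}$ as $\epsilon\downarrow 0$, continuity of probability measures gives $\PP(\hat v\ge v^*)\ge 1-\delta$. You do flag that ``a short limiting argument'' is needed, but you misdiagnose the delicate point as a possible jump of $G$ at $v^*$: the gap is present even when $G$ is continuous, because the pointwise confidence guarantee must be applied at points where $G$ is strictly below $1-\alpha$, and those points accumulate at $v^*$ only from the left. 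With Step 2 replaced by this $\epsilon$-argument, your proof coincides with the paper's.
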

The proof of Theorem~\ref{thm:whp} is deferred to
Appendix~\ref{app:proof_thm_whp}.  
Almost sure bounds on the likelihood ratio yields exact
coverage. Otherwise, coverage is off by at most
$\hat \Delta$.  
    Again, $\hat \Delta$ is small if
  $\hat\ell(X)$ (resp. $\hat{u}(X)$) is below (resp. above) $w(X,Y)$
  most of the time. This is demonstrated in the simulation results
  presented in Figure~\ref{fig:whp_gap}. 

  \reviseb{Just as before, the miscoverage bound in
    Theorem~\ref{thm:whp} holds for any distributional shift and any
    inputs $\hat\ell(\cdot)$ and $\hat u(\cdot)$.  The conclusion also
    applies to inputs of the form $\hat\ell(x,y)$ and $\hat{u}(x,y)$
    without modification.
}

\paragraph{Application to counterfactual prediction}
To apply Algorithm~\ref{alg:whp} 
to counterfactual prediction 
under a pre-specified confounding level $\Gamma$, 
We  
plug in the estimated $\hat e(x)$ to obtain
$\hat \ell(x) $ and $\hat u (x)$ 
according 
to Table~\ref{tab:summary_up_lo}.  As before, $\hat \Delta$ can be
small as long as $\hat \ell(x)$ (resp. $u(x)$) falls below
(resp. above) $w(x)$ most of the time, even if $\hat e(x)$ has a
non-negligible estimator error.

\subsection{Sharpness}\label{subsec:sharp}
Besides validity, 
one may also be concerned with the sharpness of the 
method---indeed, one can always construct a valid but arbitrarily 
conservative prediction interval. 
Ideally, we desire a valid prediction set whose 
coverage is not 
too much larger than the prescribed level. 

In this section, we take a close look at the sharpness of 
Algorithm~\ref{alg:whp} by identifying 
the worst-case distributional shift. 
We study the sharpness of our method in two problems:
robust predictive inference 
and counterfactual inference 
under unmeasured confounding. 
They are treated in the same way when 
developing the validity results, 
but they actually have 
distinct identification sets and 
sharpness results.

To remove the nuisance in estimation, 
we fix the nonconformity score function 
and consider the 
asymptotic setting where $\hat{G}_n(t) \to G(t)$ and $\hat{\ell}\to \ell$, $\hat{u}\to u$, 
focusing on the generic conservativeness of our method.  
In this regime, the prediction set is  
$\hat{C}(X_{n+1})=\{y\colon V(X_{n+1},y)\leq \hat{v}\}$, where 
$\hat{v} = \inf\{t\colon G(t)\geq 1-\alpha\}$. 
When $(X_{n+1},Y_{n+1})\sim \tilde\PP$, 
the coverage of $\hat{C}(X_{n+1})$ is 
\$ 
\tilde\PP\big(Y_{n+1}\in \hat{C}(X_{n+1})\biggiven \cD\big) 
=  
\tilde\PP\big( V(X_{n+1},Y_{n+1}) \leq \hat{v} \biggiven \cD   \big) 
=  
F(\hat{v}\,;\,V,\tilde\PP).
\$ 
Therefore, given some identification set $\cP$ of distributions, 
the sharpness of $G(\cdot)$ relies on  
its difference from 
the worst-case c.d.f.  defined as
\#\label{eq:whp_worst_cdf}
F(t\,;\cP):= \inf_{\TPP\in \cP } F(t \,;V,\tilde\PP) = 
\inf_{\TPP\in \cP }~ \tilde\PP\big(V(X,Y)\leq t\big)
= \inf_{\TPP\in \cP} \EE\bigg[ \ind_{\{V(X,Y)\leq t\}}\frac{\ud \TPP}{\ud \PP}(X,Y)\bigg],
\quad t\in \RR.
\#
Here the last expectation is taken with respect to an independent copy $(X,Y)\sim \PP$. 
\revise{The closer $G(t)$ to 
the worst-case c.d.f.~\eqref{eq:whp_worst_cdf}, 
the closer $\hat{v}$ to 
$\inf\{t\colon F(t\,;\cP)\geq 1-\alpha\}$  
(the worst-case quantile), 
hence the sharper the procedure.}
We provide exact worst-case characterizations of~\eqref{eq:whp_worst_cdf}
in the two problems. 

\subsubsection{Sharpness as a robust prediction problem}
\label{sec:sharpness}
In the more general robust prediction problem, 
the identification set is 
\#\label{eq:id_marginal}
\cP(\PP,\ell,u) = \Big\{~ \tilde\PP\colon \ell(x) \leq \frac{\ud \TPP}{\ud \PP}(x,y)\leq u(x)~~ \PP\textrm{-almost surely}  \Big\}.
\#  

\begin{prop}\label{prop:tight_mgn}
For each $t\in \RR$, the worst-case distribution function in~\eqref{eq:id_marginal} is 
\#\label{eq:tight_mgn}
F\big(t\,;\cP(\PP,\ell,u) \big) = 
\max\Big\{ \EE\big[\ind_{\{V(X,Y)\leq t\}} {\ell}(X) \big],~ 
1- \EE\big[ \ind_{\{V(X,Y)>t\}}  {u}(X)  \big]  \Big\},
\#
where $\EE$ is the expectation over $(X,Y)\sim \PP$.
\end{prop}
 
The conservative distribution function $G(\cdot)$ constructed in~\eqref{eq:G_t}
coincides with the actual worst-case distribution function provided in~\eqref{eq:tight_mgn}.
Therefore, ruling out the estimation errors, 
the PAC-type procedure proposed in Section~\ref{sec:cal_interval} 
is sharp.

\subsubsection{Sharpness as a counterfactual inference problem}
\label{sec:sharpness_counterfactual}
Returning to the counterfactual inference problem,
the sharpness is a bit more subtle than 
in the robust prection formulation. 
The key difference is that the covariate shift 
is identifiable while only the shift in $\PP_{Y\given X}$ varies,  
leading to a potentially smaller identification set.

For clarity, we denote the unknown super-population 
as $(X,Y(0),Y(1),U,T)\sim \PP^\super$ and 
the observed distribution as $(X,Y,T)\sim\PP^\obs$. 
For a super-population $\PP^{\sup}$ to be meaningful, 
it should agree with $\PP^\obs$
on the observable: 
\#\label{eq:data_compat}
\PP_{X,Y,T}^{\textnormal{\super}} = \PP^\obs_{X,Y,T},
\# 
which is called the \emph{data-compatibility} condition in~\cite{dorn2021sharp}.
Along with the marginal sensitivity model,~\eqref{eq:data_compat}
characterizes the set of meaningful target distributions. 
Let us consider the counterfactual inference 
of $Y(1)$ for units in the control group. 
Letting $\PP = \PP_{X,Y(1)} = \PP^\obs_{X,Y(1)\given T=1}$
be the training distribution,  
we have the following sharp characterization of the identification set.  

\begin{prop}[Identification set]\label{prop:id_causal}
For the counterfactual inference of $Y(1)\given T=0$
under confounding level $\Gamma\geq 1$, 
we define the data-compatible  identification set as
\#\label{eq:data_comp_idset}
\cP  = \Big\{  \TPP = \PP^\super_{X,Y(1)\given T=0}\colon  ~ 
\PP^\super\text{ satisfies }\eqref{eq:gamma_sel_mgn}
\text{ and }\eqref{eq:data_compat} \Big\}.
\#
Then we have $\cP = \cP(\PP, f,\ell_0,u_0)$, where 
\#\label{eq:id_set_causal}
\cP(\PP, f,\ell_0,u_0) = \bigg\{  \TPP \colon \frac{\ud\TPP_X}{\ud \PP_X}(x) = f(x),~ \ell_0(x)\leq\frac{\ud\TPP_{Y(1)\given X}}{\ud \PP_{Y(1)\given X}}(y\given x) \leq u_0(x)~~ \PP\textrm{-almost surely}   \bigg\}.
\#
Writing $e(x) = \PP^\obs(T=1\given X=x)$, $p_0 = \PP^\obs(T=0)$
and $p_1 = \PP^\obs(T=1)$, 
we have $\ell_0(x) = 1/\Gamma $, $u_0(x) = \Gamma$, and 
$
f(x) = p_1(1-e(x))/[p_0 \cdot e(x)]
$.
\end{prop}
     
Employing similar arguments, 
each inferential target mentioned in Section~\ref{subsec:dist_shift}
corresponds to an identification set 
similar to~\eqref{eq:id_set_causal}, 
where the $X$-likelihood ratio is identifiable from the training distribution, 
and the conditional likelihood
ratio can be bounded by identifiable functions $\ell_0$ and $u_0$.

In light of Proposition~\ref{prop:id_causal},
we are interested in $F(\cdot\,;\cP(\PP,f,\ell_0,u_0))$ 
with general functions $f,\ell_0,u_0$.  
For any $x\in \cX$ and any $\beta\in[0,1]$, 
we denote the $\beta$-conditional quantile function of $\PP = \PP^\obs_{X,Y(1)\given T=1}$ (up to a.s.~equivalence) as 
\$
q(\beta\,;x,\PP) = \inf\big\{z\colon \PP(Y\leq z\given X=x) \geq \beta \big\}.
\$ 

\begin{prop}\label{prop:tight_causal}
For each $t\in \RR$,  the worst-case distribution function in~\eqref{eq:id_set_causal} is 
\#\label{eq:causal_worst}
F\big(t\,;\cP(\PP,f,\ell_0,u_0)\big) = \EE\big[  \ind_{\{V(X,Y)\leq t\}} w^*(X,Y)   \big],
\#
where the expectation is with respect to generic random variables $(X,Y)\sim\PP$, and 
\#\label{eq:causal_sharp_w*}
w^*(x,y) = f(x) \cdot \big[\ell_0(x) \ind_{\{V(x,y)< q (\tau(x)\,;x,\PP )\}} + 
\gamma_0(x) \ind_{\{V(x,y)=q(\tau(x)\,;x,\PP)\}} +
u_0(x) \ind_{\{V(x,y)>q(\tau(x)\,;x,\PP)} \big].
\#
Here $\tau(x) = (u_0(x)-1)/(u_0(x)-\ell_0(x))$ and $\gamma_0(x)$ is chosen as nonzero when $\PP(V(x,Y)=q(\tau(x)\,;x,\PP)\given X=x)>0$  such that $\EE[w^*(x,Y)\given X=x] = f(x)$  for $\PP$-almost all $x\in \cX$. 
\end{prop}

As indicated by Proposition~\ref{prop:tight_causal}, 
the worst-case likelihood ratio function $w^*(x,y)$ 
is separated into two regions: 
one taking the lower bound $\ell(x)$ and the other taking the upper bound $u(x)$. 
This is similar to~\eqref{eq:id_marginal}
\revise{with the proviso that} 
the boundary $q(\tau(x)\,;x,\PP)$ is more complicated.  

It is possible to attain sharpness with a few more efforts. 
Under the marginal $\Gamma$-selection~\eqref{eq:gamma_sel_mgn}, 
one can check that  
$\tau(x)\equiv \tau$ for some constant $\tau\in(0,1)$. 
Therefore,  
letting $G(t) = F(t\,;\cP(\PP,f,\ell_0,u_0))$ as in~\eqref{eq:causal_worst} 
yields a sharper procedure. 
If  
a tight lower bound $\hat{G}_n(t)$ for $G(t)$ could be constructed, 
the worst-case coverage would equal $1-\alpha$ asymptotically. 
Such modifications are beyond the scope of our current work.

\subsection{Numerical experiments}\label{subsec:simu_whp}

We apply Algorithm~\ref{alg:whp} to 
the same settings as of Section~\ref{subsec:simu_mgn} 
with a fixed confidence level $\delta=0.05$
and the  W-S--R bound as detailed in Proposition~\ref{prop:wsr_bound}. 
The empirical coverage is evaluated on $10000$ test samples 
for  $N=1000$ independent runs. 
The $0.05$-th quantile of empirical coverage 
for various target level $1-\alpha$ 
are summarized in Figure~\ref{fig:whp_qt_est}, 
where estimated $\hat{\ell}(\cdot)$ and $\hat{u}(\cdot)$ are 
used. 
Results with 
the ground truth  $\ell(\cdot)$ and $u(\cdot)$ are
in Figure~\ref{fig:whp_qt_known} in Appendix~\ref{app:simu_predict_whp}.  
The marginal coverage 
is presented in Figures~\ref{fig:whp_cov_known} and~\ref{fig:whp_cov_est}
in Appendix~\ref{app:simu_predict_whp}.  

\begin{figure}[htbp]
\centering 
\includegraphics[width=5.5in]{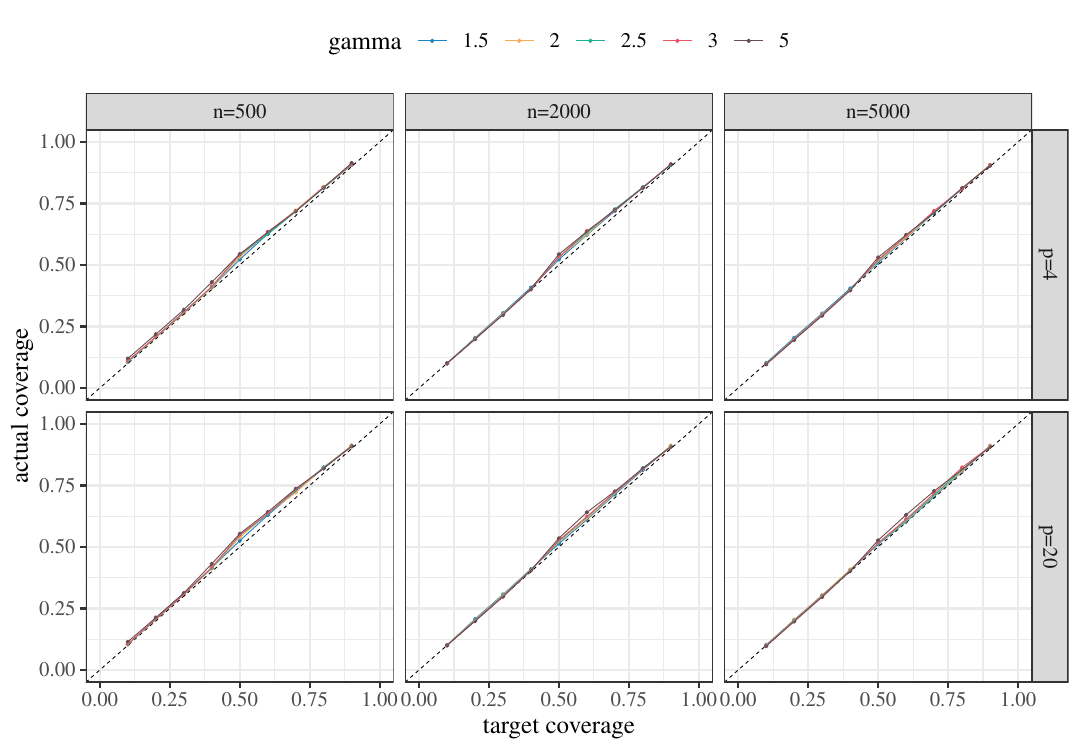}
\caption{$0.05$-th quantile of empirical coverage 
in Algorithm~\ref{alg:whp} when $\hat \ell(\cdot)$
and $\hat u(\cdot)$ are estimated. 
Each subplot corresponds to a configuration of 
sample size $n = n_\calib$
and dimension $p$. Within each subplot, 
each line corresponds to a confounding level $\Gamma$. }
\label{fig:whp_qt_est}
\end{figure}
 
\paragraph{Validity}
In Figure~\ref{fig:whp_qt_est}, 
the $0.05$-th quantiles of empirical coverage   
all lie above the $45^\circ$-line, 
which confirms 
the validity of the proposed procedure.   

\paragraph{Sharpness}
The quantile-versus-target lines in Figure~\ref{fig:whp_qt_est}
almost overlap with the $45^\circ$-line, since by design
the data generating distribution is a near-worst case. 
This shows the sharpness of our method and
is in accordance with the theoretical justification 
in Section~\ref{subsec:sharp}.

\paragraph{Robustness to estimation error}
Similar to Algorithm~\ref{alg:mgn},
the PAC-type algorithm shows robustness to the 
estimation error. 
In Figure~\ref{fig:whp_gap},
we numerically evaluate  the 
gap $\hat\Delta$ in~\eqref{eq:whp_gap} as well as 
the estimation errors 
$\|\hat{\ell}(X)-\ell(X)\|_1$ and $\|\hat{u}(X)-u(X)\|_1$. 
Even though the estimations error (dashed lines)
can be nonnegligibly off, 
the actual gap $\hat\Delta$  defined in Theorem~\ref{thm:whp}
is small.

\begin{figure}[H]
\centering 
\includegraphics[width=1\linewidth]{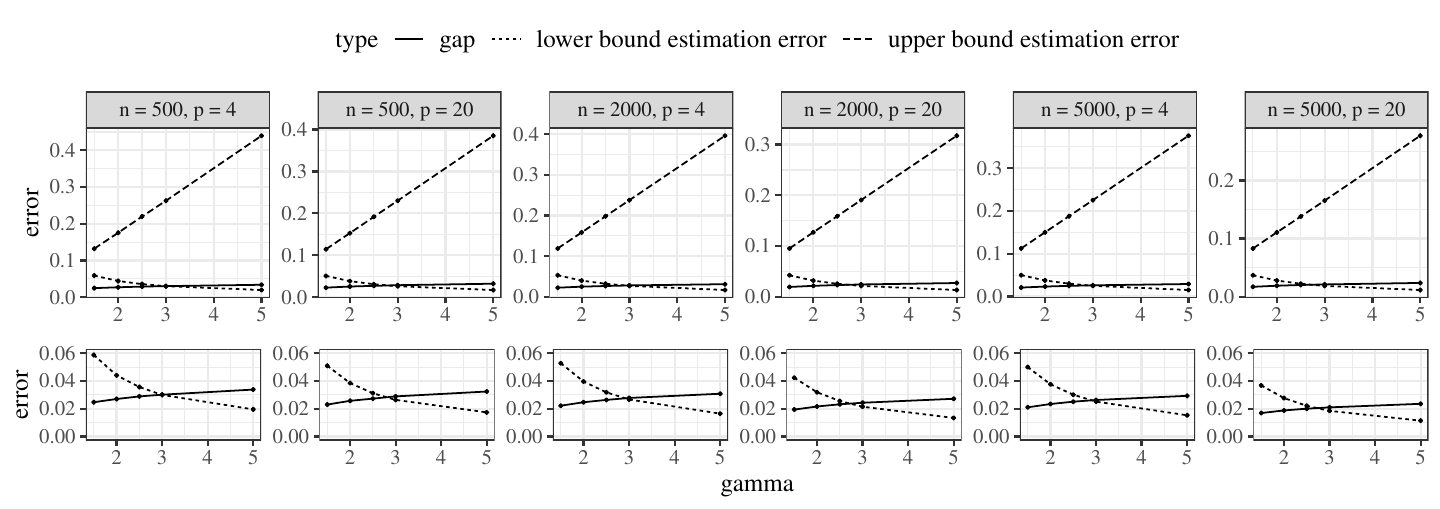}
\caption{Empirical gap and estimation errors. The plots in the 
second row zoom in on the gaps. 
Each plot corresponds to a sample size $n = n_\calib$ and a dimension $p$. 
The long-dashed lines are $\|\hat{u}(X)-u(X)\|_1$,
the short-dashed lines are $\|\hat{\ell}(X)-\ell(X)\|_1$, and
the solid lines are $\hat\Delta$ defined in Theorem~\ref{thm:whp}.}
\label{fig:whp_gap}
\end{figure}

\section{Sensitivity analysis of ITEs}
\label{sec:sens_analysis}
 
We are now ready to present our framework of sensitivity 
analysis for ITEs. 
In the following, we first show 
the construction of prediction sets for ITEs,  
and then invert the prediction sets to 
output the $\Gamma$-values. Finally we 
show the statistical meaning of the 
$\Gamma$-values from  
\revise{a hypothesis testing perspective}.

\subsection{Robust predictive inference for ITEs}
\label{subsec:pred_ite}
We consider  
two cases:  
1) if only one outcome 
is missing,
we use the prediction set 
for the counterfactual 
to form that for the ITE; 
2) if both outcomes  are missing,
we combine prediction sets for
both $Y(1)$ and $Y(0)$ to form that for the ITE. 

\paragraph{One outcome missing}
Given a test sample $X_{n+1}$ with $T_{n+1} = w$, for $w\in \{0,1\}$,
the potential outcome $Y_{n+1}(w)$ is observed while $Y_{n+1}(1-w)$ is missing. 
Using the method introduced in Algorithm~\ref{alg:mgn}
or~\ref{alg:whp}, 
we construct a prediction set 
$\hat{C}_{1-w}(X_{n+1}, \Gamma, 1-\alpha)$
for $Y_{n+1}(1-w)$ with coverage level $1-\alpha$ 
conditional on $T_{n+1} = w$ and the sensitivity parameter $\Gamma$. 
We then create the prediction set for ITE as 
\begin{equation}\label{eq:ite_set}
\hat{C}(X_{n+1},\Gamma ) = 
\begin{cases}
    Y_{n+1}(1) - \hat{C}_0(X_{n+1},\Gamma,1-\alpha),\quad \text{if }w=1,\\
    \hat{C}_1(X_{n+1},\Gamma,1-\alpha) - Y_{n+1}(0),\quad \text{if }w=0.
\end{cases}
\end{equation}

\paragraph{Both outcomes missing}
When both outcomes are missing, the task is more challenging 
and we use a Bonferroni correction to construct the preditive set for ITEs. 
Let $ X_{n+1} $ be a test sample with both $Y_{n+1}(1)$ and $Y_{n+1}(0)$ missing.
Using Algorithms~\ref{alg:mgn} or~\ref{alg:whp}, 
we construct prediction set $\hat{C}_w(X_{n+1},
\Gamma, 1-\alpha/2, \delta/2)$
at confounding level $\Gamma$, 
where $\delta/2$ is the input confidence level if Algorithm~\ref{alg:whp} is used. 
Then we let
\#\label{eq:ite_c_bonferroni}
\hat{C}(X_{n+1}, \Gamma ) = 
\Big\{y-z\colon 
y \in \hat{C}_1(X_{n+1},\Gamma, 1-\alpha/2, \delta/2)~\text{and}~
z \in \hat{C}_0(X_{n+1},\Gamma, 1-\alpha/2, \delta/2)
\Big\}. 
\#
The coverage guarantee of the prediction sets 
in the above two cases 
directly follows from 
the validity of counterfactual prediction intervals; 
Propositions~\ref{prop:ite_mgn} and~\ref{prop:ite_whp} are included
for completeness. 

In practice, though, the Bonferroni
correction~\eqref{eq:ite_c_bonferroni} might be too
conservative.~\citet[Section 4.2]{lei2020conformal} introduced a nested
method that efficiently combines the counterfactual intervals to form
the interval for the ITE; their method can also be applied here.

\subsection{The $\Gamma$-value:  inverting nested prediction sets}

For a new unit, we consider the set of hypotheses
indexed by $\Gamma  \in[1,\infty)$:
\#\label{eq:hypothesis}
H_0(\Gamma) \colon \quad 
Y_{n+1}(1)-Y_{n+1}(0) \in C~~\textrm{and}~~ \PP^\super \in \cP(\Gamma).
\#
If we reject $H_0(\Gamma )$,  
we are saying that either $Y(1)-Y(0)\notin C$ or
the observational data has at least confounding level $\Gamma $.
The pre-specified set $C$ determines the hypothesis one
wishes to test, or equivalently, the causal conclusion one 
wishes to make. For example, setting $C = \{0\}$, we are 
tesing whether or not the ITE is exactly zero; 
setting $C = (-\infty,0]$, we wish to test whether or not
the ITE is negative. 

The hypothesis~\eqref{eq:hypothesis} involves 
the random variable $Y_{n+1}(1)-Y_{n+1}(0)$, 
and there are two ways to treat such hypotheses:
we may either regard it as a \emph{deterministic} 
hypothesis, which means the condition in~\eqref{eq:hypothesis}
holds almost surely, or as a \emph{random} hypothesis such that $H_0(\Gamma)$ 
is true with some probability. In both cases, 
the type-I error is defined as  
rejecting a true hypothesis. 
That is, if we treat them as random hypotheses, 
a
type-I error is  
$H_0(\Gamma)$ being true {\em and}
rejected at the same time.  

Hereafter, for the true super population $\PP^\super$, 
we denote 
\$
\Gamma^*  = \inf \big\{\Gamma\colon \PP^\super \in \cP(\Gamma) \big\}, 
\$
and assume without loss of generality that $\PP^\super \in \cP(\Gamma^*)$. 
Our goal is to test the set of hypotheses $\{\cH_0(\Gamma)\}_{\Gamma \ge 1}$
simultaneously---a multiple testing problem.
Put
\$
\cH_0 = \{\Gamma \colon H_0(\Gamma) \text{ is true}\}.
\$ 
In the case of deterministic hypotheses, 
$\cH_0$ is either an empty set 
(if $Y_{n+1}(1)-Y_{n+1}(0)\in C$ a.s.~is false) 
or an interval $[\Gamma^*,\infty)$ 
(if $Y_{n+1}(1)-Y_{n+1}(0)\in C$ a.s.~is true). 
In the case of random hypotheses, $\cH_0$ is a random set---$\cH_0$
is an empty set if $Y_{n+1}(1)-Y_{n+1}(0)\notin C$, 
or an interval $[\Gamma^*,\infty)$ when $Y_{n+1}(1)-Y_{n+1}(0)\in C$.

Let $\hat{C}(X_{n+1},\Gamma)$ be the prediction set 
for the ITE constructed as in Section~\ref{subsec:pred_ite}
with the confounding level $\Gamma\geq 1$ and the target coverage $1-\alpha$. 
In the case of one missing outcome, $\hat{C}(X_{n+1},\Gamma)$
implicitly depends on the observed outcome as well.  
Note that the prediction sets are nested in $\Gamma$ 
in the following sense: 
for each fixed coverage level $\alpha\in(0,1)$
(and confidence level $\delta$ if necessary), it holds that 
$\hat{C}(X_{n+1},\Gamma )\subset \hat{C} (X_{n+1},\Gamma')$ 
for any $\Gamma'\geq \Gamma \geq 1$. 
Moving on, we define the rejection set as \#\label{eq:rej_set} \cR =
\big\{ \Gamma \colon ~C\cap \hat{C} (X_{n+1}, \Gamma)=\varnothing
\big\}.  \# That is, we reject all $H_0(\Gamma)$ for which
$\hat{C}(X_{n+1},\Gamma)$ does not overlap with the target set $C$.
We now consider the critical value defined as
$ \hat\Gamma \defn \sup \cR, $ with the convention that
$\hat\Gamma= 1$ if $\cR=\varnothing$.
\revise{$\hat\Gamma$ is the formal definition of the
  $\Gamma$-value we introduced in Section~\ref{subsec:intro_gval}.
  The
  $\Gamma$-value is a quantity specific to a unit (instead of a
  population quauntity).  
Due to the variability in ITE, 
it might not converge to a constant value 
as the training sample size goes to infinity. }

\begin{prop}[Simultaneous control]\label{prop:fwer}
Fix a target level $\alpha$ (and a confidence level $\delta$ if necessary). 
For any $\Gamma\geq 1$, 
let $\hat{C} (X_{n+1},\Gamma)$ be the output of Algorithm~\ref{alg:mgn}
or~\ref{alg:whp} with the confounding level $\Gamma$.  
The marginal probability of making a false rejection  
can be controlled as  
\$
\textrm{mErr}\defn \PP ( \cR\cap \cH_0 \neq \varnothing  ) 
\leq \PP \big(  Y_{n+1}(1)-Y_{n+1}(0) \notin \hat{C} (X_{n+1}, \Gamma^*) \big),
\$
where the probability $\PP$ is taken over $\cD_{\calib}$ and the new sample on both sides. 
\revise{Furthermore}, 
the $\cD_{\calib}$-conditional probability
of making an error satisfies
\$
\textrm{dErr}\defn \PP ( \cR\cap \cH_0 \neq \varnothing \given \cD_{\calib} ) 
\leq \PP \big(  Y_{n+1}(1)-Y_{n+1}(0) \notin 
\hat{C} (X_{n+1}, \Gamma^*) \biggiven \cD_{\calib} \big).
\$
\end{prop}

\revise{By Proposition~\ref{prop:fwer},}
as long as the predictive inference achieves valid coverage 
at any {\em fixed} confounding level, 
without 
any adjustment of multiple testing, 
we achieve simultaneous control over the
sequence of testing problems.

\revise{The perspective of hypothesis testing  
provides
an interpretation of the $\Gamma$-value:}
the risk of  $Y_{n+1}(1)-Y_{n+1}(0)\in C$ 
being true but rejected at $\hat\Gamma \geq \Gamma^*$
is (approximately) under $\alpha$. 
In the case of testing deterministic
hypotheses, when $Y(1) - Y(0)\in C$ \revise{almost surely},
$\hat{\Gamma}$ is a $(1-\alpha)$ lower confidence bound for $\Gamma^*$.
It is in accordance with the common practice of sensitivity analysis 
to find a critical value $\hat\Gamma$ that inverts a causal conclusion
and check whether $\hat\Gamma$ is too large to be true 
in order to assess the robustness of such conclusion.

\subsection{Types of null hypotheses}
With the general recipe of assessing robustness of causal conclusions 
on ITE, 
we now provide concrete examples of the target set $C$ and the 
corresponding forms of $\hat{C}(X_{n+1},\Gamma)$. 

\paragraph{Sharp null}
One might be interested in the sharp null, i.e., whether 
the individual treatment effect is zero. 
In this case, one could let $C=\{0\}$, and the prediction set
can take any form. 
Rejecting $H_0(\Gamma)$ is saying that $Y_{n+1}(1)\neq Y_{n+1}(0)$ 
unless the true confounding level satisfies $\Gamma^*>\Gamma$.

\paragraph{Directional null}
If we presume the stochastic nature of ITEs, 
the sharp null might be implausible. 
In this case, one may be interested in the directional null with 
$Y_{n+1}(1)-Y_{n+1}(0)\leq 0$, which is equivalent to 
choosing $C=(-\infty, 0]$. 
Rejecting $H_0(\Gamma)$ here is saying that 
the individual treatment effect is positive unless $\Gamma^*>\Gamma$. 
More generally, one might consider $C=(-\infty, a]$ for some 
$a\in \RR$ to test 
whether the ITE is above a certain value. 
To make sense of the multiple testing procedure, 
one-sided prediction intervals are constructed for ITE, i.e., 
$\hat{C}(X_{n+1},\Gamma) = [\hat{I}(X_{n+1},\Gamma),\infty)$ for 
some $\hat{I}(X_{n+1},\Gamma)\in \RR$. 
It 
can be achieved by one-sided prediction intervals 
$\hat{C}_0(X_{n+1},\Gamma,1-\alpha,1-\delta ) = (-\infty, \hat{Y}_0(X_{n+1},\Gamma)]$
if $Y_{n+1}(0)$ is missing 
and 
$\hat{C}_1(X_{n+1},\Gamma,1-\alpha,1-\delta ) = [\hat{Y}_1(X_{n+1},\Gamma),\infty)$
if $Y_{n+1}(1)$ is missing, 
with a Bonferroni correction as 
introduced in Section~\ref{subsec:pred_ite} 
if both outcomes are missing. 
The $\Gamma$-value is hence the smallest $\Gamma$ such that 
the two prediction intervals overlap.

\subsection{\revisea{Numerical experiments}}
\label{subsec:simu_sens}
 
We focus on the ATT-type inference for the directional null 
\$
H_0(\Gamma)\colon Y(1)-Y(0)\leq 0~~\text{and}~~ \PP^\super\in \cP(\Gamma).
\$
The test sample is from $\PP_{X,Y(0),Y(1)\given T=1}$,
\revise{for which we observe $(X_{n+1},Y_{n+1}(1))$ 
and would like to predict $Y_{n+1}(0)$.} 

We fix $n_\train = n_\calib = 2000$ and $p=4$. 
The covariates $X$, unobserved confounders $U$
and counterfactual $Y(0)$ (instead of $Y(1)$) are generated 
in
the same way as in \revisea{Section~\ref{subsec:simu_mgn}}. 
The treatment mechanism $e(x,u)$ is also 
the same as in~\eqref{eq:simu_prop}
with confounding level $\Gamma \in \{1.2, 1.4, \dots, 2\}$. 
The training data are $(X_i,Y_i(0))$ for those $T_i=0$. 
We generate $Y(1)$ in two ways: 1) $Y (1)-Y (0)\equiv a$ (fixed ITE) and 
2) $Y(1) - Y(0) = a\cdot U$ (random ITE). Here  $a$ ranges in  
$\{-1,-0.5, 0, 0.5, 1\}$. 

Fixing level $\alpha = 0.1$ and $\delta=0.05$ (for Algorithm~\ref{alg:whp}), 
for each fixed  $\Gamma\geq 1$, we construct a
one-sided ATT-type prediction interval for $Y(0)$, 
which takes the form $\hat{C}(X_{n+1},\Gamma) = (-\infty, \hat{Y}(X_{n+1},\Gamma)]$.
The prediction interval is obtained 
via the non-conformity score 
\#\label{eq:score_oneside_cqr}
V(x,y) = y - \hat{q} (x,1-\alpha).
\# 
We reject no hypotheses if $Y_{n+1}(1) \leq \hat{Y}(X_{n+1},1)$;
otherwise, we
reject all $H_0(\Gamma)$ such that $Y_{n+1}(1) > \hat{Y}(X_{n+1},\Gamma)$,  
hence the rejection set is $\cR = [1, \hat\Gamma)$ for some $\hat\Gamma>1$, 
which we define as the $\Gamma$-value. 

\subsubsection{FWER control} 

We evaluate the empirical FWER, which is the proportion of making a false rejection
among $\{H_0(\Gamma)\}_{\Gamma\geq 1}$, 
averaged over all test samples in all $N=1000$ independent runs.  
The results with estimated $\hat{\ell}(\cdot)$ and $\hat{u}(\cdot)$ are presented
in \reviseb{Figures~\ref{fig:simu_ss_fwer} 
and~\ref{fig:simu_ss_fwer_pac}, 
showing control of the FWER 
even when the likelihood ratio bounds are estimated}; 
we omit the case of fixed ITE 
at $a > 0$ since the hypotheses $H_0(\Gamma)$ are always false.
The results with the ground truth $\ell(\cdot)$ and $u(\cdot)$ 
are in Appendix~\ref{app:simu_sens}. 


\begin{figure}[htbp]
\centering
\begin{subfigure}[t]{0.46\linewidth}
    \centering
    \includegraphics[width=0.8\linewidth]{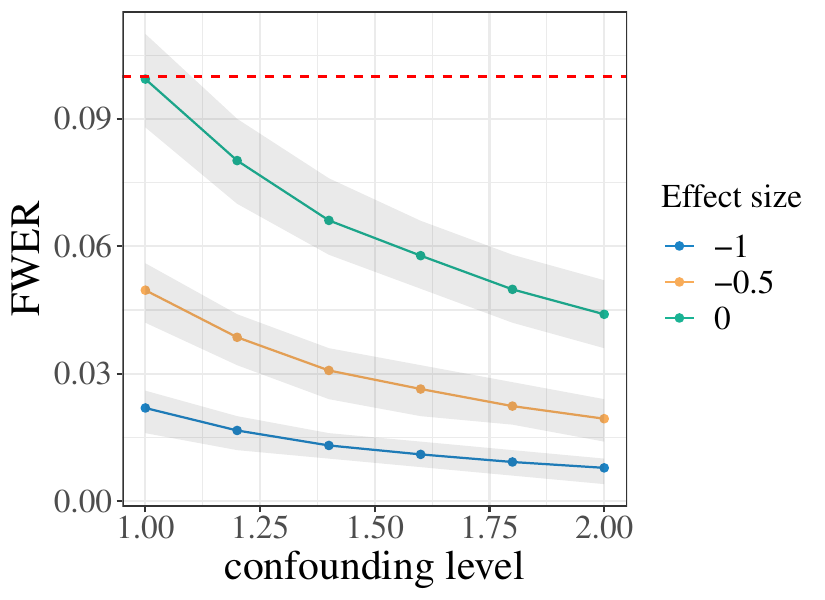}
\end{subfigure}
\begin{subfigure}[t]{0.46\linewidth}
    \centering
    \includegraphics[width=0.8\linewidth]{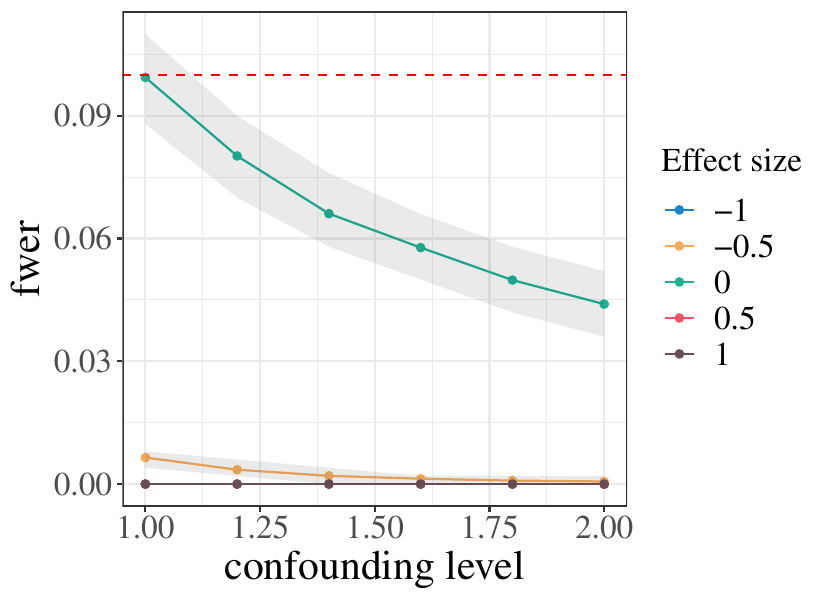}
\end{subfigure}
\caption{Empirical FWER of Algorithm~\ref{alg:mgn}. The effect size $a$ ranges in $\{-1,-0.5,0,0.5,1\}$. 
The solid lines are averaged over $N=1000$ runs, while the $0.25$-th 
and $0.75$-th quantiles form
the shaded area. 
}\label{fig:simu_ss_fwer}
\end{figure}

\begin{figure}[htbp]
\centering
\begin{subfigure}[t]{0.46\linewidth}
    \centering
    \includegraphics[width=0.8\linewidth]{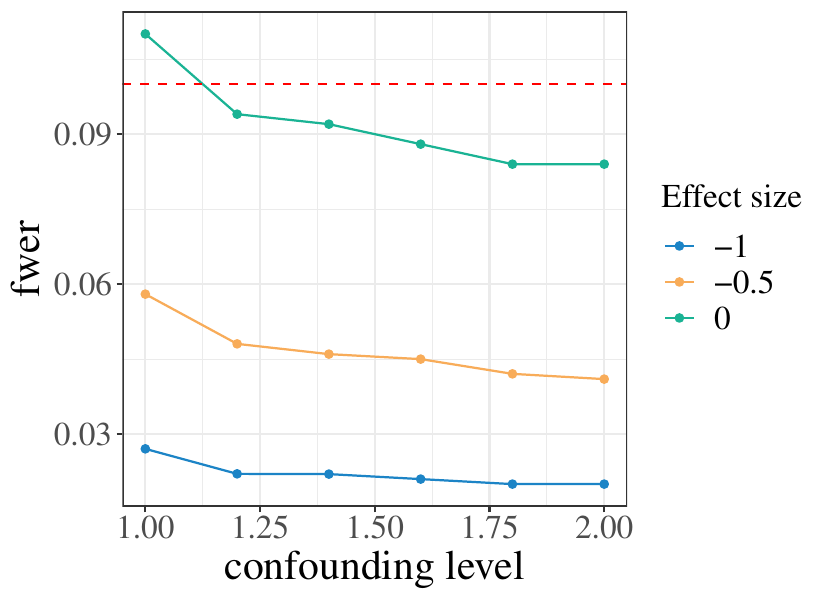}
\end{subfigure}
\begin{subfigure}[t]{0.46\linewidth}
    \centering
    \includegraphics[width=0.8\linewidth]{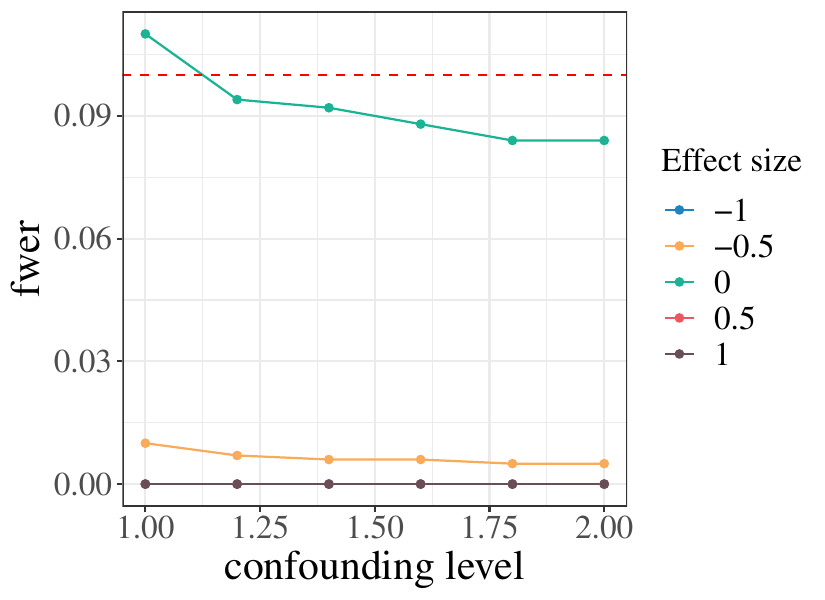}
\end{subfigure}
\caption{$0.05$-th quantile of empirical FWER using 
Algorithm~\ref{alg:whp} with estimated $\hat\ell(\cdot)$ and $\hat{u}(\cdot)$.
with the effect size $a$ ranging in $\{-1,-0.5,0,0.5,1\}$ for 
fixed ITE (left) and random ITE (right).
}\label{fig:simu_ss_fwer_pac}
\end{figure}

\subsubsection{\revise{$\Gamma$-values} } 
We plot the 
estimated survival function $\hat \Gamma$
defined as $S(\Gamma) =  \PP(\hat{\Gamma} > \Gamma)$, 
which  
characterizes the proportion of test units 
that are identified as positive ITE 
with each confounding level $\Gamma$.
Figure~\ref{fig:simu_gval_mgn} presents
the results from  one run of the procedure
with Algorithm~\ref{alg:mgn},
where we focus on the random ITE: $Y(1)-Y(0)=a\cdot U$.

\begin{figure}[ht]
\includegraphics[width=\linewidth]{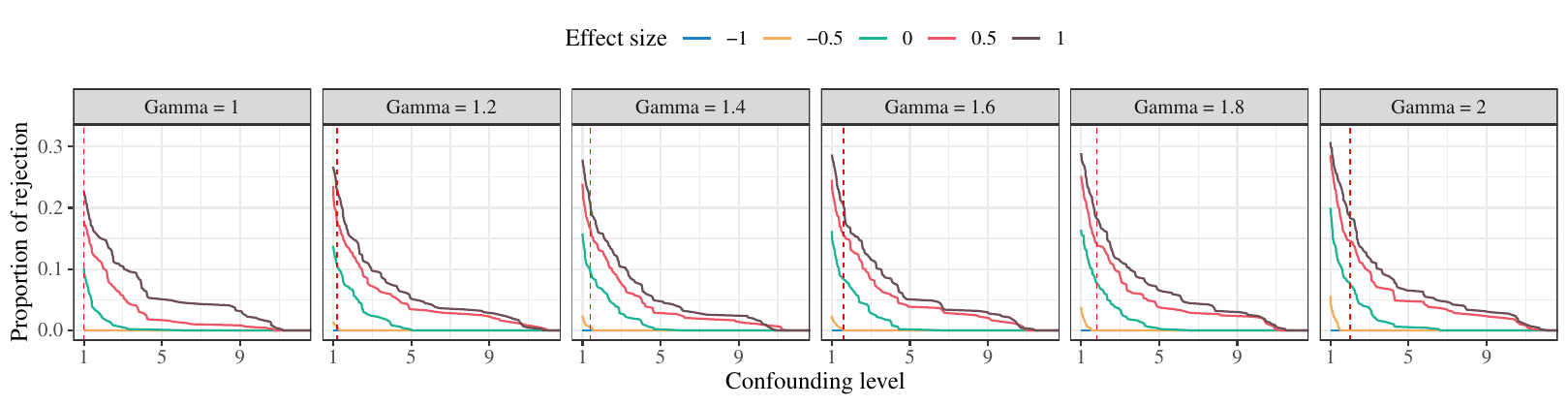}
\caption{ 
Empirical evaluation of $S(\Gamma)$
reported by (one run of) the sensitivity analysis procedure with Algorithm~\ref{alg:mgn}. The red dashed vertical lines are the true confounding levels.}
\label{fig:simu_gval_mgn}
\end{figure}

To see how to interpret these plots,
let us consider the example where
$a = 1$ and the true confounding level
is $1.6$. We can see that around $10\%$ of the 
samples have a $\Gamma$-value greater
than $2.5$, and $5\%$ have a $\Gamma$-value
greater than $5$, showing strong evidence
for positive ITEs.
Note also that the ITE is always zero when $a=0$, 
and the $\Gamma$-value should be a $90\%$ lower confidence 
bound for the true confounding level. In Figure~\ref{fig:simu_gval_mgn},
we indeed observe that for around $90\%$ of the samples,
the $\Gamma$-value is below the true confounding level (see the
green curves).

With random ITEs, 
both 
the magnitude of actual ITE  
and  
the gap between observed outcomes in treated and control groups
increase  with the effects size $a$. 
Within each subplot, 
the reported  $\Gamma$-values become larger
as the effect size increases.  
Thresholding at the true confounding level $\Gamma$, we see that 
larger magnitude of true effects also makes it easier to detect 
positive ITEs at the correct confounding level $\Gamma$.

The results
from one run of Algorithm~\ref{alg:whp} \revise{are} 
in Figure~\ref{fig:simu_gval_whp}. 
The patterns are 
similar to \revise{Figure~\ref{fig:simu_gval_mgn}}
in general, 
except that it is sometimes sharper than Algorithm~\ref{alg:mgn} 
and provides slightly stronger evidence 
against unmeasured confounding.

\begin{figure}[ht]
\includegraphics[width=\linewidth]{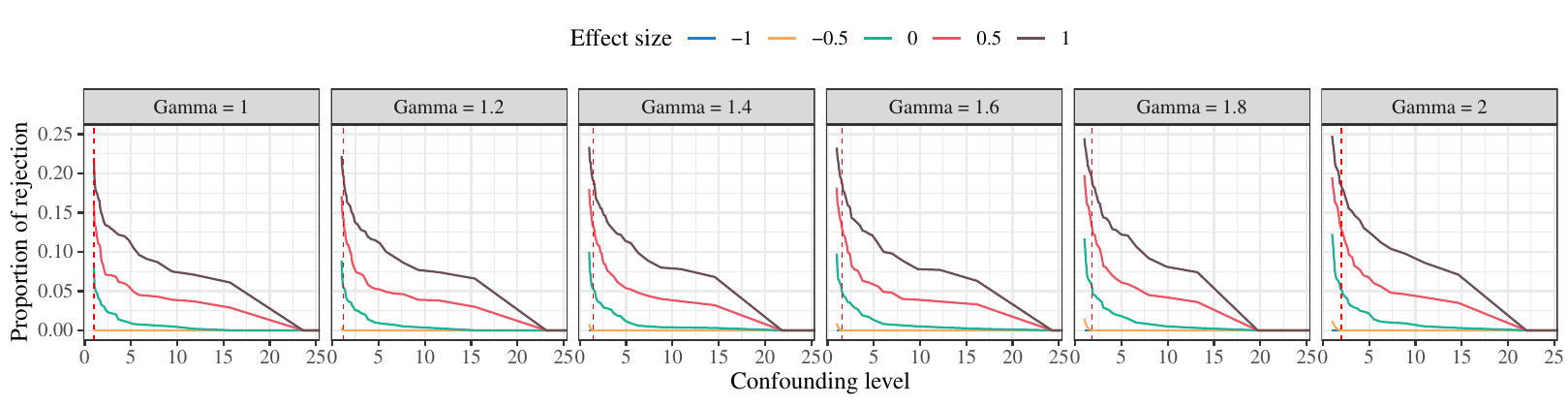}
\caption{ 
Empirical evaluation of $S(\Gamma)$ reported by 
(one run of) the sensitivity analysis procedure with Algorithm~\ref{alg:whp}. The red dashed vertical lines are the true confounding levels.}
\label{fig:simu_gval_whp}
\end{figure}

In Figure~\ref{fig:simu_gval_whp}, 
some test units have large $\Gamma$-values 
especially when the effect size is positive. 
Such strong evidence would happen if 
there is a \revise{large} gap between the observed $Y(1)$
and the typical behavior of $Y(0)$
\revise{predicted with} the training data---thus, 
the only way our procedures can 
output a prediction interval that overlaps with $Y(1)$
is  $\hat{C}(X,\Gamma) = \{V(X,y)\leq \hat{v}\} =\RR$, 
where $\hat{v}=\infty$. 
In that case, 
there is a certain value $\Gamma_\infty(X)$ 
such that $\hat{v}=\infty$ once $\Gamma \geq \Gamma_\infty(X)$. 
Since 
Algorithm~\ref{alg:whp} outputs a same $\hat{v}$ for 
all test samples, 
the value $\Gamma_\infty$ is the same 
for all such extreme individuals,
leading to steep tails 
in Figure~\ref{fig:simu_gval_whp}. 
In contrast, 
Algorithm~\ref{alg:mgn} returns 
different $\Gamma_\infty(X)$ 
for different individuals, 
\revise{and produces a smoother
curve (Figure~\ref{fig:simu_gval_mgn}).}

\subsubsection{False discovery proportions}

\revise{We also track the empirical false discovery proportion  
\$
\text{FDP}(\Gamma) = \frac{ \big|\{j\in \cD_{\test} \colon \hat\Gamma > \Gamma, ~
Y_j(1)\leq Y_j(0)\} \big| }{  \big|\{j\in \cD_{\test} \colon \hat\Gamma > \Gamma\} \big| },
\quad \Gamma\geq 1
\$
for random ITE with $a\neq 0$, 
which is the proportion of false rejections 
among test units that are rejected at confounding level $\Gamma$. 
With both the two procedures, 
we find that $\text{FDP}(\Gamma)$ is always zero---the
units that survive certain levels of adjustment for confounding 
all have positive ITE. 
It could be explained by the conservativeness of the procedure: 
to survive the adjustment, 
the observed $Y(1)$ needs to be
larger than the whole $1-\alpha$ prediction interval 
for $Y(0)$.
Therefore, a unit that survives the adjustment 
is much more likely 
to have a positive ITE, leading to vanishing FDPs. 
}

\section{Real data analysis}
\label{sec:synthetic}

\subsection{Counterfactual prediction on a semi-real dataset}\label{subsec:semi_real}
We consider an observational study dataset 
from~\citet{carvalho2019assessing}, based on which
we generate confounded
synthetic potential outcomes,  
and conduct the ATE-type prediction of $Y(1)$.  

Randomly splitting the original dataset into 
two folds of sizes $|\cZ_1|=2078$ and $|\cZ_2|=8313$ \revise{respectively}, 
we use the samples in $\cZ_1$ to fit a regression function 
$\hat\mu_0(x)$ for $\EE[Y(0)\given X=x,T=0]$ 
and a propensity score function $\hat{e}(x)$ for $\PP(T=1\given X=x)$. 
We sample $n = 10000$ covariates 
$\{X_i\}_{1\leq i\leq n}$ from $\cZ_2$ with replacement. 
Then i.i.d.~counterfactuals are generated via 
\$
Y_i(1) = \hat\mu_0(X_i) + \tau(X_i) + U_i,\quad Y_i(0) 
= \hat\mu_0(X_i) - U_i, \quad U_i \sim N(0,0.2^2),\quad i=1,\dots,n,
\$
where the conditional treatment effect function $\tau(x)$ is specified the same 
way as in equation (1) of \cite{carvalho2019assessing}. 
The propensity scores are specified as $e(X_i) := \hat{e}(X_i)$. 
For each confounding level $\Gamma \in\{1.5, 2, 2.5, 3, 5\}$, 
both $e(X_i,U_i)$ 
and $T_i$
are generated the same way as in Section~\ref{subsec:simu_mgn}.  

We then conduct counterfactual inference on the synthetic dataset 
$\cD_\obs = \{Y_i,X_i,T_i\}_{1\leq i \leq n}$. 
We randomly split $\cD_\obs$ into three folds with $1:2:1$ sizes. 
The treated samples in the first fold are used as $\cD_\train$. 
The treated samples in the second fold are used as $\cD_\calib$, 
so that $\cD=\cD_\train\cup \cD_\calib$.
All samples in the third fold (where we have ground truth even for those in the control group) 
are used as test samples. 
The process is repeated $N=1000$ times, 
where 
there are approximately $|\cD_{\calib}| = 1900$ calibration samples 
fed into the procedures and $2500$ test samples.

Figure~\ref{fig:mgn_syn} summarizes the empirical coverage  
using Algorithm~\ref{alg:mgn} 
and estimated $\hat\ell(\cdot)$, $\hat{u}(\cdot)$. 
The output of Algorithm~\ref{alg:mgn} 
always achieves valid coverage. 
The solid lines are close to the 45$^\circ$-line, 
showing the tightness of our procedure in this setting.

\begin{figure}[h]
\centering
\includegraphics[width = \linewidth]{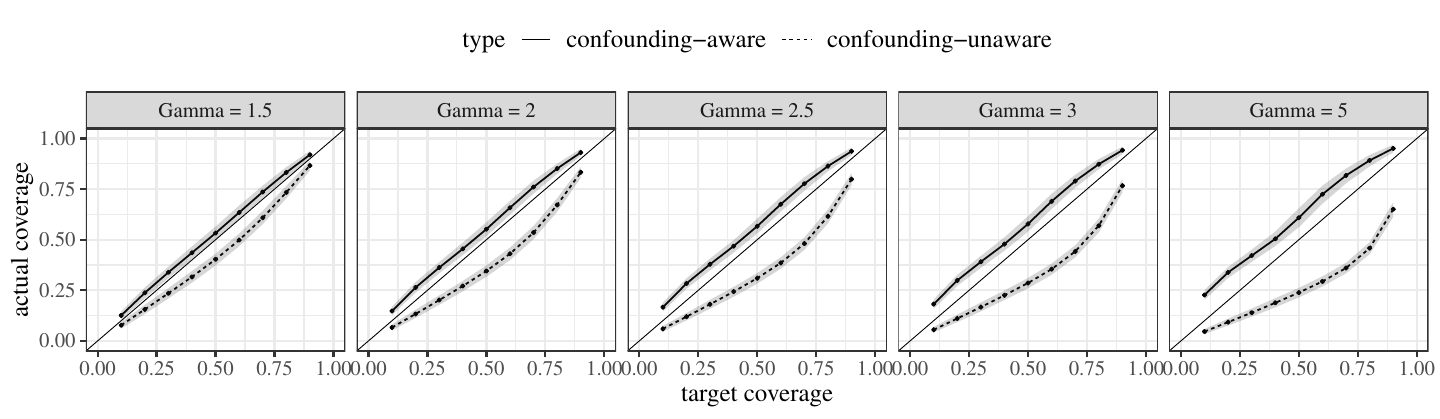}
\caption{Empirical coverage on the test sample. 
Each plot corresponds to a confounding level $\Gamma $. 
The points are average empirical coverage. 
The shaded bands corresponds to the $0.05$-th and $0.95$-th quantiles 
of coverage on test samples. The solid lines correspond 
to Algorithm~\ref{alg:mgn}. \revise{The dashed lines 
assume no confounding and are shown for comparison. In this case,
counterfactual prediction intervals are invalid.} } 
\label{fig:mgn_syn}
\end{figure}

The empirical coverage  of Algorithm~\ref{alg:whp} 
\revise{with} estimated $\hat\ell(\cdot)$, $\hat{u}(\cdot)$
is summarized in Figure~\ref{fig:whp_syn}, 
which validate the PAC-type guarantee  
(referring to the lower boundary of the shaded area in each plot, 
which is the $0.05$-th quantile of empirical coverage). 
Due to the conservativeness of the confidence lower bound 
constructed by the WSR inequality, 
the average coverage of Algorithm~\ref{alg:whp} is a bit higher than the target 
for targets around $0.5$. 
However, the 0.05-th quantile  
(lower boundary of the shaded area) 
is still very close to the target.

\begin{figure}[h]
\centering
\includegraphics[width = \linewidth]{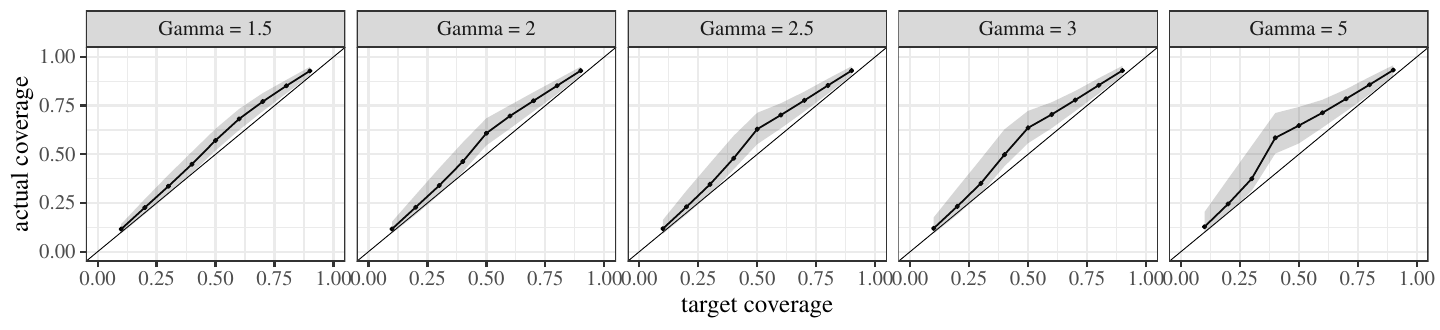}
\caption{Empirical coverage  of Algorithm~\ref{alg:whp}. 
Each plot corresponds to a confounding level $\Gamma $. 
The shaded bands corresponds to the $0.05$-th and $0.95$-th quantiles 
of coverage on test samples.} 
\label{fig:whp_syn}
\end{figure}

It is also worth pointing out that 
the shaded bands indicate the  quantiles 
of the empirical coverage on test \revise{samples}, 
which 
could be understood as an estimate of 
$\hat{c}(\cD):=\PP(Y_{n+1}\in \hat{C}(X_{n+1)})\given \cD)$. 
Although PAC-type guarantee is not theoretically provided by Algorithm~\ref{alg:mgn}, 
most of the time, 
$\hat{c}(\cD)$ 
is above the target in this example (referring to the lower boundary of the shaded area, 
which is the $0.05$-th quantile of $\hat{c}(\cD)$). 
The widths of the $0.05$-th and the $0.95$-th quantiles also provide 
empirical evidence that 
$\hat{c}(\cD)$ 
from Algorithm~\ref{alg:mgn} might be less variable 
than Algorithm~\ref{alg:whp}.
The theoretical analysis of this phenomenon 
might deserve future investigation.

\subsection{Sensitivity analysis on a real dataset}\label{subsec:real}
We finally apply the sensitivity analysis procedure to 
the treated units in the same dataset \citep{carvalho2019assessing}.
Two types of null hypotheses are considered: 
\begin{enumerate}[(1)]
    \item $H_0^-(\Gamma )\colon Y(1)-Y(0)\leq 0~\text{and}~ \Gamma ^* \leq \Gamma $.
    Rejecting the null suggests a positive ITE.
    \item $H_0^+(\Gamma )\colon Y(1)-Y(0)\geq 0~\text{and}~ \Gamma ^* \leq \Gamma $.
    Rejecting this null suggests a negative ITE. 
\end{enumerate}

We randomly subsample $1/3$ 
of the original data as $\cD_\train$. 
Among the remaining,
those with $T=0$ are used as $\cD_\calib$, 
and those with $T=1$ as the test sample $|\cD_\test|$.  
On average, we have 
$|\cD_\train|\approx 2329$, 
$|\cD_\calib|\approx 4650$
and $|\cD_\test|\approx 2250$. 
We  conduct sensitivity analysis on the dataset 
with $\alpha=0.1$ and $\delta=0.05$ (the details
are as in Section~\ref{subsec:simu_sens}).

The empirical survival function of the 
$\Gamma$-values resulting from testing
$H_0^-(\Gamma)$ 
is plotted in Figure~\ref{fig:real_positive}, 
which shows the evidence for positive ITEs. 
To illustrate the variability 
of the procedures, 
we present the estimated functions 
for all $N=10$ independent \revise{runs}, 
hence the multiple curves in each subplot.  

\begin{figure}[htbp]
\centering
\begin{subfigure}[t]{0.44\linewidth}
    \centering
    \includegraphics[width=0.9\linewidth]{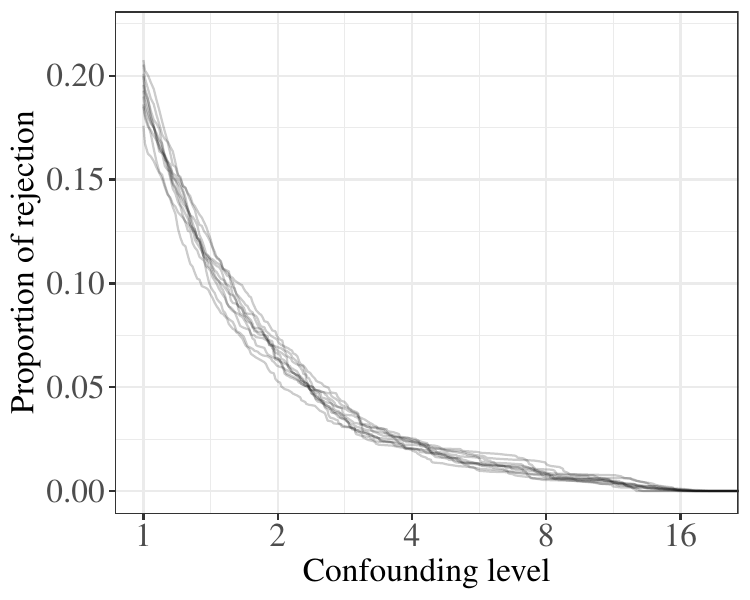} 
\end{subfigure}
\begin{subfigure}[t]{0.44\linewidth}
    \centering
    \includegraphics[width=0.9\linewidth]{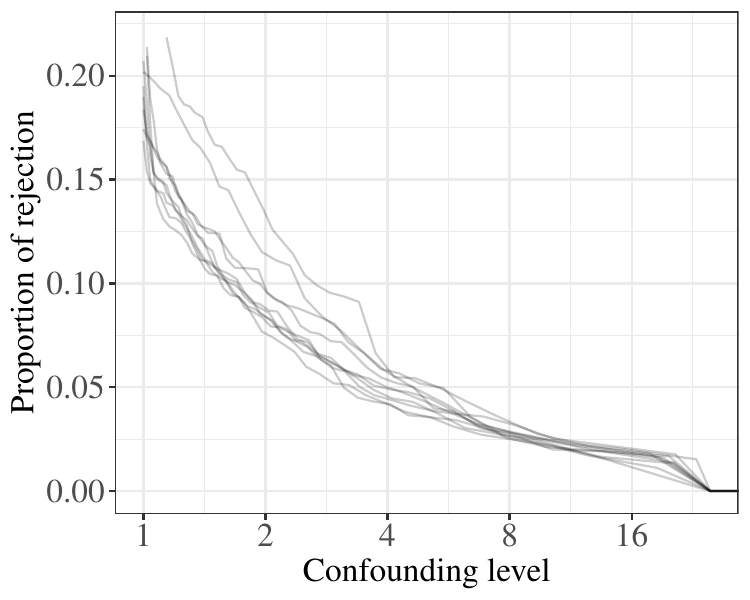} 
\end{subfigure} 
\caption{Empirical evaluation of $S(\Gamma)$ resulting from testing 
$\{H_0^-(\Gamma)\}$  
with Algorithm~\ref{alg:mgn} (left) and~\ref{alg:whp} (right).}
\label{fig:real_positive}
\end{figure}

Averaged over multiple independent runs, 
there are $19.60\%$ 
or $20.46\%$ of the treated test \revise{samples} 
that we find at $\Gamma=1$ as positive ITEs, 
using Algorithm~\ref{alg:mgn} and~\ref{alg:whp}, respectively. 
There are $6.80\%$ or $9.65\%$ 
of the test sample that we find at $\Gamma=2$ as positive ITEs. 
At $\Gamma=3$, the proportion is
$3.54\%$ with Algorithm~\ref{alg:mgn} 
and $6.95\%$ with Algorithm~\ref{alg:whp}. 
With Algorithm~\ref{alg:mgn},
around $2\%$ of the test sample have a $\Gamma$-value 
greater than $5$. 
With Algorithm~\ref{alg:whp},
about $2.5\%$ of the test sample have $\Gamma$-values
greater than $10$, and some test samples have a $\Gamma$-value
as large as $25$, 
showing  robust evidence of a positive ITE. 
Our framework guarantees that 
the mistake (i.e., 
rejecting an actually negative ITE 
at a too large confounding level) 
we make on all units 
is bounded by $\alpha=0.1$ on average.

On the other hand, we report 
the proportion of test units such that $H_0^+(\Gamma)$ 
is rejected for all $\Gamma\geq 1$ in Figure~\ref{fig:real_negative}, 
which shows the evidence for negative ITEs. 

\begin{figure}[H]
\centering
\begin{subfigure}[t]{0.44\linewidth}
    \centering
    \includegraphics[width=0.9\linewidth]{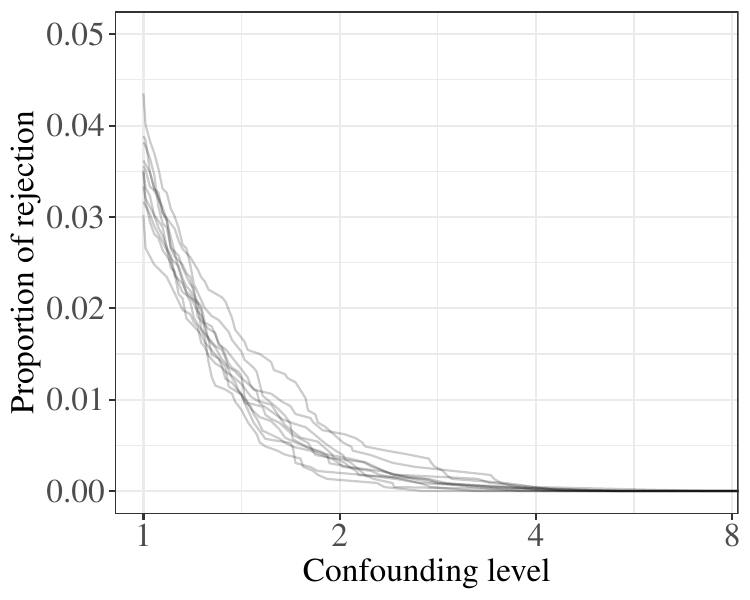} 
\end{subfigure}
\begin{subfigure}[t]{0.44\linewidth}
    \centering
    \includegraphics[width=0.9\linewidth]{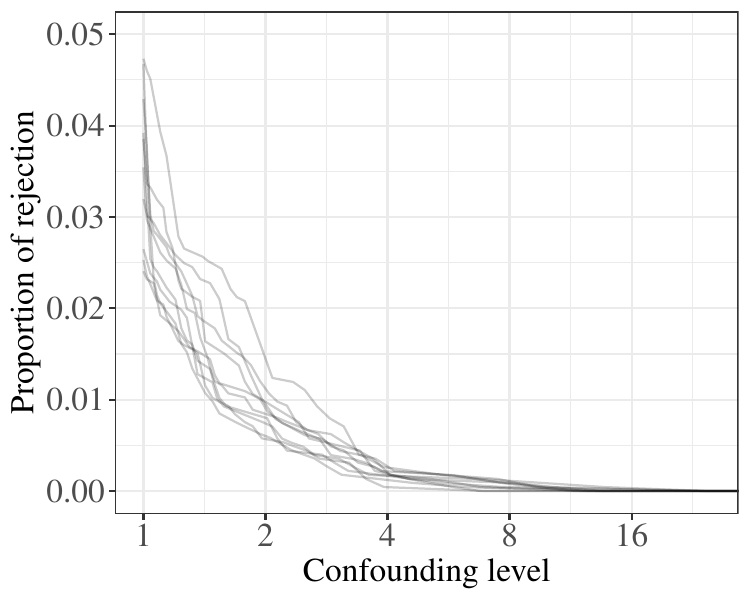} 
\end{subfigure}
\caption{Empirical evaluation of $S(\Gamma)$ resulting from testing 
$\{H_0^+(\Gamma)\}$  
with Algorithm~\ref{alg:mgn} (left) and~\ref{alg:whp} (right).}
\label{fig:real_negative}
\end{figure}

Averaged over multiple runs, 
there are $3.58\%$ and $3.58\%$ 
of ITEs of the treated test sample 
that we find as negative at $\Gamma=1$, 
using Algorithm~\ref{alg:mgn} and~\ref{alg:whp}, respectively. 
At $\Gamma=2$, these proportions are 
$0.38\%$ with Algorithm~\ref{alg:mgn} and 
$1.01\%$ with Algorithm~\ref{alg:whp}. 
Algorithm~\ref{alg:whp} produces a little stronger evidence 
against unmeasured confounding, but 
it is slightly less \revise{stable}. 
In general, very few of the samples have 
$\Gamma$-values larger than $2$ using both algorithms.

\section{Discussion}

We proposed a \revise{model-free} framework for sensitivity
analysis of ITEs,
building upon reliable counterfactual inference 
with potentially confounded observational data. 
We close the paper by discussing possible extensions of the current work.

One extension is to test the confounding level with 
a little more information. 
If we have some
budget to obtain a small amount of 
experimental data, 
it would be interesting to use 
these experimental data to test the existence or level of 
unmeasured confounding in the observational data. 
Another interesting question 
is whether the experimental data can be used to calibrate the value 
of $\Gamma$. 
Taking a step further, the calibrated $\Gamma$  
might also be used to provide valid counterfactual inference 
for units in the observational study. 

Other extensions might be based on the 
identification of distributional shifts 
in~\eqref{eq:id_set_causal} and~\eqref{eq:causal_worst},
which  
can be of interest for other tasks. 
For example, the recent work of~\cite{dorn2021sharp} 
studies sharp bounds on average treatment effect (ATE) under
the marginal $\Gamma$-selection condition~\eqref{eq:gamma_sel_mgn}, 
in which the identification set is the same as~\eqref{eq:data_comp_idset}. 
\revise{
The worst-case distribution function in~\eqref{eq:causal_worst} 
achieves the sharp lower bound on the (conditional) expectation of 
$V(X,Y)$ when $(X,Y)\sim \TPP$. 
Therefore, setting $V(x,y)= y$ (resp. $V(x,y)=-y$),
one could identify the 
super-population $\TPP$ 
that achieves the lower (resp. upper) bound on 
$\tilde{\EE}[Y(t)\given X]$, $t\in\{0,1\}$. 
In policy evaluation problems,  
the task is to estimate $V(\pi) = \tilde\EE[\pi(X)Y(1)+(1-\pi(X))Y(0)]$
for some policy $\pi\colon \cX\to [0,1]$ 
with possibly confounded observational data. 
Utilizing~\eqref{eq:id_set_causal}
and~\eqref{eq:causal_worst} 
along with a coupling argument, 
one could construct a super-population $\TPP$ 
that achieves the sharp lower bounds on $\tilde{\EE}[Y(1)\given X]$ 
and $\tilde{\EE}[Y(0)\given X]$ simultaneously, 
which leads to the sharp lower bound on $\tilde\EE[\pi(X)Y(1)+(1-\pi(X))Y(0)\given X]$,
hence on $V(\pi)$. 
In fact, it can also be shown that 
all policies attain
their worst-case performance
under one single super-population.  
Such result might be used  
to learn a policy 
whose worst-case performance 
is most favorable.}

\subsection*{Code availability and reproducibility}
The code for reproducing the simulations in 
Section~\ref{subsec:simu_mgn},~\ref{subsec:simu_whp}
and real data analysis in Section~\ref{sec:synthetic} is publicly available at \url{https://github.com/ying531/cfsensitivity_paper}.
An R-package to implement the procedures proposed in this paper
can be found at \url{https://github.com/zhimeir/cfsensitivity}.

\subsection*{Acknowledgement}
The authors thank Isaac Gibbs, Kevin Guo, Suyash Gupta, Jayoon Jang,  Lihua Lei, Shuangning Li and Dominik
Rothenh\"ausler for helpful discussions. E.~C.~was supported by the
Office of Naval Research grant N00014-20-12157, the National Science
Foundation grant DMS 2032014, the Simons Foundation under award
814641, and the ARO grant 2003514594. Y.~J.~was partially supported by
ARO grant 2003514594. Z.~R.~was partially supported by ONR grant
N00014-20-1-2337, and NIH grants R56HG010812, R01MH113078 and R01MH123157.

\bibliographystyle{apalike}
\bibliography{reference}

\begin{thebibliography}{}

\bibitem[Bates et~al., 2021a]{bates2021distribution}
Bates, S., Angelopoulos, A., Lei, L., Malik, J., and Jordan, M.~I. (2021a).
\newblock Distribution-free, risk-controlling prediction sets.
\newblock {\em arXiv preprint arXiv:2101.02703}.

\bibitem[Bates et~al., 2021b]{bates2021testing}
Bates, S., Cand{\`e}s, E., Lei, L., Romano, Y., and Sesia, M. (2021b).
\newblock Testing for outliers with conformal p-values.
\newblock {\em arXiv preprint arXiv:2104.08279}.

\bibitem[Bross, 1966]{bross1966spurious}
Bross, I.~D. (1966).
\newblock Spurious effects from an extraneous variable.
\newblock {\em Journal of chronic diseases}, 19(6):637--647.

\bibitem[Bross, 1967]{bross1967pertinency}
Bross, I.~D. (1967).
\newblock Pertinency of an extraneous variable.
\newblock {\em Journal of chronic diseases}, 20(7):487--495.

\bibitem[Carvalho et~al., 2019]{carvalho2019assessing}
Carvalho, C., Feller, A., Murray, J., Woody, S., and Yeager, D. (2019).
\newblock Assessing treatment effect variation in observational studies:
  Results from a data challenge.
\newblock {\em Observational Studies}, 5(2):21--35.

\bibitem[Cauchois et~al., 2020]{cauchois2020robust}
Cauchois, M., Gupta, S., Ali, A., and Duchi, J.~C. (2020).
\newblock Robust validation: Confident predictions even when distributions
  shift.
\newblock {\em arXiv preprint arXiv:2008.04267}.

\bibitem[Cauchois et~al., 2021]{cauchois2021knowing}
Cauchois, M., Gupta, S., and Duchi, J.~C. (2021).
\newblock Knowing what you know: valid and validated confidence sets in
  multiclass and multilabel prediction.
\newblock {\em Journal of Machine Learning Research}, 22(81):1--42.

\bibitem[Cornfield et~al., 1959]{cornfield1959smoking}
Cornfield, J., Haenszel, W., Hammond, E.~C., Lilienfeld, A.~M., Shimkin, M.~B.,
  and Wynder, E.~L. (1959).
\newblock Smoking and lung cancer: recent evidence and a discussion of some
  questions.
\newblock {\em Journal of the National Cancer institute}, 22(1):173--203.

\bibitem[Cox, 1958]{cox1958planning}
Cox, D.~R. (1958).
\newblock Planning of experiments.

\bibitem[Ding and VanderWeele, 2016]{ding2016sensitivity}
Ding, P. and VanderWeele, T.~J. (2016).
\newblock Sensitivity analysis without assumptions.
\newblock {\em Epidemiology (Cambridge, Mass.)}, 27(3):368.

\bibitem[Dorn and Guo, 2021]{dorn2021sharp}
Dorn, J. and Guo, K. (2021).
\newblock Sharp sensitivity analysis for inverse propensity weighting via
  quantile balancing.
\newblock {\em arXiv preprint arXiv:2102.04543}.

\bibitem[Fewell et~al., 2007]{fewell2007impact}
Fewell, Z., Davey~Smith, G., and Sterne, J.~A. (2007).
\newblock The impact of residual and unmeasured confounding in epidemiologic
  studies: a simulation study.
\newblock {\em American journal of epidemiology}, 166(6):646--655.

\bibitem[Gammerman and Vovk, 2007]{gammerman2007hedging}
Gammerman, A. and Vovk, V. (2007).
\newblock Hedging predictions in machine learning.
\newblock {\em The Computer Journal}, 50(2):151--163.

\bibitem[Gastwirth et~al., 1998]{gastwirth1998dual}
Gastwirth, J.~L., Krieger, A.~M., and Rosenbaum, P.~R. (1998).
\newblock Dual and simultaneous sensitivity analysis for matched pairs.
\newblock {\em Biometrika}, 85(4):907--920.

\bibitem[Gaudino et~al., 2018]{gaudino2018unmeasured}
Gaudino, M., Di~Franco, A., Rahouma, M., Tam, D.~Y., Iannaccone, M., Deb, S.,
  D'Ascenzo, F., Abouarab, A.~A., Girardi, L.~N., Taggart, D.~P., et~al.
  (2018).
\newblock Unmeasured confounders in observational studies comparing bilateral
  versus single internal thoracic artery for coronary artery bypass grafting: a
  meta-analysis.
\newblock {\em Journal of the American Heart Association}, 7(1):e008010.

\bibitem[Gupta et~al., 2019]{gupta2019nested}
Gupta, C., Kuchibhotla, A.~K., and Ramdas, A.~K. (2019).
\newblock Nested conformal prediction and quantile out-of-bag ensemble methods.
\newblock {\em arXiv preprint arXiv:1910.10562}.

\bibitem[Imbens, 2003]{imbens2003sensitivity}
Imbens, G.~W. (2003).
\newblock Sensitivity to exogeneity assumptions in program evaluation.
\newblock {\em American Economic Review}, 93(2):126--132.

\bibitem[Imbens and Rubin, 2015]{imbens2015causal}
Imbens, G.~W. and Rubin, D.~B. (2015).
\newblock {\em Causal Inference for Statistics, Social, and Biomedical
  Sciences: An Introduction}.
\newblock Cambridge University Press.

\bibitem[Kallus and Zhou, 2021]{kallus2021minimax}
Kallus, N. and Zhou, A. (2021).
\newblock Minimax-optimal policy learning under unobserved confounding.
\newblock {\em Management Science}, 67(5):2870--2890.

\bibitem[Lei, 2019]{lei2019fast}
Lei, J. (2019).
\newblock Fast exact conformalization of the lasso using piecewise linear
  homotopy.
\newblock {\em Biometrika}, 106(4):749--764.

\bibitem[Lei et~al., 2018]{lei2018distribution}
Lei, J., G’Sell, M., Rinaldo, A., Tibshirani, R.~J., and Wasserman, L.
  (2018).
\newblock Distribution-free predictive inference for regression.
\newblock {\em Journal of the American Statistical Association},
  113(523):1094--1111.

\bibitem[Lei and Wasserman, 2014]{lei2014distribution}
Lei, J. and Wasserman, L. (2014).
\newblock Distribution-free prediction bands for non-parametric regression.
\newblock {\em Journal of the Royal Statistical Society: Series B: Statistical
  Methodology}, pages 71--96.

\bibitem[Lei and Cand{\`e}s, 2020]{lei2020conformal}
Lei, L. and Cand{\`e}s, E.~J. (2020).
\newblock Conformal inference of counterfactuals and individual treatment
  effects.
\newblock {\em arXiv preprint arXiv:2006.06138}.

\bibitem[Meinshausen and Ridgeway, 2006]{meinshausen2006quantile}
Meinshausen, N. and Ridgeway, G. (2006).
\newblock Quantile regression forests.
\newblock {\em Journal of Machine Learning Research}, 7(6).

\bibitem[Namkoong et~al., 2020]{namkoong2020off}
Namkoong, H., Keramati, R., Yadlowsky, S., and Brunskill, E. (2020).
\newblock Off-policy policy evaluation for sequential decisions under
  unobserved confounding.
\newblock {\em arXiv preprint arXiv:2003.05623}.

\bibitem[Neyman, 1923]{neyman1923applications}
Neyman, J. (1923).
\newblock Sur les applications de la th{\'e}orie des probabilit{\'e}s aux
  experiences agricoles: Essai des principes.
\newblock {\em Roczniki Nauk Rolniczych}, 10:1--51.

\bibitem[Park et~al., 2021]{park2021pac}
Park, S., Dobriban, E., Lee, I., and Bastani, O. (2021).
\newblock Pac prediction sets under covariate shift.
\newblock {\em arXiv preprint arXiv:2106.09848}.

\bibitem[Romano et~al., 2019]{romano2019conformalized}
Romano, Y., Patterson, E., and Candes, E. (2019).
\newblock Conformalized quantile regression.
\newblock {\em Advances in Neural Information Processing Systems},
  32:3543--3553.

\bibitem[Romano et~al., 2020]{romano2020classification}
Romano, Y., Sesia, M., and Cand{\`e}s, E.~J. (2020).
\newblock Classification with valid and adaptive coverage.
\newblock {\em arXiv preprint arXiv:2006.02544}.

\bibitem[Rosenbaum, 2002a]{rosenbaum2002observational}
Rosenbaum, P. (2002a).
\newblock {\em Observational Studies}.
\newblock Springer Series in Statistics. Springer.

\bibitem[Rosenbaum, 1987]{rosenbaum1987sensitivity}
Rosenbaum, P.~R. (1987).
\newblock Sensitivity analysis for certain permutation inferences in matched
  observational studies.
\newblock {\em Biometrika}, 74(1):13--26.

\bibitem[Rosenbaum, 2002b]{rosenbaum2002attributing}
Rosenbaum, P.~R. (2002b).
\newblock Attributing effects to treatment in matched observational studies.
\newblock {\em Journal of the American statistical Association},
  97(457):183--192.

\bibitem[Rosenbaum and Rubin, 1983a]{rosenbaum1983assessing}
Rosenbaum, P.~R. and Rubin, D.~B. (1983a).
\newblock Assessing sensitivity to an unobserved binary covariate in an
  observational study with binary outcome.
\newblock {\em Journal of the Royal Statistical Society: Series B
  (Methodological)}, 45(2):212--218.

\bibitem[Rosenbaum and Rubin, 1983b]{rosenbaum1983central}
Rosenbaum, P.~R. and Rubin, D.~B. (1983b).
\newblock The central role of the propensity score in observational studies for
  causal effects.
\newblock {\em Biometrika}, 70(1):41--55.

\bibitem[Rubin, 1978]{rubin1978bayesian}
Rubin, D.~B. (1978).
\newblock Bayesian inference for causal effects: The role of randomization.
\newblock {\em The Annals of Statistics}, pages 34--58.

\bibitem[Rubin, 1990]{rubin1990comment}
Rubin, D.~B. (1990).
\newblock Comment: Neyman (1923) and causal inference in experiments and
  observational studies.
\newblock {\em Statistical Science}, 5(4):472--480.

\bibitem[Rutter, 2007]{rutter2007identifying}
Rutter, M. (2007).
\newblock {\em Identifying the Environmental Causes of Disease: How Should We
  Decide what to Believe and when to Take Action?: Report Synopsis}.
\newblock Academy of Medical Sciences.

\bibitem[Shafer and Vovk, 2008]{shafer2008tutorial}
Shafer, G. and Vovk, V. (2008).
\newblock A tutorial on conformal prediction.
\newblock {\em Journal of Machine Learning Research}, 9(3).

\bibitem[Tan, 2006]{tan2006distributional}
Tan, Z. (2006).
\newblock A distributional approach for causal inference using propensity
  scores.
\newblock {\em Journal of the American Statistical Association},
  101(476):1619--1637.

\bibitem[Tibshirani et~al., 2019]{TibshiraniBCR19}
Tibshirani, R.~J., Barber, R.~F., Cand{\`{e}}s, E.~J., and Ramdas, A. (2019).
\newblock Conformal prediction under covariate shift.
\newblock In {\em Advances in Neural Information Processing Systems 32}, pages
  2526--2536.

\bibitem[VanderWeele and Ding, 2017]{vanderweele2017sensitivity}
VanderWeele, T.~J. and Ding, P. (2017).
\newblock Sensitivity analysis in observational research: introducing the
  e-value.
\newblock {\em Annals of internal medicine}, 167(4):268--274.

\bibitem[Vovk, 2012]{vovk2012conditional}
Vovk, V. (2012).
\newblock Conditional validity of inductive conformal predictors.
\newblock In {\em Asian conference on machine learning}, pages 475--490. PMLR.

\bibitem[Vovk, 2013]{vovk2013transductive}
Vovk, V. (2013).
\newblock Transductive conformal predictors.
\newblock In {\em IFIP International Conference on Artificial Intelligence
  Applications and Innovations}, pages 348--360. Springer.

\bibitem[Vovk et~al., 2005]{vovk2005algorithmic}
Vovk, V., Gammerman, A., and Shafer, G. (2005).
\newblock {\em Algorithmic learning in a random world}.
\newblock Springer Science \& Business Media.

\bibitem[Vovk et~al., 2009]{vovk2009line}
Vovk, V., Nouretdinov, I., and Gammerman, A. (2009).
\newblock On-line predictive linear regression.
\newblock {\em The Annals of Statistics}, pages 1566--1590.

\bibitem[Waudby-Smith and Ramdas, 2021]{waudbysmith2021estimating}
Waudby-Smith, I. and Ramdas, A. (2021).
\newblock Estimating means of bounded random variables by betting.

\bibitem[Yadlowsky et~al., 2018]{yadlowsky2018bounds}
Yadlowsky, S., Namkoong, H., Basu, S., Duchi, J., and Tian, L. (2018).
\newblock Bounds on the conditional and average treatment effect with
  unobserved confounding factors.
\newblock {\em arXiv preprint arXiv:1808.09521}.

\bibitem[Yin et~al., 2021]{yin2021conformal}
Yin, M., Shi, C., Wang, Y., and Blei, D.~M. (2021).
\newblock Conformal sensitivity analysis for individual treatment effects.

\bibitem[Zhao et~al., 2017]{zhao2017sensitivity}
Zhao, Q., Small, D.~S., and Bhattacharya, B.~B. (2017).
\newblock Sensitivity analysis for inverse probability weighting estimators via
  the percentile bootstrap.
\newblock {\em arXiv preprint arXiv:1711.11286}.

\end{thebibliography}

\newpage 
\appendix 

\section{Additional results}

\subsection{Constructing $\hat{G}_n(\cdot)$ in the PAC-type procedure} \label{app:pac}

In this part, we provide the construction of $\hat{G}_n(t)$ in 
Section~\ref{subsec:whp_procedure} with 
Waudby-Smith--Ramdas bound~\citep{waudbysmith2021estimating}.
The 
proof is a slight modification of~\cite{waudbysmith2021estimating} and 
included in Appendix~\ref{appendix:proof_wsr_bound} for completeness.

\begin{prop}[Waudby-Smith–Ramdas lower confidence bound for c.d.f.s]\label{prop:wsr_bound}
Suppose 
$\sup_x \hat u(x)\leq M$ for some constant $M>0$. 
For $t\in \RR$ being any constant or any random variable in $\sigma(\cD_\train)$,
and any $\delta\in(0,1)$, 
we define $\hat{G}_n(t) = \max\big\{\hat{G}_n^L(t),\hat{G}_n^U(t)\big\}$, where 
\$
&\hat{G}_n^L(t) = M\cdot  \inf\big\{ g\geq 0\colon \max_{1\leq i\leq \nc} \cK_i^L(g) \leq 2/\delta  \big\},\\
&\hat{G}_n^U(t) =  1-M+M\cdot\inf\big\{ g\geq 0\colon \max_{1\leq i\leq \nc} \cK_i^U(g) \leq 2/\delta  \big\}.
\$
For any $g\geq0,$ the thresholding functions 
for $i=1,\dots,\nc$ 
are defined as 
\$
&\cK_i^L(g) = \prod_{j=1}^i \Big(1+ \nu_j^L\cdot\big[ \ind_{\{V_j\leq t\}}\hat{\ell}(X_j)/M - g\big]\Big),~~\cK_i^U(g) = \prod_{j=1}^i \Big(1+ \nu_j^U\cdot\big[ 1- \ind_{\{V_j> t\}}\hat{u}(X_j)/M - g\big]\Big),
\$
where $\nu_j^L = \min\big\{1, \sqrt{ 2\log(2/\delta)/[n(\hat{\sigma}_{j-1}^L)^2]}\big\}$, $\nu_j^U = \min\big\{1, \sqrt{ 2\log(2/\delta)/[n(\hat{\sigma}_{j-1}^U)^2]}\big\}$, and 
\$
& (\hat\sigma_{i}^L)^2 = \frac{\frac{1}{4}+\sum_{j=1}^i\big(\ind_{\{V_j\leq t\}}\frac{\hat{\ell}(X_j)}{M} - \hat\mu_j^L\big)^2 }{1+i},~~ \hat\mu_i^L = \frac{\frac{1}{2}+\sum_{j=1}^i \ind_{\{V_i\leq t\}}\frac{\hat{\ell}(X_i)}{M}}{1+i},\\
&  (\hat\sigma_{i}^U)^2 = \frac{\frac{1}{4}+\sum_{j=1}^i\big(1-\ind_{\{V_j> t\}}\frac{\hat{u}(X_j)}{M} - \hat\mu_j^U\big)^2 }{1+i},~~ \hat\mu_i^U = \frac{\frac{1}{2}+\sum_{j=1}^i (1-\ind_{\{V_i> t\}}\frac{\hat{u}(X_i)}{M})}{1+i}.
\$
Then it holds that
$
\PP_{\cD_\calib}\big(\hat{G}_n(t)\leq G(t)\big)\geq 1-\delta
$
for $G(t)$ defined in~\eqref{eq:def_G_t_formula}.
\end{prop}

\subsection{Validity of prediction intervals for ITE} \label{app:ite_valid}

In this part, we provide the coverage guarantee for 
prediction of ITEs omitted in Section~\ref{subsec:pred_ite}. 

\begin{prop} \label{prop:ite_mgn}
Consider a new test sample where we observe 
$(X_{n+1}, T_{n+1}, Y_{n+1}(T_{n+1}))$, 
with $T_{n+1} = w$ for $w\in\{0,1\}$.
Let $\cD$ be the calibration data generated from $\PP^\super$ 
under confounding level $\Gamma$. 
If
$\hat{C}_{1-w}(X_{n+1},\Gamma,1-\alpha)$ is constructed by Algorithm~\ref{alg:mgn},
then 
\$
\PP\big( Y_{n+1}(1)-Y_{n+1}(0)\in \hat{C}(X_{n+1},\Gamma) \biggiven T_{n+1} = w\big)
\geq 1-\alpha - \hat\Delta,
\$
for the prediction set $\hat{C}(X_{n+1},\Gamma,1-\alpha)$ 
in~\eqref{eq:ite_set}, where the  probability is over $\cD_{\calib}$
as well as the test point, 
and $\hat\Delta$ is the gap of coverage for $Y_{n+1}(1-w)$ in Theorem~\ref{thm:mgn}.
If
$\hat{C}_{1-w}(X_{n+1},\Gamma,1-\alpha)$ is constructed by Algorithm~\ref{alg:whp}
with confidence level $\delta\in(0,1)$, 
then with probability at least $1-\delta$ with respect to $\cD_\calib$, 
we have 
\$
\PP\big( Y_{n+1}(1)-Y_{n+1}\in \hat{C}(X_{n+1},\Gamma )  
\biggiven T_{n+1} = w,\cD_{\calib} \big)
\geq 1-\alpha -\hat\Delta,
\$
where $\hat\Delta$ is the gap of coverage for $Y_{n+1}(1-w)$ in Theorem~\ref{thm:whp}.
\end{prop}

\begin{prop}\label{prop:ite_whp}
Consider a new test sample $X_{n+1}$.
Let $\cD$ be the observations for which 
$\PP^\super$ is under confounding level $\Gamma$. 
If for $w \in\{0,1\}$,
$\hat{C}_{w}(X_{n+1},\Gamma,1-\alpha/2,\delta/2)$ 
is constructed by Algorithm~\ref{alg:mgn},
then 
\$
\PP\big( Y_{n+1}(1)-Y_{n+1}(0)\in \hat{C}(X_{n+1},\Gamma )   \big)
&\geq 1-\alpha - \hat\Delta_1 - \hat\Delta_0,
\$
where the  probability is over 
$\cD$ as well as the test point;
$\hat\Delta_0$, $\hat\Delta_1$ is the coverage gap in Theorem~\ref{thm:mgn}
for counterfactual prediction of $Y_{n+1}(1),Y_{n+1}(0)$ 
when the bound functions are estimated. 
If
$\hat{C}_{1-w}(X_{n+1},\Gamma,1-\alpha/2,\delta/2)$ is
constructed by Algorithm~\ref{alg:whp},
then with probability at least $1-\delta$ with respect to $\cD_{\calib}$, we have 
\$
\PP\big( Y_{n+1}(1)-Y_{n+1}\in \hat{C}(X_{n+1},\Gamma )  \biggiven \cD_{\calib} \big)
&\geq 1-\alpha - \hat\Delta_1 - \hat\Delta_0,
\$
where $\hat\Delta_0$ and $\hat\Delta_1$ are 
the coverage gaps in Theorem~\ref{thm:mgn}
for counterfactual prediction of $Y_{n+1}(1),Y_{n+1}(0)$ 
when the bound functions are estimated. 
\end{prop}

\section{Technical proofs}

\subsection{Proof of Lemma~\ref{lem:shift}}
\label{supp:pf_lem_shift}

\begin{proof}[Proof of Lemma~\ref{lem:shift}]
For any measurable subset $A\subset \cU$, any $u\in A$ and any $x\in \cX$,
  by the marginal $\Gamma$-selection condition \eqref{eq:gamma_sel_mgn},
  \#\label{eq:bd1}
  \frac{1}{\Gamma} \cdot \PP(T=0\given X=x,U=u) \leq \PP(T=1\given X=x,U=u)  \cdot \frac{\PP(T=0\given X=x)}{\PP(T=1\given X=x)} \leq \Gamma \cdot \PP(T=0\given X=x,U=u).
  \#
  Marginalizing over $u\in A$ yields
  \#\label{eq:bnds_meas_sets}
  \frac{1}{\Gamma} \leq \frac{\PP(U\in A\given X=x, T=1)}{\PP(U\in A\given X=x, T=0)}  \leq \Gamma
  \#
  for $\PP$-almost $x\in \cX$. Since~\eqref{eq:bnds_meas_sets} holds for any measurable set
  in $\cU$, we have 
  \$
  \frac{1}{\Gamma} \le \frac{\ud \PP_{U \given X,T = 1}}{\ud \PP_{U \given X,T=0}}(u,x) \le \Gamma,
  \$
  for $\PP$-almost all $u\in\cU$ and $x \in \cX$. Meanwhile, for any measurable set $B\subset \cY$, 
  by the tower property, we have for any $t\in \{0,1\}$ that
  \#\label{eq:bd2}
  \PP\big(Y(1)\in B,T = t\biggiven X\big) &= 
  \EE\Big[ \EE\big[ \ind_{\{Y(1)\in B\}}\ind_{\{T=t\}}\biggiven X,U]\Biggiven X\Big] \notag \\
  &=  \EE\Big[ \EE\big[ \ind_{\{Y(1)\in B\}} \biggiven X,U]\cdot \EE\big[ \ind_{\{T=t\}} \biggiven X,U]\Biggiven X\Big].
  \#
  Rewriting~\eqref{eq:bd1}, we have $\PP$-almost surely that
  \$
  \frac{1}{\Gamma} \cdot \EE\big[ \ind_{\{T=0\}} \biggiven X,U] \cdot \frac{\EE\big[ \ind_{\{T=1\}} \biggiven X ]}{\EE\big[ \ind_{\{T=0\}} \biggiven X ]} \leq \EE\big[ \ind_{\{T=1\}} \biggiven X,U] \leq \Gamma \cdot \EE\big[ \ind_{\{T=0\}} \biggiven X,U] \cdot \frac{\EE\big[ \ind_{\{T=1\}} \biggiven X ]}{\EE\big[ \ind_{\{T=0\}} \biggiven X ]}.
  \$
  Multiplying all sides by $ \EE\big[ \ind_{\{Y(1)\in B\}} \biggiven X,U]$ and using~\eqref{eq:bd2}, we know
  \$
  &\frac{1}{\Gamma} \cdot \PP\big(Y(1)\in B,T=0\biggiven X\big)\cdot \frac{\EE\big[ \ind_{\{T=1\}} \biggiven X ]}{\EE\big[ \ind_{\{T=0\}} \biggiven X ]}
  \leq \PP\big(Y(1)\in B,T=1\biggiven X\big) 
  \leq  \Gamma \cdot \PP\big(Y(1)\in B,T=0\biggiven X\big) \cdot \frac{\EE\big[ \ind_{\{T=1\}} \biggiven X ]}{\EE\big[ \ind_{\{T=0\}} \biggiven X ]} 
  \$
  holds $\PP$-almost surely, and for $\PP$-almost all $x \in \cX$,
  \$
  \frac{1}{\Gamma} \cdot  \frac{1-e(x)}{e(x)}  \leq \frac{\PP(Y(1)\in B,T=0\given X=x)}{\PP(Y(1)\in B,T=1\given X=x)}  \leq \Gamma \cdot  \frac{1-e(x)}{e(x)}.
  \$
  Note that 
  \$
  \frac{\PP(Y(1)\in B\given X=x,T=1)}{\PP(Y(1) \in B\given X=x,T=0)} 
  = \frac{\PP(Y(1)\in B, T=1 \given X=x)}{\PP(Y(1) \in B, T = 0\given X=x)} 
  \cdot \frac{1 - e(x)}{e(x)}.
  \$
  Consequently, 
  \#\label{eq:bd3}
  \frac{1}{\Gamma} \le \frac{\PP(Y(1)\in B\given X=x,T=1)}{\PP(Y(1) \in B\given X=x,T=0)} \le \Gamma
  \#
  holds for $\PP$-almost all $x\in \cX$. By the arbitrariness of $B$, we have 
  \$
    \frac{1}{\Gamma} \le \frac{\ud \PP_{Y(1) \given X,T=1}}{\ud \PP_{Y(1) \mid X,T=0}}(x,y)\le \Gamma.
  \$
  Repeating the above steps for $Y(0)$ we conclude the proof of Lemma~\ref{lem:shift}.
\end{proof}

\subsection{Proof of Theorem \ref{thm:mgn}} \label{app:proof_thm_mgn}
\begin{proof}[Proof of Theorem~\ref{thm:mgn}]
Fixing any $\TPP\in \cP(\PP,\ell,u)$, we denote the likelihood ratio 
$w(x,y) = \frac{\ud \TPP}{\ud \PP}(x,y)$. Recall that the calibration data
is $\{(X_i,Y_i)\}_{i\in \cD_\calib}$ 
with $\cD_\calib = \{1,\dots,n\}$, 
and the test data point is $(X_{n+1},Y_{n+1}) \sim \TPP$. 
We denote the random variables $Z_i=(X_i,Y_i)$ 
and realized values $z_i=(x_i,y_i)$ for $i=1,\dots,n$. 

As a starting point, 
we elaborate on the weighted conformal inference introduced 
in~\cite{TibshiraniBCR19}, which paves the way for the analysis of 
marginal coverage later on. Following~\cite{TibshiraniBCR19}, 
the random variables $\{Z_i\}_{i=1}^{n+1}$ are \emph{weighted exchangeable}, 
meaning that the density of their joint distribution can be factorized as 
\#\label{eq:weighted_exchangeable}
f(z_1,\dots,z_{n+1}) = \prod_{i=1}^{n+1} w_i(z_i)\cdot g(z_1,\dots,z_{n+1}),
\#
where $g$ is some permutation-invariant function, 
i.e., $g(z_{\sigma(1)},\dots,z_{\sigma(n+1)})=g(z_1,\dots,z_{n+1})$ 
for any permuation $\sigma$ of $1,\dots,n+1$. 
Specifically, here $w_i(z)=1$ for $1\leq i \leq n$ and $w_{n+1}(z ) = w(x,y)$.  
For a set of values $z_1,\dots,z_{n+1}$ where there may be repeated elements, 
we denote the unordered set 
$z = [z_1,\dots,z_{n+1}]$ and  
the event 
\$
\cE_z = \big\{ [Z_1,\dots,Z_{n+1}] = [z_1,\dots,z_{n+1}]  \big\}.
\$
Let $\Pi_{n+1}$ be the set of all permutations of $\{1,\dots,n+1\}$. 
Writing $v_i=V(x_i,y_i)=V(z_i)$,  for each $1\leq i\leq n+1$, it holds that
\$
\PP(V_{n+1} = v_i\given \cE_z) = 
\PP(Z_{n+1} = z_i\given \cE_z) = 
\frac{\sum_{\sigma\in \Pi:\sigma(n+1)=i}f(z_{\sigma(1)},\dots,z_{\sigma(n+1)})}{\sum_{\sigma\in\Pi} f(z_{\sigma(1)},\dots,z_{\sigma(n+1)})},
\$
where $\PP$ is induced by the joint distribution of 
$\cD_\train\cup \cD_\calib \cup Z_{n+1}$. 
By the factorization \eqref{eq:weighted_exchangeable}, we have
\$
\frac{\sum_{\sigma\in \Pi:\sigma(n+1)=i}f(z_{\sigma(1)},\dots,z_{\sigma(n+1)})}{\sum_{\sigma\in\Pi} f(z_{\sigma(1)},\dots,z_{\sigma(n+1)})} = \frac{  \sum_{ \sigma\in \Pi:\sigma(n+1)=i}  w_{n+1}(z_{i}) g(z_{\sigma(1)},\dots,z_{\sigma(n+1)})   }{\sum_{\sigma\in\Pi} w_{n+1}(z_{\sigma(n+1)})g(z_{\sigma(1)},\dots,z_{\sigma(n+1)})  } = \frac{w_{n+1}(z_i)}{\sum_{j=1}^{n+1}w_{n+1}(z_j)}.
\$
Therefore, the distribution of $V_{n+1}$ conditional on the event $\cE_z$ is 
\$
V_{n+1}\given \cE_z ~\sim~  \sum_{i=1}^{n+1} \delta_{v_i} p_i^w,\quad \text{ where } ~p_i^w =  \frac{w_{n+1}(z_{i})}{\sum_{j=1}^{n+1}w_{n+1}(z_j)} = \frac{w(x_i,y_i)}{\sum_{j=1}^{n+1}w(x_i,y_i)}.
\$
Here $\delta_v$ denotes the point mass at $\{v\}$. 
For any unordered set $z=[z_1,\dots,z_n,z_{n+1}]$ and 
the corresponding $k^*$ as defined in \eqref{eq:ci_mgn}, it holds that
\$
\PP\big(Y_{n+1}\in \hat{C}(X_{n+1})\biggiven \cE_z\big) =  \PP\big( V_{n+1} \leq v_{[ {k}^*]}\biggiven \cE_z\big) = \sum_{i=1}^{n+1} p_i^w \ind_{\{v_i\leq v_{[ {k}^*]}\}} = \frac{\sum_{i=1}^{n+1}w(x_i,y_i)\ind_{\{v_i\leq v_{[ {k}^*]}\}}}{\sum_{j=1}^{n+1}w(x_i,y_i)}.
\$
By the tower property of conditional expectations, we have
\#\label{eq:mgn_cover_formula}
\PP\big(Y_{n+1}\in \hat{C}(X_{n+1}) \big) = \EE\Big[ \PP\big(Y_{n+1}\in \hat{C}(X_{n+1})\biggiven \cE_Z\big)  \Big] = \EE\,\Bigg[\, \frac{\sum_{i=1}^{n+1}w(X_i,Y_i)\ind_{\{V_i\leq V_{[ {k}^*]}\}}}{\sum_{j=1}^{n+1}w(X_i,Y_i)} \Bigg].
\#

Equipped with the above preparations, we show 
the coverage guarantee in our setting. 
By definition \eqref{eq:ci_mgn}, we know
\$
 V_{[ {k}^*]} = \inf\bigg\{ v\colon \frac{\sum_{i=1}^n \hat{\ell}(X_i)\ind_{\{V_i\leq v\}}}{\sum_{i=1}^n \hat{\ell}(X_i)\ind_{\{V_i\leq v\}} + \sum_{i=1}^n \hat{u}(X_i)\ind_{\{V_i> v\}}+\hat{u}(X_{n+1}) }   \geq 1-\alpha \bigg\},
\$
hence  
\$
\EE\,\Bigg[\, \frac{\sum_{i=1}^{n}\hat{\ell}(X_i)\ind_{\{V_i\leq V_{[ {k}^*]}\}}
}{
\sum_{i=1}^n \hat{\ell}(X_i)\ind_{\{V_i\leq V_{[ {k}^*]}\}} 
+ \sum_{i=1}^n \hat{u}(X_i)\ind_{\{V_i> V_{[ {k}^*]}\}}
+\hat{u}(X_{n+1}) } \Bigg] \geq 1-\alpha
\$
since the inner random variable is always no smaller than $1-\alpha$. 
Combined with~\eqref{eq:mgn_cover_formula} and
by the non-negativity of $w(X_i,Y_i)$,  we have
\$
&\PP\big(Y_{n+1}\in \hat{C}(X_{n+1}) \big) - (1-\alpha) \\
&\geq   \EE\,\Bigg[\, \frac{\sum_{i=1}^{n}w(X_i,Y_i)\ind_{\{V_i\leq V_{[{k}^*]}\}}
}{
\sum_{i=1}^{n+1}w(X_i,Y_i)} \Bigg]   - 
\EE\,\Bigg[ \,\frac{\sum_{i=1}^{n}\hat{\ell}(X_i)\ind_{\{V_i\leq V_{[ {k}^*]}\}}
}{
\sum_{i=1}^n \hat{\ell}(X_i)\ind_{\{V_i\leq V_{[ {k}^*]}\}} 
+ \sum_{i=1}^n \hat{u}(X_i)\ind_{\{V_i> V_{[ {k}^*]}\}}
+\hat{u}(X_{n+1}) } \Bigg] 
:= \EE\bigg[ \frac{\textrm{(i)}}{\textrm{(ii)}} \bigg],
\$
where we denote 
\$
\textrm{(i)} &= - w(X_{n+1},Y_{n+1}) \cdot 
\sum_{i=1}^{n}\hat{\ell}(X_i)\ind_{\{V_i\leq V_{[ {k}^*]}\}}  
+ ~\hat{u}(X_{n+1}) \cdot \sum_{i=1}^{n}w(X_i,Y_i)\ind_{\{V_i\leq V_{[ {k}^*]}\}} \\
&\qquad + \bigg[ \sum_{i=1}^{n}w(X_i,Y_i)\ind_{\{V_i\leq V_{[ {k}^*]}\}}\bigg] 
\bigg[  \sum_{i=1}^n \hat{u}(X_i)\ind_{\{V_i> V_{[ {k}^*]}\}} \bigg] 
-\bigg[ \sum_{i=1}^{n}w(X_i,Y_i) \ind_{\{V_i> V_{[ {k}^*]}\}} \bigg] 
\bigg[   \sum_{i=1}^{n}\hat{\ell}(X_i)\ind_{\{V_i\leq V_{[ {k}^*]}\}} \bigg] 
\$
and
\$
\textrm{(ii)} = \bigg[\sum_{i=1}^{n+1}w(X_i,Y_i)\bigg] 
\bigg[\sum_{i=1}^n \hat{\ell}(X_i)\ind_{\{V_i\leq V_{[\hat{k}^*]}\}} + \sum_{i=1}^n \hat{u}(X_i)\ind_{\{V_i> V_{[\hat{k}^*]}\}}+\hat{u}(X_{n+1}) \bigg].
\$
We first establish a lower bound for the term (i). To this end, we define the random variables
\$
 \tilde{\ell}_i &= \max\big\{ w(X_i,Y_i), ~\hat{\ell}(X_i)\big\}\quad \text{and}\\
 \tilde{u}_i &= \min\big\{  w(X_i,Y_i),~ \hat{u}(X_i)\big\},\quad i=1,\dots,n+1.
\$
By the above definition, it holds that $0\leq \tilde{u}_i \leq \hat{u}(X_i)$ and $\tilde{\ell}_i \geq \hat{\ell}(X_i)\geq 0$.
We also define the differences
\$
\Delta\tilde{\ell}_i &= \tilde{\ell}_i - w(X_i,Y_i) =   \big[  \hat{\ell}(X_i)  - w(X_i,Y_i)\big]_+\quad \text{and}\\
\Delta \tilde{u}_i &= \tilde{u}_i - w(X_i,Y_i) = - \big[ \hat{u}(X_i) - w(X_i,Y_i) \big]_-,\quad i=1,\dots,n+1.
\$
Using the above notation, we have 
\$
&\bigg[ \sum_{i=1}^{n}w(X_i,Y_i)\ind_{ \{V_i\leq V_{[{k}^*]} \}}\bigg] 
\bigg[  \sum_{i=1}^n \hat{u}(X_i)\ind_{\{V_i> V_{[{k}^*]}\}} \bigg]  
-\bigg[ \sum_{i=1}^{n}w(X_i,Y_i) \ind_{\{V_i> V_{[{k}^*]}\}} \bigg] 
\bigg[   \sum_{i=1}^{n}\hat{\ell}(X_i)\ind_{\{V_i\leq V_{[{k}^*]}\}} \bigg] \\
&\geq \bigg[ \sum_{i=1}^{n}w(X_i,Y_i)\ind_{ \{V_i\leq V_{[{k}^*]} \}}\bigg] 
\bigg[  \sum_{i=1}^n \big( w(X_i,Y_i) +\Delta \tilde{u}_i\big) \ind_{\{V_i> V_{[{k}^*]}\}} \bigg] \\
&\qquad  -\bigg[ \sum_{i=1}^{n}w(X_i,Y_i) \ind_{\{V_i> V_{[{k}^*]}\}} \bigg] 
\bigg[ \sum_{i=1}^{n}\big( w(X_i,Y_i) +\Delta\tilde{\ell}_i \big)\ind_{\{V_i\leq V_{[{k}^*]}\}} \bigg] \\
&= \bigg[ \sum_{i=1}^{n}w(X_i,Y_i)\ind_{ \{V_i\leq V_{[{k}^*]} \}}\bigg] 
\bigg[  \sum_{i=1}^n  \Delta \tilde{u}_i  \ind_{\{V_i> V_{[{k}^*]}\}} \bigg]  -\bigg[ \sum_{i=1}^{n}w(X_i,Y_i) \ind_{\{V_i> V_{[{k}^*]}\}} \bigg] \bigg[   \sum_{i=1}^{n} \Delta\tilde{\ell}_i  \ind_{\{V_i\leq V_{[{k}^*]}\}} \bigg].
\$
Following similar arguments, we have 
\$
&- w(X_{n+1},Y_{n+1}) \cdot \sum_{i=1}^{n}\hat{\ell}(X_i)\ind_{\{V_i\leq V_{[\hat{k}^*]}\}}  
+ ~\hat{u}(X_{n+1}) \cdot \sum_{i=1}^{n}w(X_i,Y_i)\ind_{\{V_i\leq V_{[\hat{k}^*]}\}}  \\
&\geq - w(X_{n+1},Y_{n+1}) \cdot \sum_{i=1}^{n}\big( w(X_i,Y_i) +\Delta\tilde{\ell}_i \big)\ind_{\{V_i\leq V_{[\hat{k}^*]}\}}  
+ ~\big( w(X_{n+1},Y_{n+1}) +\Delta\tilde{u}_{n+1} \big) \cdot \sum_{i=1}^{n}w(X_i,Y_i)\ind_{\{V_i\leq V_{[\hat{k}^*]}\}}\\
&= - w(X_{n+1},Y_{n+1}) \cdot \sum_{i=1}^{n} \Delta\tilde{\ell}_i  \ind_{\{V_i\leq V_{[\hat{k}^*]}\}}  
+  \Delta\tilde{u}_{n+1}  \cdot \sum_{i=1}^{n}w(X_i,Y_i)\ind_{\{V_i\leq V_{[\hat{k}^*]}\}}
\$
By construction, we know that $\Delta\tilde{\ell}_i\geq 0$ and $\Delta\tilde{u}_i \leq 0$ for $i=1,\dots,n+1$. 
Putting the lower bounds together, we obtain
\$
\textrm{(i)} &\geq   \bigg[ \sum_{i=1}^{n}w(X_i,Y_i)\ind_{ \{V_i\leq V_{[\hat{k}^*]} \}}\bigg] \bigg[  \Delta\tilde{u}_{n+1} + \sum_{i=1}^n  \Delta \tilde{u}_i  \ind_{\{V_i> V_{[\hat{k}^*]}\}}   \bigg] \\
&\qquad -\bigg[w(X_{n+1},Y_{n+1}) +  \sum_{i=1}^{n}w(X_i,Y_i) \ind_{\{V_i> V_{[\hat{k}^*]}\}} \bigg] \bigg[   \sum_{i=1}^{n} \Delta\tilde{\ell}_i  \ind_{\{V_i\leq V_{[\hat{k}^*]}\}} \bigg] \\
&\geq \bigg[ \sum_{i=1}^{n}w(X_i,Y_i) \bigg] \bigg[  \sum_{i=1}^{n+1}  \Delta \tilde{u}_i   \bigg] - \bigg[ \sum_{i=1}^{n+1}w(X_i,Y_i)   \bigg] \bigg[   \sum_{i=1}^{n} \Delta\tilde{\ell}_i    \bigg].
\$
Since the term (ii) is non-negative, we have the lower bound
\$
\EE\bigg[\frac{\textrm{(i)}}{\textrm{(ii)}}\bigg] = & \EE\,\Bigg[\, \frac{\sum_{i=1}^{n}w(X_i,Y_i)\ind_{\{V_i\leq V_{[\hat{k}^*]}\}}}{\sum_{i=1}^{n+1}w(X_i,Y_i)} -  \,\frac{\sum_{i=1}^{n}\hat{\ell}(X_i)\ind_{\{V_i\leq V_{[\hat{k}^*]}\}}}{\sum_{i=1}^n \hat{\ell}(X_i)\ind_{\{V_i\leq V_{[\hat{k}^*]}\}} + \sum_{i=1}^n \hat{u}(X_i)\ind_{\{V_i> V_{[\hat{k}^*]}\}}+\hat{u}(X_{n+1}) } \Bigg] \\
&\geq \EE\,\Bigg[\, \frac{\big[ \sum_{i=1}^{n}w(X_i,Y_i) \big] \big[  \sum_{i=1}^{n+1}  \Delta \tilde{u}_i   \big] - \big[ \sum_{i=1}^{n+1}w(X_i,Y_i)   \big] \big[   \sum_{i=1}^{n} \Delta\tilde{\ell}_i    \big]}{ \big[\sum_{i=1}^{n+1}w(X_i,Y_i)\big] \big[  \sum_{i=1}^n \hat{\ell}(X_i)\ind_{\{V_i\leq V_{[\hat{k}^*]}\}} + \sum_{i=1}^n \hat{u}(X_i)\ind_{\{V_i> V_{[\hat{k}^*]}\}}+\hat{u}(X_{n+1}) \big]   }\Bigg] \\
&\stackrel{\textrm{(a)}}{\geq} \EE\,\Bigg[\, \frac{\big[ \sum_{i=1}^{n}w(X_i,Y_i) \big] \big[  \sum_{i=1}^{n+1}  \Delta \tilde{u}_i   \big] - \big[ \sum_{i=1}^{n+1}w(X_i,Y_i)   \big] \big[   \sum_{i=1}^{n} \Delta\tilde{\ell}_i    \big]}{ \big[\sum_{i=1}^{n+1}w(X_i,Y_i)\big] \big[  \sum_{i=1}^n \hat{\ell}(X_i)  \big]   }\Bigg] \\
&\stackrel{\textrm{(b)}}{\geq} - \EE\,\Bigg[\, \frac{  \sum_{i=1}^{n+1}   [ \hat{u}(X_i) - w(X_i,Y_i) ]_-   }{   \sum_{i=1}^n \hat{\ell}(X_i) }\Bigg] -  \EE\,\Bigg[\, \frac{     \sum_{i=1}^{n} [\hat{\ell}(X_i) - w(X_i,Y_i)]_+ }{   \sum_{i=1}^n \hat{\ell}(X_i)    }\Bigg].
\$
Above, step (a) follows from the fact that $\hat{u}(X_i)\geq \hat{\ell}(X_i)\geq 0$,
and step (b) is due to the non-negativity of $w(X_i,Y_i)$. By H\"{o}lder's inequality, 
\$
\EE\,\Bigg[\, \frac{  \sum_{i=1}^{n}  [\hat{\ell}(X_i)- w(X_i,Y_i)  ]_+  }{ \sum_{i=1}^n \hat{\ell}(X_i)  }\Bigg] &\leq \bigg\|\frac{1}{n}\sum_{i=1}^{n}  \big(\hat{\ell}(X_i)- w(X_i,Y_i)  \big)_+ \bigg\|_p \cdot \Bigg\|\frac{n}{\sum_{i=1}^n \hat{\ell}(X_i) } \Bigg\|_q \\
&\stackrel{\textrm{(a)}}{\leq} \Big\|  \big(\hat{\ell}(X_i)- w(X_i,Y_i) \big)_+  \Big\|_p \cdot \Bigg\|\frac{n}{\sum_{i=1}^n \hat{\ell}(X_i) } \Bigg\|_q \\
&\stackrel{\textrm{(b)}}{\leq} \Big\|  \big(\hat{\ell}(X_i)- w(X_i,Y_i)  \big)_+  \Big\|_p \cdot \Bigg\| \frac{1}{\hat{\ell}(X_i) } \Bigg\|_q,
\$
where step (a) follows from Minkowski's inequality, and the step (b) follows from
\$
 \frac{n}{\sum_{i=1}^n \hat{\ell}(X_i) }  \leq \frac{1}{n} \sum_{i=1}^n \frac{1}{\hat{\ell}(X_i) }  
\$
as implied by Cauchy-Schwarz inequality. 
Similarly,  
\$
\EE\,\Bigg[\, \frac{  \sum_{i=1}^{n+1}  [\hat{u}(X_i)-w(X_i,Y_i) ]_-  }{ \sum_{i=1}^n \hat{\ell}(X_i)  }\Bigg] & \leq \Big\|  \big(\hat{u}(X_i)-u(X_i)  \big)_-  \Big\|_p \cdot \Bigg\| \frac{1}{\hat{\ell}(X_i) } \Bigg\|_q + \frac{1}{n} \Big\|  \big(\hat{u}(X_{n+1})-u(X_{n+1})  \big)_-  \Big\|_p \cdot \Bigg\| \frac{1}{\hat{\ell}(X_i) } \Bigg\|_q,
\$
where the $L_p$ norm for $X_{n+1}$ is with respect to $\tilde\PP$, hence 
\$
\frac{1}{n}\Big\|  \big(\hat{u}(X_{n+1})-u(X_{n+1})  \big)_-  \Big\|_p = \bigg\| \frac{w(X_i,Y_i)^{1/p}}{n} \cdot \big(\hat{u}(X_{i})-u(X_{i})  \big)_-  \bigg\|_p 
\$
Combining the above results, we have 
\$
\PP\big(Y_{n+1}\in \hat{C}(X_{n+1}) \big)  \geq 1-\alpha - \hat\Delta\cdot \big\|1 / \hat{\ell}(X_i)   \big\|_q,
\$
where
\$
\hat\Delta = \Big\|  \big(\hat{\ell}(X_i)-\ell(X_i)  \big)_+  \Big\|_p +  \Big\|  \big(\hat{u}(X_i)-u(X_i)  \big)_-  \Big\|_p  + \bigg\| \frac{w(X_i,Y_i)^{1/p}}{n} \cdot \big(\hat{u}(X_{i})-u(X_{i})  \big)_-  \bigg\|_p  ,
\$
which completes the proof of Theorem \ref{thm:mgn}.
\end{proof}

\subsection{Proof of Theorem~\ref{thm:whp}} \label{app:proof_thm_whp}

\begin{proof}[Proof of Theorem~\ref{thm:whp}] 
Throughout the proof, all statements are conditional on $\cD_\train$. 
By the independence of 
$\cD_\calib\cup\{(X_{n+1},Y_{n+1})\}$, 
the scores $\{V(X_i,Y_i)\}_{i\in \cD_\calib}$
are i.i.d.~and independent of $V(X_{n+1},Y_{n+1})$.

Recall that $G(\cdot)$ is defined in~\eqref{eq:def_G_t_formula}.
To begin with, we define 
\$
\hat{q} = \inf\big\{ t\colon {G}(t)\geq 1-\alpha  \},\quad \hat{q}_n = \inf\big\{ t\colon \hat{G}_n(t)\geq 1-\alpha  \}.
\$
For any fixed $\epsilon>0$, we have 
\$
\PP(\hat{q}_n \leq \hat{q} - \epsilon)& = \PP \big( \hat{G}_n( \hat{q} - \epsilon) \geq 1-\alpha\big) \\
&\leq \PP \big(  {G}( \hat{q} - \epsilon) \geq \hat{G}_n( \hat{q} - \epsilon) \geq 1-\alpha\big) 
+ \PP \big(  {G}( \hat{q} - \epsilon) < \hat{G}_n( \hat{q} - \epsilon)  \big) \leq \delta.
\$
Here the last inequality follows from the fact that 
${G}(\hat{q}-\epsilon)<1-\alpha$ for any fixed $\epsilon>0$ and 
$\PP\big( {G}( \hat{q} - \epsilon) < \hat{G}_n( \hat{q} - \epsilon)  \big) \leq \delta$ 
by \eqref{eq:hatG_t} with $t=\hat{q}-\epsilon$. 
Therefore, by the continuity of probability measures, we have 
\$
\PP (\hat{q}_n \geq  \hat{q}\, ) = 1-\lim_{\epsilon\to 0^+}\PP(\hat{q}_n \leq \hat{q} - \epsilon)\geq 1-\delta. 
\$
Moreover, on the event $\{\hat{q}_n\geq \hat{q}\,\}$, by the definition  of $\hat{C}(X_{n+1})$,  it holds for any $\TPP \in \cP(\PP,\ell,u)$ that
\$
\tilde\PP\big( Y_{n+1}\in \hat{C}(X_{n+1})\biggiven \cD_{\calib}\big) 
&=\tilde\PP\big(V(X_{n+1},Y_{n+1})\leq \hat{q}_n \biggiven \cD_{\calib}\big) \\
&\geq  \tilde\PP\big(V(X_{n+1},Y_{n+1})\leq \hat{q} \biggiven \cD_{\calib} \big) \\
&= \EE\big[ \ind_{\{V(X,Y)\leq \hat{q}\,\}}  w(X,Y) \biggiven \cD_{\calib}  \big],
\$
where the expectation is with respect to $(X,Y)\sim \PP $ 
independent of $\cD_{\calib}$. 
By the definition of ${G}(t)$ 
\$
\hat{q} = \inf\bigg\{ t\colon \max\Big\{ \EE\big[\ind_{\{V(X,Y)\leq t\}}\hat{\ell}(X)\big],~ 1- \EE\big[ \ind_{\{V(X,Y)>t\}} \hat{u}(X) \big]  \Big\}  \geq 1-\alpha  \bigg\} = \min\{\hat{q}_1,\hat{q}_2\},
\$
where $(X,Y)\sim \PP$ is an independent copy and we define 
\$
&\hat{q}_1 = \inf\Big\{ t\colon \EE\big[\ind_{\{V(X,Y)\leq t\}}\hat{\ell}(X)\big] \geq 1-\alpha\Big\},\\
&\hat{q}_2 = \inf\Big\{ t\colon 1- \EE\big[ \ind_{\{V(X,Y)>t\}} \hat{u}(X) \big] \geq 1-\alpha\Big\}.
\$
For constants $\hat{q}_1,\hat{q}_2$, we have 
\$
&\EE\big[ \ind_{\{V(X,Y)\leq \hat{q}\,\}}  w(X,Y) \biggiven \cD_{\calib}  \big] \\
&= \min\Big\{ \EE\big[ \ind_{\{V(X,Y)\leq  \hat{q}_1 \}}  w(X,Y) \biggiven \cD_{\calib} \big],~ 
\EE\big[ \ind_{\{V(X,Y)\leq  \hat{q}_2 \}}  w(X,Y) \biggiven \cD_{\calib}\big] \Big\}.
\$
We analyze the two terms separately. Firstly,  
\$
&\EE\big[ \ind_{\{V(X,Y)\leq  \hat{q}_1 \}}  w(X,Y) \biggiven \cD_{\calib} \big] \\
&= \EE\big[ \ind_{\{V(X,Y)\leq  \hat{q}_1 \}}  \hat{\ell}(X) \biggiven \cD_{\calib}\big]  
- \EE\big[ \ind_{\{V(X,Y)\leq  \hat{q}_1 \}} \big( \hat{\ell}(X) -w(X,Y)\big)\biggiven \cD_{\calib} \big] \\
&\stackrel{\textnormal{(a)}}{\geq} 1-\alpha  - \EE\big[ \ind_{\{V(X,Y)\leq  \hat{q}_1 \}} \big( \hat{\ell}(X) -w(X,Y)\big)\biggiven \cD_{\calib} \big] \\
&\geq 1-\alpha - \EE\big[\big( \hat{\ell}(X) - w(X,Y)\big)_+\big],
\$
where the step (a) follows from the definition of $\hat{q}_1$. 
Similarly, 
\$
&\EE\big[ \ind_{\{V(X,Y)\leq  \hat{q}_2 \}}  w(X,Y) \biggiven \cD_{\calib}  \big] \\
&= 1 - \EE\big[ \ind_{\{V(X,Y)> \hat{q}_2 \}}  w(X,Y) \biggiven \cD_{\calib} \big] \\
&= 1 - \EE\big[ \ind_{\{V(X,Y)\leq  \hat{q}_2 \}}  \hat{u}(X) \biggiven \cD_{\calib} \big]  
+ \EE\big[ \ind_{\{V(X,Y)\leq  \hat{q}_2 \}} \big( \hat{u}(X) -w(X,Y)\big)\biggiven \cD_{\calib}\big] \\
&\geq 
1-\alpha  + 
\EE\big[ \ind_{\{V(X,Y)\leq  \hat{q}_2 \}} \big( \hat{u}(X) -w(X,Y)\big)\biggiven \cD_{\calib} \big] \\
&\geq 1-\alpha - \EE\big[\big( \hat{u}(X) - w(X,Y)\big)_-\big],
\$
where the first equality follows from the fact that $w(x,y)$ is a likelihood ratio, and the first inequality follows from the definition of $\hat{q}_2$. Putting them together, on the event $\{\hat{q}_n\geq \hat{q}\,\}$ which happens with probability at least $1-\delta$ with repsect to $\cD_{\calib}$, it holds that 
\$
\tilde\PP\big( Y_{n+1}\in \hat{C}(X_{n+1})\biggiven \cD_{\calib} \big)  \geq 1-\alpha - \hat\Delta,
\$
where the gap is
\$
\hat\Delta = \max \Big\{ \EE \big[\big(\, \hat{\ell}(X)-w(X,Y)\big)_+ \big] ,
~ \EE\big[\big( \hat{u}(X)-w(X,Y)\big)_{-} \big]     \Big\},
\$
and the expectations are with respect to an independent copy $(X,Y)\sim \PP$. 
Therefore, we conclude the proof of Theorem \ref{thm:whp}.
\end{proof}

\subsection{Proof of Proposition \ref{prop:wsr_bound}}\label{appendix:proof_wsr_bound}

\begin{proof}[Proof of Proposition \ref{prop:wsr_bound}]
Let $t\in \RR$ be any fixed constant 
or any random variable in $\sigma(\cD_\train)$, 
and fix any  $\delta\in(0,1)$. 
We condition on $\cD_\train$ 
throughout the proof, so that 
$V$, $\hat{\ell}$ and $\hat{u}$ can be 
viewed as fixed and $t$ can be viewed as constant.

Since $\hat{G}_n(t) = \max\big\{\hat{G}_n^\ell(t),\hat{G}_n^U(t)\big\}$, by the definition \eqref{eq:def_G_t_formula}, it suffices to show that
\$
&\PP_{\cD_{\calib}}\Big(\hat{G}_n^\ell(t)\leq \EE\big[\ind_{\{V_i\leq t\}}\hat\ell(X_i) \big]\Big)\geq 1-\delta/2\\
\text{and}\quad &\PP_{\cD_{\calib}}\Big(\hat{G}_n^U(t)\leq 1- \EE\big[\ind_{\{V_i>t\}}\hat{u}(X_i) \big]\Big)\geq 1-\delta/2,
\$
Now let 
\$
f(X_i) = \ind_{\{V_i\leq t\}}\hat\ell(X_i)/M,
\quad 
h(X_i) = 1- \ind_{\{V_i>t\}}\hat{u}(X_i)/M,
\$ 
so that 
$f(X_i)$ and $h(X_i)$, $1\leq i\leq n$, are i.i.d.\ random variables 
in $[0,1]$. 
We rescale the bounds into $\hat{f}_n =\hat{G}_n^\ell(t)/M$ 
and $\hat{h}_n = (\hat{G}_n^U(t)-1+M)/M$, hence 
by the tower property of conditional expectations, it suffices to show that
\$
\PP_{\cD_{\calib}}\Big(\hat{f}_n \leq \EE\big[f(X_i)  \big] \Big)\geq 1-\delta/2
\quad \text{and}\quad 
\PP_{\cD_{\calib}}\Big(\hat{h}_n  \leq \EE\big[h(X_i) \big] \Big)\geq 1-\delta/2.
\$
We show the desired result for $f(\cdot)$ and that for $h(\cdot)$ naturally applies. 

Consider the filtration $\{\cF_i\}_{i \ge 1}$, 
where the $\sigma$-algebra $\cF_i = \sigma\{X_1,Y_1,\dots,X_i,Y_i\}$. 
Then by definition, $\hat\sigma^L_{i}$ and 
$\hat\mu^L_{i}$ are measurable with 
respect to $\cF_i$, and $\{\nu_i^L\}_{i\ge 1}$
is a predictable sequence with respect to $\{\cF_i\}_{i\ge 1}$.
Thus letting 
\$
f_0=\EE\big[f(X_i)\big] = \EE\big[\ind_{\{V_i\leq t\}}\ell(X_i)\big]/M,
\$
we have 
\$
\EE\big[ \cK_i^\ell(f_0) \biggiven \cF_{i-1}  \big]
= \cK_{i-1}^\ell(f_0) \cdot \Big(1 + \nu_j^L
\cdot\EE\big[ f(X_i) - f_0  \biggiven \cF_{i-1}\big] \Big)
= \cK_{i-1}^\ell(f_0).
\$
Thus $\{K_i^\ell(f_0)\}_{i=1}^n$ is a martingale. Meanwhile,
since $f(X_i)\in [0,1]$, we know $f(X_i)-f_0\geq -1$.
Hence $\cK_i^\ell(f_0)\geq 0$ for all $i\in [n]$. Therefore by Ville's inequality, 
\$
\PP_{\cD_{\calib}}\Big( \max_{1\leq i \leq n} \cK^L_i(f_0)>\frac{2}{\delta}  \Big) \leq \frac{\delta}{2}.
\$
Also note that $\cK_i(g)$ is a decreasing function in $g\in \RR$, hence 
\$
\PP_{\cD_{\calib}}\Big(\hat{f}_n > \EE\big[f(X_i)\big]\Big) \leq \PP\Big( \max_{1\leq i \leq n} \cK_i(f_0)>\frac{2}{\delta}  \Big) \leq \delta/2,
\$
Therefore, 
we conclude the proof of the desired results. 
\end{proof}

\subsection{Proof of Proposition~\ref{prop:tight_mgn}} \label{app:tight_mgn}
\begin{proof}[Proof of Proposition~\ref{prop:tight_mgn}]
Throughout the proof, we denote the generic random variables $(X,Y)\sim \PP$
and $V=V(X,Y)$. 
Denoting the right-hand side of \eqref{eq:tight_mgn} as $F^*(t)$, 
We are to show that 
\begin{enumerate}[(i)]
    \item $F^*(\cdot)\colon \RR\to [0,1]$ is a distribution function.
    \item $F^*(t)$ is a lower bound for $F(t;\,\cP(\PP,\ell,u))$ for all $t\in \RR$.
    \item $F^*(t)$ can be achieved by some element in $\cP(\PP,\ell,u)$. 
\end{enumerate}

First of all, 
(i) is straightforward by noting that $F^*(\cdot)$ is right-continuous due to the 
continuity of probability measures 
and $\lim_{t\to -\infty}F^*(t)=0$, $\lim_{t\to +\infty} F^*(t) = 1$. 
Also, similar to the arguments in the proof of Theorem~\ref{thm:whp}, 
for any $\tilde\PP\in \cP(\PP,\ell,u)$, we have 
\$
&F(t\,;V,\TPP) = \EE\bigg[ \ind_{\{V(X,Y)\leq t\}}\frac{\ud \TPP}{\ud \PP}(X,Y)\bigg] \geq \EE\big[ \ind_{\{V(X,Y)\leq t\}}\ell(X)\big], \\
&F(t\,;V,\TPP) = 1- \EE\bigg[ \ind_{\{V(X,Y)> t\}}\frac{\ud \TPP}{\ud \PP}(X,Y)\bigg] \geq 1-\EE\big[ \ind_{\{V(X,Y)> t\}}u(X)\big],
\$
hence (ii) follows. For (iii), we are to construct one distribution $\PP^*\in \cP(\PP,\ell,u)$
so that $F(\cdot\,;V,\PP^*)=F^*(\cdot)$. 

If $\EE[u(X)] = 1$, then since $\EE[w(X,Y)]=1$ for the likelihood ratio function, 
the collection 
$\cP(\PP,\ell,u) = \{\PP^*\}$ is a singleton with $\ud\PP^*/\ud\PP(x,y)=u(x)$, and 
\$
F\big(t\,;\cP(\PP,\ell,u) \big) = F(t\,;\PP^*) = 
\EE\big[\ind_{\{V(X,Y)\leq t\}} {u}(X) \big] = F^*(t),
\$
where the last equality follows from $\ell(x)\leq u(x)$. 
Then $\PP^*$ satisfies (iii). 
Similarly, for the case where $\EE[\ell(X)] =1$, the collection 
$\cP(\PP,\ell,u) = \{\PP^*\}$ is a singleton with $\ud\PP^*/\ud\PP(x,y)=\ell(x)$, and
\$
F\big(t\,;\cP(\PP,\ell,u) \big) = F(t\,;\PP^*) = 
\EE\big[\ind_{\{V(X,Y)\leq t\}} {l}(X) \big] = F^*(t),
\$
hence $\PP^*$ satisfies (iii). 
In the sequel, we consider the case where $\EE[\ell(X)] < 1 < \EE[u(X)]$.
We define 
\$
H(t)\,:=\, \EE\big[\ell(X)\cdot\ind_{\{V \le t\}} + u(X)\cdot\ind_{\{V>t\}}\big].
\$
By the construction and the continuity of probability measures, we have 
\$
\lim_{t \rightarrow \infty} H(t) = \EE \big[\ell(X)\big] < 1 < 
\EE \big[u(X)\big] = \lim_{t\rightarrow -\infty} H(t).
\$
We additionally define $t^* = \inf\{t\in\RR: H(t) \le 1\}$. Note that
$t^* < \infty$ and $H(t^*) \le 1$ by the right continuity of $H(t)$. 
We also define the left limit of $H(t)$ at $t^*$ as 
\$
    H_-(t^*) = \lim_{t \,\uparrow\, t^*} H(t),
\$
and define the weight as 
\$
\gamma = \frac{1 - H(t^*)}{H_-(t^*) - H(t^*)}\ind_{\{H_-(t^*)>1\}}
\$
Here by the definition of $t^*$, we have $H_-(t^*)\geq 1 \geq H(t^*)$, and the left limit takes the form
\$
H_-(t^*) = \EE\big[\ell(X)\cdot\ind_{\{V < t^*\}} + u(X)\cdot\ind_{\{V\geq t^*\}}\big].
\$
We now construct the worst-case distribution $\PP^*$ by 
\$
\frac{\ud \PP^*}{\ud \PP}(x,y) =  \gamma \cdot 
\big[\ell(x) \ind_{\{V(x,y)<t^*\}}+u(x) \ind_{\{V(x,y)\ge t^*\}} \big]
+(1-\gamma)\cdot
\big[\ell(x) \ind_{\{V(x,y)\le t^*\}} + u(x)\ind_{\{V(x,y)>t^*\}}\big].
\$
We denote the likelihood ratio $w^*(x,y) =\ud \PP^*/\ud \PP(x,y)$ as constructed above.
Note that
\$
\PP^*(\cX\times \cY) = 
\EE\big[w^*(X,Y)\big] = \gamma \cdot H_-(t^*) +
(1-\gamma) \cdot H(t^*) = 1,
\$
and also $\ell(x) \le w^*(x,y) \le u(x)$ for all $x\in \cX$. 
Therefore $\PP^*$ is a probability measure 
and is an element of $\cP(\PP,\ell,u)$.
In the following, we check that $F(t\,;V,\PP^*) = F^*(t)$ for all $t\in \RR$. 
Recall that we work with the case $\EE[\ell(X)] < 1< \EE[u(X)]$. 
For any constant $t<t^*$, by the construction of $w^*$,
\$
F(t\,;V,\PP^*) &= \EE\big[ w^*(X,Y) \ind_{\{V(X,Y)\leq t\}}  \big] \\
&=  \gamma \cdot 
\EE\big[\ell(X) \ind_{\{V(X,Y)\leq t\}}\big]
+(1-\gamma)\cdot
\big[\ell(X) \ind_{\{V(X,Y)\le t\}} \big]\\
&= \EE\big[\ell(X) \cdot \ind_{\{V\le t\}}\big] = F^*(t),
\$
where the last equality follows from the fact that 
\$
\EE\big[\ell(X) \cdot \ind_{\{V\le t\}}\big] >  1- \EE\big[u(X) \cdot \ind_{\{V > t\}}\big]
\$
since  $H(t) > 1$ for $t<t^*$. 
When $t = t^*$, note that 
\$
    &\EE\big[\ell(X) \cdot \ind_{\{V\le t^*\}} \big] - 1 +\EE \big[  u(X) \cdot \ind_{\{V > t^*\}}\big] 
= H(t^*) - 1\le 0,
\$
hence the right-hand side of ~\eqref{eq:tight_mgn} admits the form
\$
F^*(t^*) = 1 - \EE\big[u(X) \cdot \ind_{\{V > t^* \}}\big].
\$
Meanwhile, by the construction of $w^*(x,y)$, we have 
\$
F(t^*\,;V,\PP^*) &= \EE\big[ w^*(X,Y) \ind_{\{V(X,Y)\leq t^*\}}  \big] \\
&= \gamma \cdot 
\EE\big[\ell(X) \ind_{\{V(X,Y)<t^*\}}+u(X) \ind_{\{V(X,Y) = t^*\}} \big]
+(1-\gamma)\cdot
\big[\ell(X) \ind_{\{V(X,Y)\le t^*\}}\big] \\
&= \EE\big[\ell(X) \ind_{\{V(X,Y) \le t^*\}} \big] - \gamma \cdot 
\big( \EE\big[\ell(X)  \ind_{\{V(X,Y) = t^*\}} -  u(X) \cdot\ind_{\{V(X,Y) = t^*\}}\big] \big)\\
&=  \EE\big[\ell(X) \ind_{\{V(X,Y) \le t^*\}} \big] - \gamma \cdot 
\big( H_-(t^*) - H(t^*) \big) \\
&= \EE\big[\ell(X)  \ind_{\{V(X,Y)\le t^*\}}\big] + 1 - H(t^*) 
= 1 - \EE\big[u(X)\cdot \ind_{\{V>t^*\}}\big] = F^*(t^*).
\$
Similarly,  when $t > t^*$, by the cosntruction of $w^*(x,y)$ we have 
\$
F(t^*\,;V,\PP^*) &= 1 - \EE\big[ w^*(X,Y) \ind_{\{V(X,Y)> t \}}  \big]\\
&= 1-  \gamma \cdot 
\EE\big[u(X) \ind_{\{ V(X,Y)> t\}} \big]
+(1-\gamma)\cdot
\EE\big[ u(X)\ind_{\{ V(X,Y)> t\}}\big] \\
&= 1 - \EE\big[u(X) \ind_{\{ V(X,Y)> t\}} \big] = F^*(t),
\$
where the last equality follows from the fact that $H(t)\leq 1$
thus $1 - \EE [u(X) \ind_{\{ V(X,Y)> t\}} ] \geq  
\EE [\ell(X)\ind_{\{ V(X,Y)\leq  t\}} ]$. 
Combining the three cases, we arrive at $F^*(\cdot) = F(\cdot \,;V,\PP^*)$,
hence~\eqref{eq:tight_mgn} follows and we conclude the proof of Proposition~\ref{prop:tight_mgn}.
\end{proof}

\subsection{Proof of Proposition~\ref{prop:id_causal}} \label{app:id_causal}
\begin{proof}[Proof of Proposition~\ref{prop:id_causal}]
The proof proceeds by showing that $\cP\subset \cP(\PP, f,\ell_0,u_0)$ and 
$\cP(\PP, f,\ell_0,u_0)\subset \cP$, 
which together lead to the desired result. 

\paragraph{Step 1: $\cP\subset \cP(\PP, f,\ell_0,u_0)$.}
Let $\PP^\super$ be any super-population that satisfies~\eqref{eq:gamma_sel_mgn}
and~\eqref{eq:data_compat}. 
Due to the partial observation of potential outcomes, 
the observed distribution adimits the decomposition 
\$
\PP^{\obs}_{X,Y,T} = \PP^\obs_{T=1}\times \PP^\obs_{X,Y(1)\given T=1}  + 
\PP^\obs_{T=0}\times \PP^\obs_{X,Y(0)\given T=0}.
\$
Therefore, the data-compatibility condition~\eqref{eq:data_compat} 
is equivalent to
\#\label{eq:data_compat_2}
\PP_{T}^{\sup} = \PP^\obs_T,\quad 
\PP_{X,Y(1)\given T=1}^{\sup}  = \PP^\obs_{X,Y(1)\given T=1},\quad 
\PP_{X,Y(0)\given T=0}^{\sup}  = \PP^\obs_{X,Y(0)\given T=0}.
\#
where the latter two are further equivalent to
\#\label{eq:data_compat_3}
\PP_{X \given T=w}^{\sup}  = \PP^\obs_{X \given T=w}, \quad \PP^{\sup}_{Y(1)\given X,T=w}  = \PP^\obs_{ Y(1)\given X, T=w},\quad w\in\{0,1\}.
\#
Recall that $\TPP = \PP^\super_{X,Y(1)\given T=0}$ and $\PP = \PP^\obs_{X,Y(1)\given T=1}$. 
Then we have 
\$
\frac{\ud \TPP_X}{\ud \PP_X} = 
\frac{\ud \PP^\super_{X \given T=0} }{\ud \PP^\obs_{X\given T=1}} =
\frac{\ud \PP^\obs_{X \given T=0} }{\ud \PP^\obs_{X\given T=1}} = 
\frac{\PP^\obs(T=1) \cdot \PP^\obs(T=0)}{\PP^\obs(T=0)\cdot \PP^\obs(T=1)} = f(X),
\$
where the second equality follows from~\eqref{eq:data_compat_3} and 
the third equality follows from the Bayes rule. On the other hand,
the shift of conditional distribution is 
\$
\frac{\ud\TPP_{Y(1)\given X}}{\ud \PP_{Y(1)\given X}} = 
\frac{\ud \PP^\super_{Y(1)\given X,T=0}}{\ud \PP^\obs_{Y(1)\given X,T=1}} \in [1/\Gamma,\Gamma]
\$
according to Lemma~\ref{lem:shift}. Therefore, $\PP^\super \in \cP(\PP,f,\ell_0,u_0)$ by the definition. Hence we have $\cP\subset \cP(\PP, f,\ell_0,u_0)$.

\paragraph{Step 2: $\cP(\PP, f,\ell_0,u_0)\subset \cP$.} For this part, we are to show that
for any $\TPP\in \cP(\PP, f,\ell_0,u_0)$, there exists 
some $\PP^\super$ satisfying~\eqref{eq:gamma_sel_mgn} and~\eqref{eq:data_compat}
such that $\TPP = \PP^\super_{X,Y(1)\given T=0}$. 
Fixing an arbitrary probability distribution 
$\TPP\in \cP(\PP, f,\ell_0,u_0)$, we define the function 
\$
w(y\given x) = \frac{\ud\TPP_{Y(1)\given X}}{\ud \PP_{Y(1)\given X}}(y\given x),
\$
so that $w(y\given x)\in[1/\Gamma,\Gamma]$ for $\PP^\obs$-almost all $(x,y)\in \cX\times\cY$.
Also, since $\TPP$ is a distribution, we have 
\$
\EE\big[w(Y(1)\given x)\given X=x\big] = \TPP\big(Y(1)\in \cY\biggiven X=x\big) = 1
\$
for $\PP$-almost all $x\in \cX$, and the conditional expectation is induced by 
$(X,Y(1))\sim \PP = \PP^\obs_{X,Y(1)\given T=1}$. 
Applying Lemma~\ref{lem:sharp_shift} with $r(x,y) = w(y\given x)$
and $t=1$,  
we know 
there exists a distribution $\PP^\super$ over $(X,Y(0),Y(1),U,T)$
for some confounder $U$ 
that satisfies~\eqref{eq:data_compat} and~\eqref{eq:gamma_sel_mgn}, 
and
\$
\frac{\ud \PP^\super_{Y(1),X\given T=0}}{\ud \PP^\super_{Y(1),X \given T=1}}(y\given x) = w(y\given x).
\$
Since $\PP^\super$ satisfies~\eqref{eq:data_compat}, we have 
$\PP^\super_{Y(1),X \given T=1} = \PP^\obs_{Y(1),X \given T=1} = \PP_{Y(1)\given X}$
where we recall the definition of $\PP$, the distribution at hand. Hence 
\$
w(y\given x) =   
\frac{\ud \PP^\super_{Y(1),X\given T=0}}{\ud \PP_{Y(1)\given X}}(y\given x) = \frac{\ud\TPP_{Y(1)\given X}}{\ud \PP_{Y(1)\given X}}(y\given x).
\$
Therefore, we have $\TPP_{Y(1)\given X} = \PP^\super_{Y(1),X\given T=0}$. 
Furthermore, since 
$\PP^\super$ satisfies~\eqref{eq:data_compat}, we know 
\$
\frac{\ud \PP^\super_{X\given T=0} }{\ud \PP_X} = \frac{\ud \PP^\obs_{X\given T=0} }{\ud \PP^\obs_{X\given T=1}} = f(X) = \frac{\ud\TPP_{ X}}{\ud \PP_{ X}},
\$
where the second equality follows from the Bayes rule
and the last equality follows from the fact that $\TPP\in \cP(\PP,f,\ell_0,u_0)$. 
Thus, we have $\TPP_X = \PP^\super_{X\given T=0}$. 
Putting the two parts together, we have $\TPP = \PP^\super_{X,Y(1)\given T=0}$.
By the arbitrariness of $\TPP$, we arrive at 
$\cP(\PP, f,\ell_0,u_0)\subset \cP$.

Combining the two steps, we conclude the proof of Proposition~\ref{prop:id_causal}.
\end{proof}

\subsection{Sharpness of the identification set}

\begin{lemma}[Sharpness of Lemma~\ref{lem:shift}]\label{lem:sharp_shift} 
Given $t\in\{0,1\}$, a marginal
distribution $\PP^\obs$ over $(X,Y,T)$ and
a function $r(x,y) \in [1/\Gamma,\Gamma]$
such that 
\$
\EE^\obs\big[ r\big(X,Y(t)\big)  \biggiven X,T=t \big] = 1,\quad \PP^\obs\textrm{-almost surely},
\$
there exists a distribution $\PP^{\sup}$ over
$(X,Y(0),Y(1),U,T)$ for some confounder $U$
such that 
\begin{enumerate}[(i)]
    \item $\PP^{\sup}_{X,Y,T} = \PP^\obs_{X,Y,T}$ for $Y=Y(T)$;
    \item $\PP^\super$ satisfies the marginal $\Gamma$-selection condition;
    \item the likelihood 
    ratio is exactly $r(x,y)$, so that  $r(x,y) = \frac{\ud \PP^\super_{Y(t)\given X, T=1-t}}{\ud \PP^\super_{Y(t)\given X,T=t}}(x,y)$ for $\PP^{\sup}$-almost all $x,y$. 
\end{enumerate}
hold simulatenously. 
\end{lemma}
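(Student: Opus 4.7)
By the symmetry between $Y(0)$ and $Y(1)$ in the hypotheses, I will prove the statement for $t=1$; the $t=0$ case follows by swapping roles. The plan is to construct $\PP^\super$ explicitly by specifying all of its conditionals, use $U=Y(1)$ (or equivalently $U=(Y(0),Y(1))$) as the confounder, and then verify properties (i)--(iii) directly.

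Concretely, I will set $\PP^\super_X = \PP^\obs_X$ and $\PP^\super_{T\given X} = \PP^\obs_{T\given X}$, so that the marginal of $(X,T)$ matches. Given $X=x$, I then specify the two counterfactuals as conditionally independent given $(X,T)$, with
\$
\PP^\super_{Y(1)\given X=x,T=1} = \PP^\obs_{Y\given X=x,T=1}, \qquad
\PP^\super_{Y(0)\given X=x,T=0} = \PP^\obs_{Y\given X=x,T=0},
\$
and with the missing counterfactuals defined through Radon--Nikodym derivatives:
\$
\ud\PP^\super_{Y(1)\given X=x,T=0}(y) = r(x,y)\cdot \ud\PP^\obs_{Y\given X=x, T=1}(y),
\$
while $\PP^\super_{Y(0)\given X=x,T=1}$ can be taken to equal $\PP^\obs_{Y\given X=x,T=0}$ (its choice is immaterial here). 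The normalization hypothesis $\EE^\obs[r(X,Y)\given X,T=1]=1$ is precisely what ensures that the prescribed $\PP^\super_{Y(1)\given X=x,T=0}$ is a probability measure for $\PP^\obs$-almost all $x$; the bound $r\in[1/\Gamma,\Gamma]$ guarantees mutual absolute continuity of the two conditionals of $Y(1)$.

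With this construction in hand, verification is mechanical. For (i), since $Y=Y(T)$ and by construction $Y\given X=x,T=w$ has distribution $\PP^\obs_{Y\given X=x,T=w}$ for both $w\in\{0,1\}$, the joint law of $(X,Y,T)$ under $\PP^\super$ equals $\PP^\obs$. For (iii) with $t=1$, the Radon--Nikodym derivative identity holds by definition. The main computation is (ii): taking $U=Y(1)$ and applying Bayes' rule, for any admissible $u$,
\$
\frac{\PP^\super(T=1\given X=x,U=u)/\PP^\super(T=0\given X=x,U=u)}{\PP^\super(T=1\given X=x)/\PP^\super(T=0\given X=x)}
= \frac{\ud\PP^\super_{U\given X=x,T=1}}{\ud\PP^\super_{U\given X=x,T=0}}(u) = \frac{1}{r(x,u)} \in [1/\Gamma,\Gamma],
\$
as desired. (If one instead prefers $U=(Y(0),Y(1))$, the same conclusion follows because $Y(0)$ is independent of $T$ given $X$ under the construction, so the $Y(0)$-factor cancels from the ratio.)

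The only potentially delicate point---and what I would treat most carefully---is the Bayes-rule step above at the level of general measures rather than densities, since the marginal $\Gamma$-selection condition is stated in terms of conditional probabilities. I would address this by noting that $\PP^\super_{U\given X=x,T=1}$ and $\PP^\super_{U\given X=x,T=0}$ are mutually absolutely continuous (the derivative is $1/r(x,\cdot)$, bounded away from $0$ and $\infty$), so one can make the Bayes computation rigorous by passing to a dominating measure and verifying the bound on the relevant Radon--Nikodym derivative pointwise. This is the only technical obstacle; the rest of the argument is a direct unpacking of the construction.
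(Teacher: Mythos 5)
Your proposal is correct and follows essentially the same route as the paper's proof: take $U=Y(1)$, match $\PP^\super_{X,T}$ to $\PP^\obs_{X,T}$, define the missing conditional $\PP^\super_{Y(1)\given X,T=0}$ via the Radon--Nikodym derivative $r(x,y)$ (with the normalization hypothesis guaranteeing it is a probability measure), couple $Y(0)$ so it does not affect the selection odds, and verify the marginal $\Gamma$-selection condition by the same Bayes-rule identity. The only differences are presentational.
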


\begin{proof}[Proof of Lemma~\ref{lem:sharp_shift}]
Fix any marginal distribution $\PP^\obs$ over $(X,Y,T)$ for $Y=Y(T)$,
and a function $r(x,y)\in[1/\Gamma,\Gamma]$ satisfying the given condition. 
We show the result for $t=1$, 
while that for $t=0$ follows exactly the same arguments. 

\paragraph{The construction of $\PP^\super$}
To begin with, we let the confounder be the counterfactual itself, 
so that $U=Y(1)$. The joint distribution is thus
\$
\PP^\super_{(X,Y(1),Y(0),T} = \PP^\super_{X,T}  \times \PP^\super_{(Y(1),Y(0))\given X,T}.
\$
We set the two parts separately. Firstly, we set
$
\PP^\super_{X,T} = \PP^\obs_{X,T}
$
for the distribution on $(X,T)$. 
The joint distribution of potential outcomes given $(X,T)$ admits
\$
\PP^\super_{(Y(1),Y(0))\given X,T} = \PP^\super_{Y(1)\given X,T} \times \PP^\super_{Y(0)\given X,T,Y(1)},
\$ 
Since our target is for $Y(1)$, we take a simple coupling where 
$Y(0)$ is independent of $(T,Y(1))$ conditional on $X$, so that we set
\#\label{eq:super_y0}
\PP^\super_{Y(0)\given X,T,Y(1)} = \PP^\super_{Y(0)\given X} = \PP^\super_{Y(0)\given X,T=0} = \PP^\obs_{Y(0)\given X,T=0}.
\#
On the other hand, we set $\PP_{Y(1)\given X,T}^\super$ for $T=0,1$ by
\#\label{eq:super_y1}
\PP^\super_{Y(1)\given X,T=1} = \PP^\obs_{Y(1)\given X,T=1}
\quad \textrm{and}\quad 
\frac{\ud \PP^\super_{Y(1)\given X,T=0} }{\ud \PP^\obs_{Y(1)\given X,T=1} }(y\given x) = r(x,y). 
\#
So far we've completed the pieces of constructing $\PP^\super$. 
It remains to check that it is indeed a probability measure. 
By construction, $\PP^\super_{X,T} = \PP^\obs_{X,T}$ is a
probability measure on $(X,T)$. Also, by the construction,
\$
\PP^\super\big( Y(1) \in \cY\given X,T = 1\big) = \PP^\obs\big( Y(1) \in \cY\given X,T = 1\big) = 1,
\$
and
\$
\PP^\super\big( Y(1) \in \cY\given X,T = 0\big) 
&= \EE^\obs\bigg[ \ind_{\{Y(1) \in \cY\}}\frac{\ud \PP^\super_{Y(1)\given X,T=0} }{\ud \PP^\obs_{Y(1)\given X,T=1} }\bigggiven X,T = 1\bigg] \\
&= \EE^\obs\Bigg[ \frac{\ud \PP^\super_{Y(1)\given X,T=0} }{\ud \PP^\obs_{Y(1)\given X,T=1} }\Bigggiven X,T = 1\Bigg] = \EE^\obs\big[ r\big(X,Y(1)\big)\biggiven X,T = 1\big]
    = 1,
\$
where the first equality is by the change-of-measure formula, 
the second equality follows from $1=\ind_{\{Y(1) \in \cY\}}$,
the third equality follows from the construction, and the last 
equality is the given condition on $r(x,y)$.
Thus, $\PP^\super_{Y(1)\given X,T}$ is a probability measure. 
Also, by the construction of
$\PP_{Y(0)\given X,T,Y(1)}$, we have 
\$
\PP^\super\big( Y(0) \in \cY\given X,T,Y(1)\big) = \PP^\obs\big( Y(0) \in \cY\given X,T = 0\big) = 1,
\$
hence $\PP^\super_{Y(0)\given X,T,Y(1)}$ is also a probability measure. 
Putting them together, we know that 
\$
\PP^\super_{X,T,Y(1),Y(0)} = \PP^\super_{X,T}\times 
\PP^\super_{Y(1)\given X,T} \times 
\PP^\super_{Y(0)\given X,T,Y(1)}
\$
is indeed a probability measure over $(X,T,Y(1),Y(0))$. 

\paragraph{Verify the properties} We now proceed to verify the three stated properties. 
For (i),
due to partial obervability, the observed distribution admits the decomposition 
\$
\PP^\obs_{X,Y,T} =  \PP^\obs_T \times \PP^\obs_{X,Y\given T} =
\PP^\obs_{T=1}\times \PP^\obs_{X,Y(1)\given T=1} + 
\PP^\obs_{T=0}\times \PP^\obs_{X,Y(0)\given T=0}.
\$
Similarly, the projection on the observable of $\PP^\super$ is 
\$
\PP^\super_{X,Y,T} =  \PP^\super_T \times \PP^\super_{X,Y\given T} =
\PP^\super_{T=1}\times \PP^\super_{X,Y(1)\given T=1} + 
\PP^\super_{T=0}\times \PP^\super_{X,Y(0)\given T=0}.
\$
Here by the construction of $\PP^\super_{X,T} = \PP^\obs_{X,T}$, we have $\PP^\obs_{T=w} = \PP^\super_{T=w}$ and $\PP^\super_{X\given T=w} = \PP^\obs_{X\given T=w}$ for $w\in\{0,1\}$. 
Also, $\PP^\super_{Y(w)\given X,T=w} = \PP^\obs_{Y(w)\given X,T=w}$ 
holds for $w\in \{0,1\}$ by~\eqref{eq:super_y0}
and~\eqref{eq:super_y1}. 
The equivalences altogether leads to $\PP^\super_{X,Y,T} = \PP^\obs_{X,Y,T}$. 
For (ii), by the Bayes rule, we have 
\$
r(x,y) = \frac{\ud \PP^\super_{Y(1)\given X,T=0} }{\ud \PP^\super_{Y(1)\given X,T=1} }(y\given x) 
&= \frac{\PP^\super(T=0\given X=x,Y(1)=y)}{\PP^\super(T=1\given X=x,Y(1)=y)} \cdot \frac{\PP^\super(T=1\given X=x)}{\PP^\super(T=0\given X=x)} \in [1/\Gamma,\Gamma],
\$
so $\PP^\super$ satisfies the marginal $\Gamma$-selection condition~\eqref{eq:gamma_sel_mgn}
hence (ii) is verified. Property (iii) has also been verified as above.  
So far, we've constructed $\PP^\super$ that satisfies all stated conditions 
and we conclude the proof of Lemma~\ref{lem:sharp_shift}.
\end{proof}

\subsection{Proof of Proposition~\ref{prop:tight_causal}} \label{app:tight_causal}
\begin{proof}[Proof of Proposition~\ref{prop:tight_causal}]
For simplicity, we denote~\eqref{eq:causal_worst} as $F^*(t)$, and aim to show that 
\begin{enumerate}[(i)]
    \item $F^*(t)$ is a distribution function;
    \item $F^*(t)$ is a lower bound for $F(t\,;V,\TPP)$ for all $t\in \RR$ and all $\TPP\in\cP(\PP,f,\ell_0,u_0)$;
    \item $F^*(t)$ can be achieved by $\PP^*\in \cP(\PP,f,\ell_0,u_0))$ where $\ud\PP^*/\ud \PP(x,y)=w^*(x,y)$ as defined in~\eqref{eq:causal_sharp_w*}. 
\end{enumerate}
To verify (i), we note that $F^*(t)$ is right continuous 
by the continuity of probability measures, as well as 
$\lim_{t\to-\infty} F^*(t)=0$
and $\lim_{t\to +\infty} F^*(t)=1$. To show (ii), we are to show that 
for any $\TPP\in \cP(\PP,f,\ell_0,u_0)$,
it holds $\PP$-almost surely that 
\$
\EE\big[  \ind_{\{V(X,Y)\leq t\}} w^*(X,Y) \biggiven X  \big] \leq 
\tilde\PP\big(  V(X,Y)\leq t \biggiven X \big) = 
\EE\bigg[  \ind_{\{V(X,Y)\leq t\}} \frac{\ud\TPP}{\ud \PP}(X,Y) \bigggiven X  \bigg].
\$
Fixing any $\TPP\in \cP(\PP,f,\ell_0,u_0)$, we denote the conditional likelihood as
$w_0(y\given x) = \ud\TPP_{Y\given X}/\ud\PP_{Y\given X}(y\given x)$,
so that $\ell_0(x)\leq w_0(y\given x) \leq u_0(x)$ for $\PP$-almost all $x\in \cX$. 
Then the marginal likelihood ratio is $w(x,y):= \ud\TPP/\ud\PP(x,y) = f(x)\cdot w_0(y\given x)$. 
Hence for any $t\in\RR$, 
\$
&\EE\big[  \ind_{\{V(X,Y)\leq t\}} w^*(X,Y) \biggiven X  \big] - \tilde\PP\big(  V(X,Y)\leq t \biggiven X \big) \\
&= \EE\big[  \ind_{\{V(X,Y)\leq t\}} w^*(X,Y) \biggiven X  \big] - 
\EE\big[  \ind_{\{V(X,Y)\leq t\}} w (X,Y) \biggiven X  \big] \\
&=\EE\big[  \ind_{\{V(X,Y)\leq t\}} \big( w^*(X,Y) - w(X,Y)\big) \biggiven X  \big] 
        \cdot \ind_{\{t < q(\tau(X)\,;X,\PP)\}} \\
&\qquad +
    \EE\big[  \ind_{\{V(X,Y)\leq t\}} \big( w^*(X,Y) - w(X,Y)\big) \biggiven X  \big] 
        \cdot \ind_{\{t \geq q(\tau(X)\,;X,\PP)\}}.
\$
We treat the two terms in the last summation separately. By the definition of $w^*(x,y)$, 
\$
&\EE\big[  \ind_{\{V(X,Y)\leq t\}} \big( w^*(X,Y) - w(X,Y)\big) \biggiven X  \big] 
        \cdot \ind_{\{t< q(\tau(X)\,;X,\PP)\}} \\
&= f(X) \cdot \EE\big[ \ind_{\{V(X,Y)\leq t\}}  \big( \ell_0(X) - w_0(Y\given X)\big) \biggiven X \big]
\cdot \ind_{\{t < q(\tau(X)\,;X,\PP)\}} \leq 0.
\$
Meanwhile, since $1=\EE[w_0(Y\given X)\given X] = \EE[w^*(X,Y)/f(X)\given X]=1$ holds $\PP$-almost surely, we have 
\$
&\EE\big[  \ind_{\{V(X,Y)\leq t\}} \big( w^*(X,Y) - w(X,Y)\big) \biggiven X  \big] 
        \cdot \ind_{\{t\geq q(\tau(X)\,;X,\PP)\}} \\
&= f(X) \cdot \EE\big[ \ind_{\{V(X,Y)> t\}}  \big( w_0(Y\given X) - w^*(X,Y)/f(X)  \big) \biggiven X \big]
\cdot \ind_{\{t \geq q(\tau(X)\,;X,\PP)\}} \\
&= f(X) \cdot \EE\big[ \ind_{\{V(X,Y)> t\}}  \big( w_0(Y\given X) - u_0(X)  \big) \biggiven X \big]
\cdot \ind_{\{t \geq q(\tau(X)\,;X,\PP)\}}  \leq 0.
\$
Summing them up and by the tower property of conditional expectations, 
it holds for any $t\in \RR$ that 
\$
F^*(t) = \EE\big[  \ind_{\{V(X,Y)\leq t\}} w^*(X,Y)  \big] \leq 
\tilde\PP\big(  V(X,Y)\leq t   \big),
\$
which verifies property (ii). Finally, we define $\PP^*$ by $\ud\PP^*/\ud\PP(x,y)=w^*(x,y)$. Then since $\EE[w^*(X,Y)/f(X)\given X=x]=1$, the marginal likelihood ratio satisfies 
$\ud\PP^*_X/\ud\PP_X = f(x)$, hence 
$\ud\PP^*_{Y\given X}/\ud\PP_{Y\given X}(x,y) = w^*(x,y)/f(x)$. 
To verify $\PP^*\in \cP(\PP,f,\ell_0,u_0)$, it remains to show 
$\ell_0(x)\leq \gamma_0(x) \leq u_0(x)$ when it is nonzero, i.e., 
$\PP(V(x,Y)=q(\tau(x)\,;x,\PP)\given X=x)>0$. In this case,
\$
\gamma_0(x) = \frac{1-\ell_0(x)\cdot \PP( V(x,Y)< q (\tau(x)\,;x,\PP )  \given X=x)
- u_0(x)\cdot \PP( V(x,Y)> q (\tau(x)\,;x,\PP )  \given X=x) }{\PP( V(x,Y) = q (\tau(x)\,;x,\PP )  \given X=x)}.
\$
Note that $\PP( V(x,Y)< q (\tau(x)\,;x,\PP )\given X=x) \leq \tau(x) = (u_0(x)-1)/(u_0(x)-\ell_0(x))$, hence 
\$
&1-\ell_0(x)\cdot \PP\big( V(x,Y)< q (\tau(x)\,;x,\PP )  \biggiven X=x\big) \\
&\leq u_0(x) - u_0(x) \cdot \PP\big( V(x,Y)< q (\tau(x)\,;x,\PP )  \biggiven X=x \big) \\
&= u_0(x) \cdot \PP\big( V(x,Y)\geq  q (\tau(x)\,;x,\PP )  \biggiven X=x \big)\\
&= u_0(x) \cdot \PP\big( V(x,Y)> q (\tau(x)\,;x,\PP )  \biggiven X=x \big) 
+ u_0(x) \cdot \PP\big( V(x,Y)= q (\tau(x)\,;x,\PP )  \biggiven X=x \big),
\$
which leads to $\gamma_0(x)\leq u_0(x)$. On the other hand, 
by the definition of quantiles, we have 
$\PP( V(x,Y)\leq q (\tau(x)\,;x,\PP )\given X=x) \geq \tau(x) = (u_0(x)-1)/(u_0(x)-\ell_0(x))$. Hence
\$
&\ell_0(x) \cdot \PP\big( V(x,Y) < q (\tau(x)\,;x,\PP )  \biggiven X=x\big) + 
\ell_0(x) \cdot \PP\big( V(x,Y) = q (\tau(x)\,;x,\PP )  \biggiven X=x\big) \\
&= \ell_0(x)\cdot \PP\big( V(x,Y) \leq q (\tau(x)\,;x,\PP )  \biggiven X=x\big) \\
&\leq 1- u_0(x) + u_0(x)\cdot \PP\big( V(x,Y) \leq q (\tau(x)\,;x,\PP )  \biggiven X=x\big) \\
&= 1- u_0(x)\cdot \PP\big( V(x,Y) > q (\tau(x)\,;x,\PP )  \biggiven X=x\big),
\$
which leads to $\gamma_0(x)\geq \ell_0(x)$. 
Therefore, we conclude the proof of 
Proposition~\ref{prop:tight_causal}.
\end{proof}

\subsection{Proofs of Proposition~\ref{prop:fwer}}
\label{appendix:proof_fwer}
\begin{proof}[Proof of Proposition~\ref{prop:fwer}]
Recall that $\Gamma^*$ is the smallest sensitivity parameter 
such that $\PP^\super \in \cP(\Gamma^*)$. 
By the definition of $\cR$ in~\eqref{eq:rej_set} and the nested property 
of the prediction sets, we know 
$
\cR = [1,\hat\Gamma)
$
if $C\cap \hat{C}(X_{n+1},1) = \varnothing$, 
where 
\$
\hat\Gamma = \sup\{ \Gamma \geq 1: C\cap \hat{C}(X_{n+1},\gamma)=\varnothing \},
\$
and $\cR = \varnothing$ otherwise. 
Also, by the nested nature of $H_0(\Gamma)$, we know 
$\cH_0 = [\Gamma^*,\infty)$ when $Y_{n+1}(1)-Y_{n+1}(0) \in C$
and $\cH_0=\varnothing$ otherwise. 
Hence 
\$
\textrm{mErr} &= \PP ( \cR\cap \cH_0 \neq \varnothing  ) \\
&= \PP \Big( \big\{Y_{n+1}(1)-Y_{n+1}(0) \in C\big\}\cap
\big\{ \exists~ \Gamma \geq \Gamma^*,~ 
C\cap \hat{C}(X_{n+1},\Gamma)=\varnothing \big\} \Big) \\
&\leq \PP \Big( \big\{Y_{n+1}(1)-Y_{n+1}(0) \in C\big\}\cap
\big\{  
C\cap \hat{C}(X_{n+1},\Gamma^*)=\varnothing \big\} \Big) \\
&\leq \PP \big( Y_{n+1}(1)-Y_{n+1}(0)  \notin \hat{C}(X_{n+1},\Gamma^*) \big).
\$
Following exactly the same arguments, we have 
\$
\textrm{dErr} &= \PP ( \cR\cap \cH_0 \neq \varnothing \given \cD_{\calib} ) \\
&= \PP \Big( \big\{Y_{n+1}(1)-Y_{n+1}(0) \in C\big\}\cap
\big\{ \exists~ \Gamma \geq \Gamma^*,~ 
C\cap \hat{C}(X_{n+1},\Gamma)=\varnothing \big\} \Biggiven \cD_{\calib} \Big) \\
&\leq \PP \Big( \big\{Y_{n+1}(1)-Y_{n+1}(0) \in C\big\}\cap
\big\{  
C\cap \hat{C}(X_{n+1}, \Gamma^*)=\varnothing \big\} \Biggiven \cD_{\calib}\Big) \\
&\leq \PP \big( Y_{n+1}(1)-Y_{n+1}(0)  \notin \hat{C}(X_{n+1},\Gamma^*)  \biggiven \cD_{\calib}\big),
\$
completing the proof of Proposition~\ref{prop:fwer}.
\end{proof}

\section{Additional simulation results} \label{app:simu}

\subsection{Additional results for Section~\ref{subsec:simu_mgn}} 
\label{app:simu_predict}

In this part, we provide additional simulation results 
on the counterfactual prediction task in Section~\ref{subsec:simu_mgn}.

\begin{figure}[H]
\centering 
\includegraphics[width=5in]{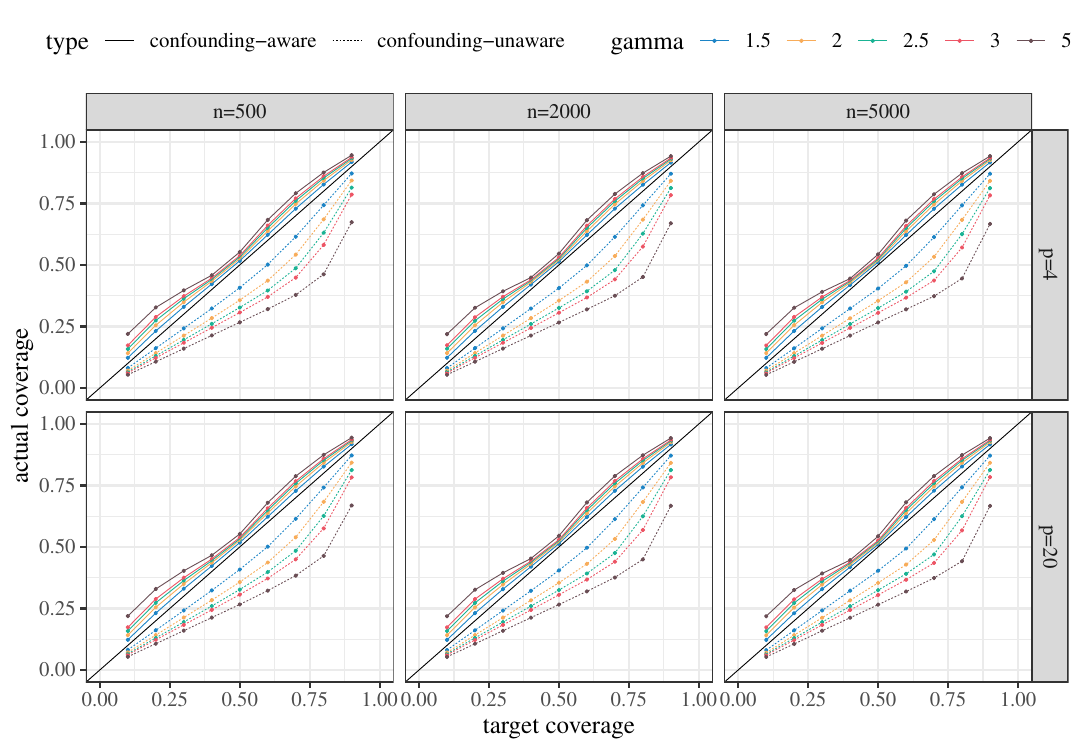}
\caption{Empirical (average) coverage of Algorithm~\ref{alg:mgn} when $\ell(\cdot)$
and $u(\cdot)$ are known. The details are otherwise the
same as in Figure~\ref{fig:mgn_cov_est}.}
\label{fig:mgn_cov_known}
\end{figure}

\subsection{Additional simulation results for Section~\ref{subsec:simu_whp}}
\label{app:simu_predict_whp}

In this part, we provide additional simulation results 
on the counterfactual prediction task in Section~\ref{subsec:simu_whp}.

\begin{figure}[H]
\centering 
\includegraphics[width=5in]{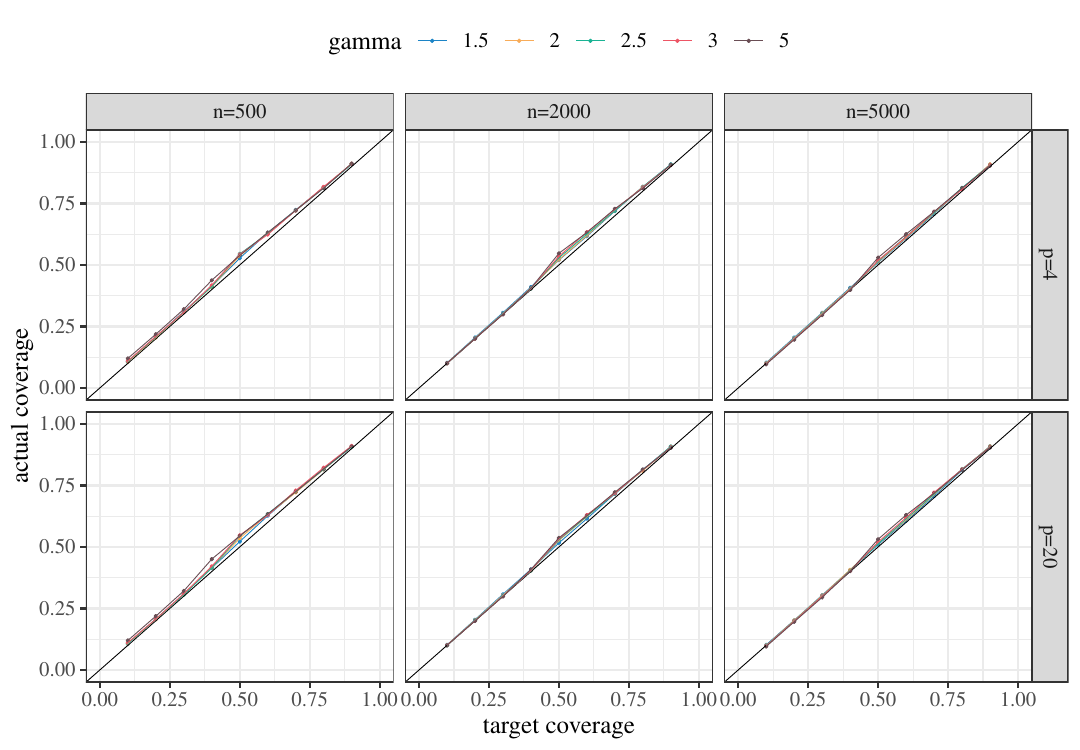}
\caption{$0.05$-th quantile of empirical coverage on test samples 
in Algorithm~\ref{alg:whp} when $\ell(\cdot)$, $u(\cdot)$ are known. 
The details are otherwise the same as in 
Figure~\ref{fig:whp_qt_est}.
}
\label{fig:whp_qt_known}
\end{figure}

\begin{figure}[H]
\centering 
\includegraphics[width=4.5in]{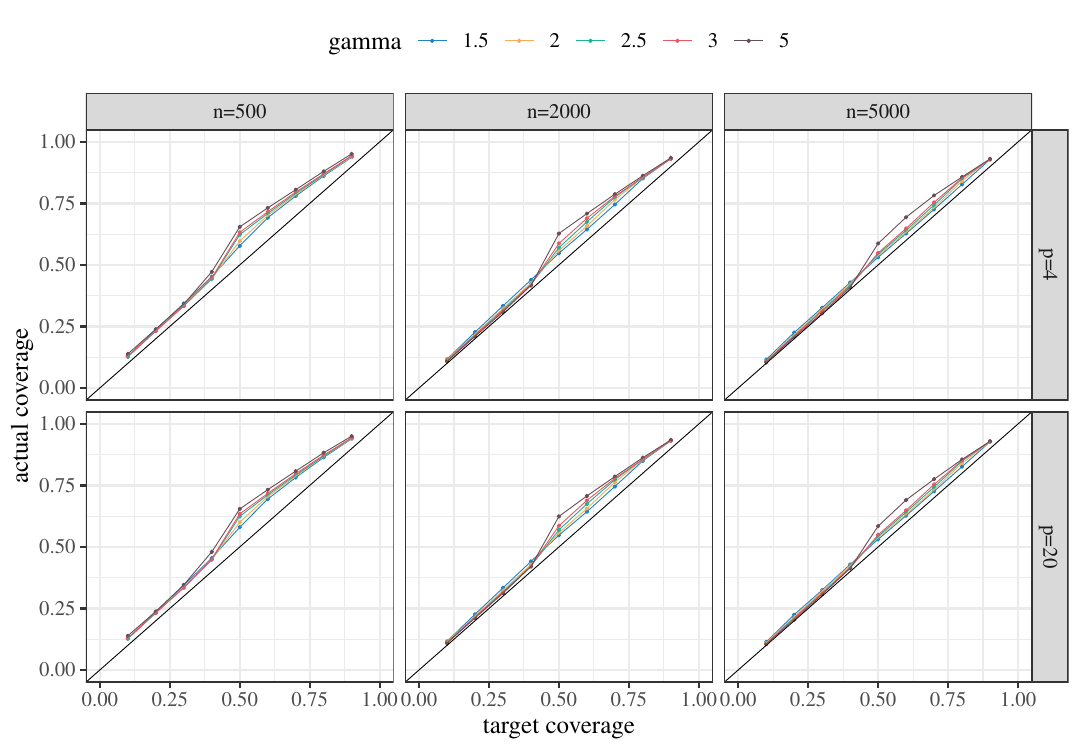}
\caption{Empirical (average) coverage of Algorithm~\ref{alg:whp} when $\ell(\cdot)$
and $u(\cdot)$ are known.
The details are otherwise the same
as in Figure~\ref{fig:whp_qt_est}.
}
\label{fig:whp_cov_known}
\end{figure}

\begin{figure}[H]
\centering 
\includegraphics[width=4.5in]{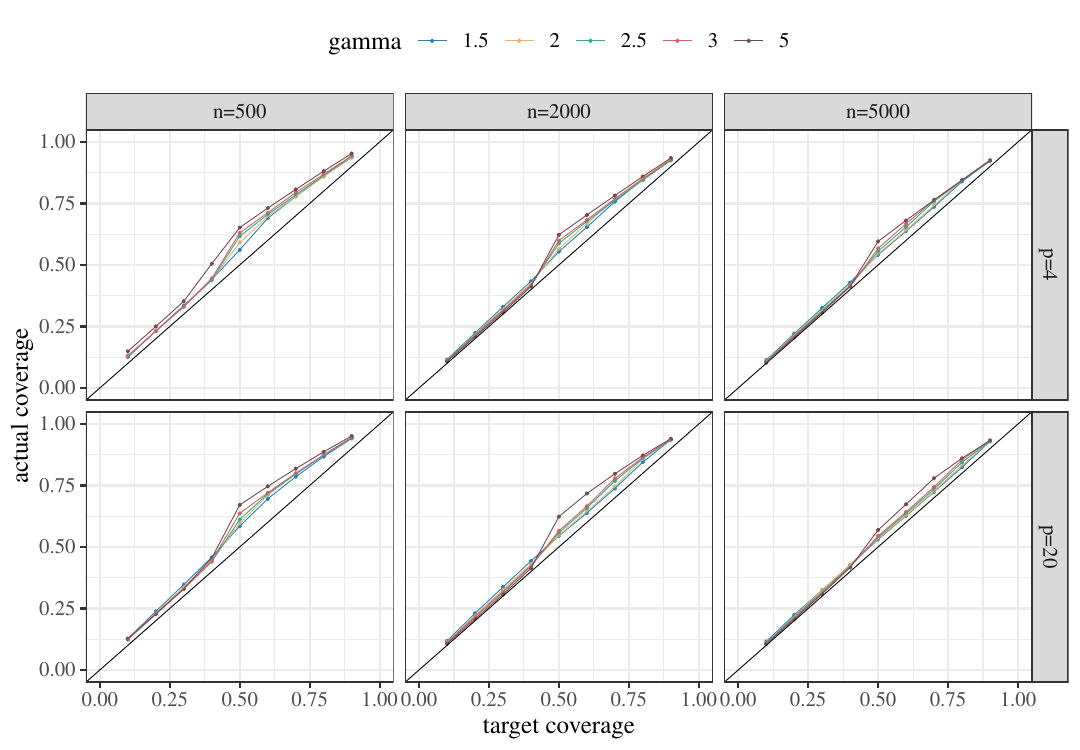}
\caption{Empirical (average) coverage of Algorithm~\ref{alg:whp} when $\hat \ell(\cdot)$
and $\hat u(\cdot)$ are estimated.
The details are otherwise the same
as in Figure~\ref{fig:whp_qt_est}.
}
\label{fig:whp_cov_est}
\end{figure}

\subsection{Additional results for Section~\ref{subsec:simu_sens}}
\label{app:simu_sens}
In this part, we collect the results 
in Section~\ref{subsec:simu_sens} 
of procedures with known bound functions.

\begin{figure}[H]
\centering
\begin{subfigure}[t]{0.43\linewidth}
    \centering
    \includegraphics[height=1.6in]{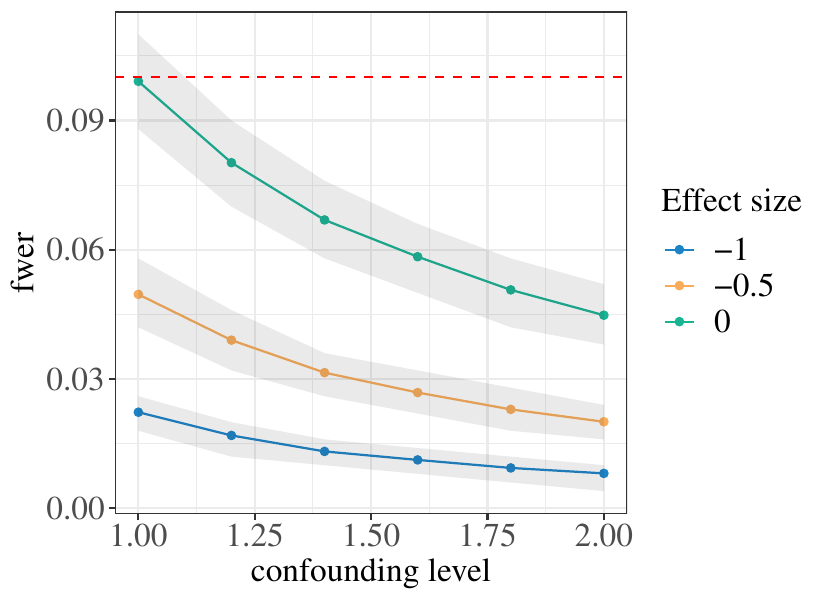}
\end{subfigure}
\begin{subfigure}[t]{0.43\linewidth}
    \centering
    \includegraphics[height=1.6in]{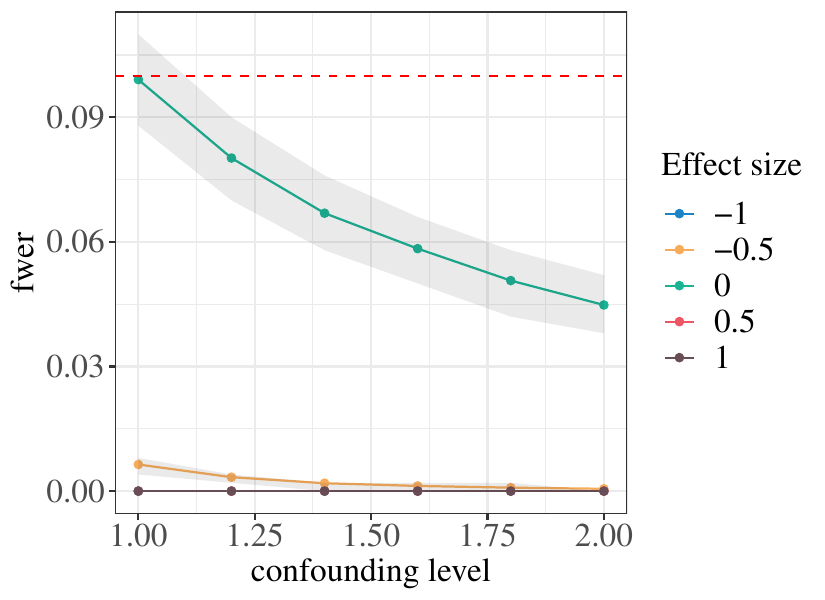}
\end{subfigure}
\caption{Empirical FWER for fixed ITE (left) 
and random ITE (right) with Algorithm~\ref{alg:mgn}.
The details are otherwise the same as in
Figure~\ref{fig:simu_ss_fwer}.
}\label{fig:simu_ss_mgn}
\end{figure}


\begin{figure}[htbp]
\centering
\begin{subfigure}[t]{0.43\linewidth}
    \centering
    \includegraphics[height=1.6in]{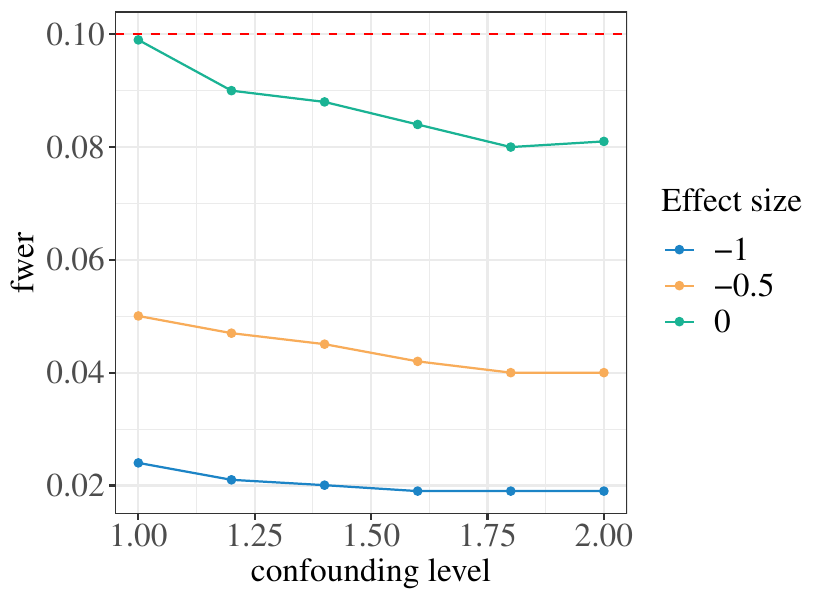}
\end{subfigure}
\begin{subfigure}[t]{0.43\linewidth}
    \centering
    \includegraphics[height=1.6in]{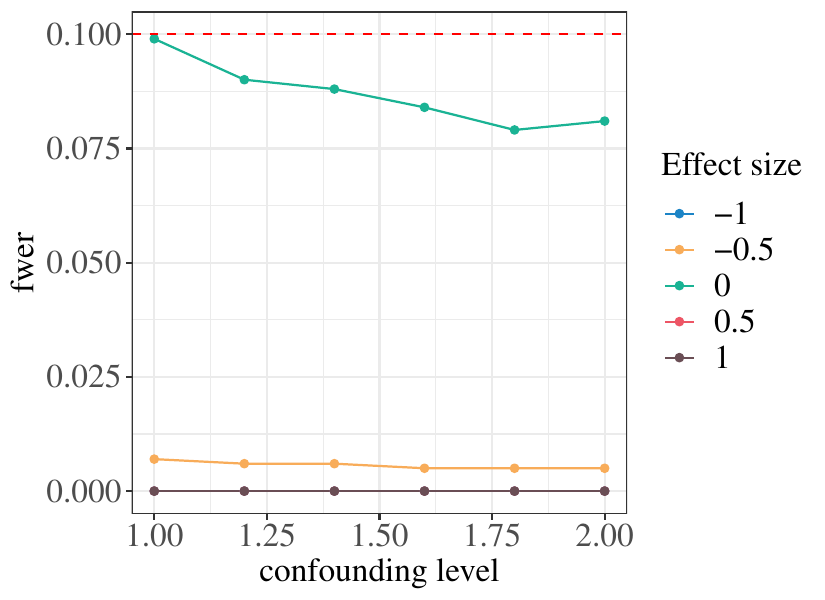}
\end{subfigure}
\caption{Empirical FWER for fixed ITE (left) and random ITE (right) with Algorithm~\ref{alg:whp}.
The details are otherwise the same
as in Figure~\ref{fig:simu_ss_fwer}.
}\label{fig:simu_ss_whp}
\end{figure}


\end{document}